\tikzset{black node/.style={draw, circle, fill = black, minimum size = 5pt, inner sep = 0pt}}
\tikzset{white node/.style={draw, circle, fill = white, minimum size = 5pt, inner sep = 0pt}}
\tikzset{normal/.style = {draw=none, fill = none}}
\theoremstyle{plain}
  \newtheorem{theorem}{Theorem}
  \newtheorem{corollary}{Corollary}
  \newtheorem{lemma}{Lemma}
   \newtheorem{claimn}{Claim}
\theoremstyle{definition}
  \newtheorem{definition}{Definition}
\theoremstyle{remark}
  \newtheorem{remark}{Remark}
   \def\cqed{\ifmmode$\lrcorner$\else{\unskip\nobreak\hfil
         \penalty50\hskip1em\null\nobreak\hfil$\lrcorner$
         \parfillskip=0pt\finalhyphendemerits=0\endgraf}\fi}
  \newcommand{\N}{\mathbb{N}}
  \DeclareMathOperator{\dist}{dist}
  \DeclareMathOperator{\nec}{nec}
  \DeclareMathOperator{\mim}{mim}
  \DeclareMathOperator{\boolw}{boolw}
\DeclareMathOperator{\operatorClassFPT}{FPT\xspace}
\newcommand{\classFPT}{\ensuremath{\operatorClassFPT}\xspace}
\DeclareMathOperator{\operatorClassW}{W}
\newcommand{\classW}[1]{\ensuremath{\operatorClassW[#1]}}
\DeclareMathOperator{\Oh}{\mathcal{O}}
\begin{document}

\title{On the tractability of optimization problems on $H$-graphs\thanks{An extended abstract of this work appeared in the proceedings of ESA 2018 \cite{fomin_et_al:LIPIcs:2018:9493}. The first two authors have been supported by the Research Council of Norway via the projects ``CLASSIS'' and ``MULTIVAL''. The third author has been supported by the Polish National Science Centre grant PRELUDIUM DEC-2013/11/N/ST6/02706 and by the European Research Council (ERC) under the European
    Union's Horizon 2020 research and innovation programme, ERC consolidator grant DISTRUCT, agreement No 648527.}}

\author{Fedor V.\ Fomin\thanks{
    Department of Informatics, University of Bergen, Norway.} \addtocounter{footnote}{-1}
  \and
  Petr A.\ Golovach\footnotemark{}
  \and
  Jean-Florent Raymond\thanks{CNRS, LIMOS, Université Clermont Auvergne, France.}}
\date{}

\maketitle

 \begin{abstract}
For a graph $H$, a graph $G$ is an $H$-graph if it is an intersection graph of connected subgraphs of some subdivision of $H$.  $H$-graphs naturally generalize several important graph classes like interval graphs or circular-arc graph. This class was introduced in the early 1990s by B\'\i r\'o,  Hujter, and Tuza. Recently, Chaplick et al. initiated the algorithmic study of $H$-graphs by showing that a number of fundamental optimization problems like \textsc{Maximum Clique}, \textsc{Maximum Independent Set}, or \textsc{Minimum Dominating Set} are solvable in polynomial time on $H$-graphs. We extend and complement these algorithmic findings in several directions. 
  
First we show that for every fixed $H$, the class of $H$-graphs is of logarithmically-bounded  boolean-width (via mim-width). Pipelined with the plethora of known algorithms on graphs of bounded boolean-width, this describes a large class of problems solvable in polynomial time on $H$-graphs. We also observe that $H$-graphs are graphs with polynomially many minimal separators. Combined with the work of Fomin, Todinca and Villanger on algorithmic properties of such classes of graphs, this identify another wide class of problems solvable in polynomial time on $H$-graphs.

The most fundamental optimization problems among the problems solvable in polynomial time on $H$-graphs are \textsc{Maximum Clique}, \textsc{Maximum Independent Set}, and \textsc{Minimum Dominating Set}. We provide a more refined complexity analysis of these problems from the perspective of parameterized complexity. We show that \textsc{Maximum Independent Set} and \textsc{Minimum Dominating Set} are W[1]-hard being parameterized by the size of $H$ plus the size of the solution. On the other hand, we prove that when $H$ is a tree, then \textsc{Minimum Dominating Set} is fixed-parameter tractable (FPT) parameterized by the size of~$H$.

For \textsc{Maximum Clique} we show that it admits a polynomial kernel parameterized by $H$ and the solution size.   
\end{abstract}

\section{Introduction}\label{sec:intro}

The notion of $H$-graph was introduced in the work of B\'\i r\'o,  Hujter,  and Tuza
\cite{BIRO1992267} on precoloring extensions of graphs.  $H$-graphs nicely  generalize several popular and widely studied classes of graphs. For example, the classical definition of an interval graph is as a  graph which is an intersection graph\footnote{The intersection graph of a family $\mathcal{S}$ of sets has vertex set $\mathcal{S}$ and edge set $\{SS',\ S\cap S' \neq \emptyset\}$.} of intervals of a  line. Equivalently, a graph is interval if it is an intersection graph of some subpaths of a path. Or, equivalently, if it is an intersection graph of some subgraphs of some subdivision (which is a graph obtained by placing vertices of degree $2$ on the edges) of $P_2$, the graph with two adjacent vertices. Similarly, every chordal graph is an intersection graph of subtrees of some tree. More generally, for a fixed graph $H$,  a graph $G$ is an $H$-graph if it is an intersection graph of some connected subgraphs of some subdivision of $H$. Thus for example, an interval graph is a $P_2$-graph,  a circular-arc graph is a $C_2$-graph, where $C_2$ is a double-edge with two endpoints,  a split graph is a $K_{1,d}$-graph for some $d\geq 0$, where $K_{1,d}$ is a star with $d$ leaves, etc.

The main motivation behind the study of 
$H$-graphs is the following. It is well-known that on interval, chordal, circular-arc, and other graphs with ``simple'' intersection models many NP-hard optimization problems are solvable in polynomial time, see e.g.\ the book of Golumbic \cite{Golumbic04} for an overview.  It is a natural question whether at least some of these algorithmic results can be extended to more general classes of intersection graphs.  
Chaplick et al.
\cite{ChaplickTVZ16} and 
Chaplick and Zeman 
\cite{ChaplickZ17} initiated the systematic study of algorithmic properties of $H$-graphs. They showed that a number of fundamental optimization problems like 
\textsc{Maximum Independent Set} and \textsc{Minimum Dominating Set} are solvable in polynomial time on $H$-graphs for any fixed $H$. 
 Most of the algorithms developed on $H$-graphs in \cite{ChaplickTVZ16,ChaplickZ17}  run in time $n^{f(H)}$, where $n$ is the number of vertices in the input graph and $f$ is some function. In other words, being parameterized by $H$ most of the problems are known to be in the class XP.

Our work is driven by the following  question.
\begin{itemize}
\item Are there  generic explanations why many problems admit polynomial time algorithms on $H$-graphs?
\end{itemize}

We address the first question by proving the following combinatorial results. 
 We show first that  every $H$-graph has mim-width (a graph parameter to be defined in the corresponding section) at most $2|E(H)|+1$. Moreover,  a decomposition of mim-width $2|E(H)|+1$ can be found in polynomial time. Using known inequalities, this gives upper-bounds on the boolean-width of $H$-graphs. This combinatorial result extends the results of Belmonte and Vatshelle \cite{BELMONTE201354, belmonte2010graph} on the boolean-width of interval (resp.\ circular-arc) graphs to $H$-graphs.  Together with the algorithms 
 for a vast class of problems  called \emph{\textsf{LC-VSP}} problems \cite{BUIXUAN20115187,BELMONTE201354} and their distance versions \cite{jaffke2018generalized},
 which are solvable on $n$-vertex graphs of boolean-width $b$ in time $2^{b}\cdot n^{\Oh(1)}$, this implies immediately that all these problems are solvable in polynomial time on $H$-graphs, when $H$ is fixed. The illustrative problems solvable in polynomial time on $H$-graphs by making use of this approach are \textsc{Maximum Weight Independent Set}, \textsc{Minimum Weight Dominating Set}, \textsc{Total Dominating Set}, \textsc{Induced Matching}, and many others. We also obtain polynomial-time algorithms for problems related to induced paths such as \textsc{Longest Induced Path} and \textsc{Disjoint Induced Paths} using the results of Jaffke, Kwon, and Telle \cite{jaffke2017polynomial}. Incidentally, these results demonstrate the applicability of the parameter mim-width.
 
Then we prove that   every $n$-vertex $H$-graph  has at most $(2n + 1)^{|E(H)|} + |E(H)|\cdot
(2n)^2$ minimal separators.\footnote{It was reported to us by Steven Chaplick and Peter Zeman that they also obtained this result independently and that it will be included in the journal version of their paper.} 
Pipelining the bound on the number of minimal separators in $H$-graphs with meta-algorithmic results 
of Fomin, Todinca and Villanger \cite{Fomin:2015:LIS:3065779.3065785}, we obtained another wide class of problems solvable in polynomial time on $H$-graphs. Examples of such problems are
\textsc{Treewidth}, \textsc{Minimum Feedback Vertex Set}, \textsc{Maximum Induced Subgraph excluding a planar minor}, and various packing problems.

All these generic algorithmic results provide XP algorithms when parameterized by the size of $H$. This brings us immediately  to the second question defining the direction of our research.
\begin{itemize}
 \item What is the parameterized complexity of the fundamental optimization problems being parameterized by the size of $H$?
\end{itemize}

The first steps in this direction were done by  Chaplick et al. in \cite{ChaplickTVZ16} who showed that \textsc{Minimum Dominating Set} is fixed-parameter tractable (FPT) on $K_{1,d}$-graphs parameterized by $d$. In this paper we show that  \textsc{Minimum Dominating Set} is W[1]-hard parameterized by the size of $H$ plus the solution size. Thus the existence of an FPT algorithm for a general graph $H$ is very unlikely. (We refer to books  \cite{DowneyFbook13,CFKLMPPS2014} for definitions from parameterized complexity and algorithms.)
We also prove a similar lower bound for  \textsc{Maximum Independent Set}  parameterized by the size of $H$ plus the solution size. 
Combined with our combinatorial results, these lower-bounds show that \textsc{Maximum Independent Set} and \textsc{Minimum Dominating Set} are also \classW{1}-hard when parameterized by mim-width of the input and the solution size. The technique we develop to establish lower bounds on $H$-graphs  found applications beyond the topic of this paper 
\cite{jaffke17noteKT,jaffke2017polynomial}. 

On the positive side, we show that when $H$ is a tree, then  \textsc{Minimum Dominating Set} is FPT parameterized by the size of~$H$. This significantly extends the result from  
 \cite{ChaplickTVZ16} for stars to arbitrary trees. Furthermore, our algorithm does not require the intersection representation of the input graph to be given. We actually prove a slightly more general result, namely that  \textsc{Minimum Dominating Set} is FPT on 
   chordal graphs $G$ parameterized by the leafage of the graph, i.e.\ the minimum number of leaves in a clique tree of $G$.
   
   Finally we show that \textsc{Clique} admits a polynomial kernel when parameterized by the size of $H$ plus the solution size. This strengthens the result of Chaplick and Zeman who showed that \textsc{Clique} is FPT for such a parameterization.
   Our algorithmic results about $H$-graphs are summarized in \autoref{tab:sum}.

      \begin{table}[h]
  \centering
\begin{tabular}[h]{|llllll|}
  \hline
  Problem & Parameters & Restrictions & Repr. &Complexity & Ref.\\
  \hline
  any \textsf{LC-VSP}-problem & $\|H\|$ & none & Y &XP & \autoref{thm:lc-vsphg}\\
  \hline
  Induced path problems & $\|H\|$ & none & Y & XP &\autoref{indpa}\\
  \hline
  $\textsc{OIS}(\mathcal{P},t)$, $\mathcal{P}$ is CMSOL & $\|H\|$ & none & N & XP & \autoref{c:oispt}\\
  \hline
  \multirow{5}{*}{\textsc{Dominating Set}}          & none & $H$ is a tree & N & NP-hard &\cite{doi:10.1137/0211015}\\
  \cline{2-5}
                       &\multirow{3}{*}{$\|H\|$} & $H$ is a star & N & FPT & \cite{ChaplickTVZ16}\\
  \cline{3-5}
          &  & $H$ is a tree & N & FPT & \autoref{thm:ds-leafage}\\
  \cline{3-5}
          &  & none & Y & XP & \cite{ChaplickTVZ16}\\
  \cline{2-5}
          & $\|H\| +k$ & none & Y & W[1]-hard & \autoref{thm:ds-is}\\
  \hline
  \multirow{4}{*}{\textsc{Independent Set}}& none & $H$ is a tree & N & polynomial & \cite{Gavril72}\\
  \cline{2-5}
          & $\|H\|$ & none & Y & XP &  \cite{ChaplickTVZ16}\\
  \cline{2-5}
          & $\|H\|+k$ & none & Y & W[1]-hard & \autoref{thm:ds-is}\\
  \hline
  \multirow{4}{*}{\textsc{Clique}} & none & $H$ is a tree & N & polynomial & \cite{Gavril72}\\
  \cline{2-5}
          & $\|H\|$  & none & Y & para-NP-hard & \cite{ChaplickZ17}\\
  \cline{2-5}
          & \multirow{2}{*}{$\|H\| +k$} & \multirow{2}{*}{none} & N & FPT & \cite{ChaplickZ17}\\
  \cline{4-5}
          & & & Y & poly.\ kernel & \autoref{thm:kernel}\\
  \hline
\end{tabular}
  
  \caption[Summary of algorithmic results on $H$-graphs]{Summary of algorithmic results on $H$-graphs, including the classic results on chordal graphs ($H$ is a tree). The fourth column indicates whether a representation of the input as an $H$-graph is given. For each of the mentioned problems, $k$ denotes the solution size. See Sections~\ref{sec:bool} and \ref{sec:minsep}, for details about the first three problem.}
  \label{tab:sum}
\end{table}

\paragraph{Organization of the paper.} 
  \autoref{sec:def} contains the necessary definitions. In \autoref{sec:bool}, we upper-bound the boolean-width of $H$-graphs and provide algorithmic applications. \autoref{sec:minsep} is devoted to the study of minimal separators in $H$-graphs, again with algorithmic consequences. Finally, \autoref{sec:pc} contains our results on the parameterized complexity of some classic optimization problems on $H$-graphs.

\section{Definitions}
\label{sec:def}

\paragraph{Basics.}
All graphs in this paper are finite, undirected, loopless, and may
have multiple edges. If $G$ is a graph, we respectively denote by $|G|$ and $\|G\|$ its numbers of vertices and edges (counting
multiplicities).
If $X$ and $Y$ are disjoint subsets of $V(G)$, $\overline{X}$ is the complement of $X$ in $V(G)$ (i.e.\ $\overline{X} = V(G) \setminus X$), $G[X]$ is the subgraph of $G$ induced by the vertices of $X$, and $G[X,Y]$ is the bipartite subgraph of $G$ with vertex set $X \cup Y$ and as edge set those edges of $G$ that have one endpoint in $X$ and the other in~$Y$.
Unless otherwise specified, logarithms are binary.

\paragraph{$H$-graphs.}
Let $H$ be a (multi) graph. We say that a graph $G$ is an $H$-graph if there
is a subdivision $H'$ of $H$ and a collection $\mathcal{M} =
\{M_v\}_{v \in V(G)}$ (called an \emph{$H$-representation} or, simply, \emph{representation})
of subsets of $V(H')$, each inducing a connected subgraph, such that $G$
is isomorphic to the intersection graph of $\mathcal{M}$. 
To avoid confusion, we refer to the vertices of $H'$ as \emph{nodes}.
We also say that the nodes of $H$ are \emph{branching} nodes of $H'$ and the other nodes are \emph{subdivision} nodes.
If $v$ is a vertex of $G$, then $M_v$ is the \emph{model} of $v$ in
the representation~$\mathcal{M}$.

For every set $A \subseteq V(G)$, we define $M_A = \bigcup_{v \in A} M_v$.
For every node $u$ of $H'$, we denote by $V_u$ the set of vertices of
$G$ whose model contains~$u$, that is,
  \[
    V_u = \{v \in V(G),\ u \in M_v\}.
  \]
  
\paragraph{Parameterized Complexity.} We refer to the  books~\cite{DowneyFbook13,CFKLMPPS2014} for the detailed introduction to the field. Here we only briefly review the basic notions.

Parameterized Complexity is a two dimensional framework
for studying the computational complexity of a problem. One dimension is the input size~$|I|$ of an instance $I$ of the problem and the other is a \emph{parameter}~$k$ associated with the input. 
A parameterized problem is said to be \emph{fixed-parameter tractable} (or FPT) if it can be solved in time $f(k)\cdot |I|^{\Oh(1)}$ for some function~$f$.
The parameterized complexity class FPT consists of all fixed-parameter tractable problems. A parameterized problem is in the class XP if it can be solved in time $|I|^{f(k)}$ for a function $f$. 
Note that  if a parameterized problem is NP-hard for some \emph{fixed} value of the parameter, then it is said that the problem is \emph{para-NP-hard} and it cannot be in XP (and, therefore, in FPT) unless ${\rm P}={\rm NP}$.   Parameterized Complexity also provides special tools to refute the FPT  algorithms under plausible complexity-theoretic assumptions. The main assumption  is the conjecture that ${\rm FPT}\neq {\rm W}[1]$ for the parameterized complexity class W[1]  that play a central role in obtaining lower complexity bounds. The basic way to show that it is unlikely that a parameterized problem admit an FPT algorithm is to show that it is W[1]-hard using a \emph{parameterized reduction} from a known W[1]-hard problem. 

A \emph{kernelization} for a parameterized problem is a polynomial algorithm that maps each instance $(I,k)$ of a parameterized problem with the input~$I$ and parameter~$k$ to an instance $(I',k')$ of the same problem such that
\begin{itemize}
\item[(i)] $(I,k)$ is a yes-instance if and only if $(I',k')$ is a  yes-instance, and
\item[(ii)] $|I'|+k'$ is bounded by~$f(k)$ for a computable function~$f$.
\end{itemize}
The output $(I',k')$ is called a \emph{kernel}. The function~$f$ is said to be the \emph{size} of the kernel. A kernel is \emph{polynomial} if~$f$ is polynomial.
While it can be shown that every decidable parameterized problem is { FPT} if and only if it admits a kernel, it is unlikely that every problem in {FPT} has a polynomial kernel up to certain complexity assumptions. We refer to the aforementioned books and to \cite{fomin2019kernelization} for more details.

\section{\texorpdfstring{$H$}{H}-graphs have logarithmic boolean-width}
  \label{sec:bool}

 Boolean-width is a graph invariant that has been introduced in \cite{BUIXUAN20115187} and which is related to the number of different neighborhoods along a cut. Belmonte and Vatshelle showed in \cite{BELMONTE201354} that $n$-vertex interval graphs and circular-arc graphs have boolean-width $\Oh(\log n)$. In this section, we generalize their result by proving that, for any fixed graph $H$, $n$-vertex $H$-graphs have boolean-width~$\Oh(\log n)$. This is done by first upper-bounding the mim-width of $H$-graphs by $\max \{1, 2\|H\|\}$ (\autoref{mim2h}). Using the results of \cite{BUIXUAN201366, jaffke2017polynomial, jaffke2018generalized}, we obtain polynomial time algorithms for a vast class of optimization problems on $H$-graphs. Before we proceed with the proofs, we need to introduce some notions specific to this section.

\begin{definition}
A \emph{branch decomposition} of a graph $G$ is a pair $(T, \delta)$
where $T$ is a full binary rooted tree (that is, every non-leaf vertex has degree 3) and $\delta$ is a bijection from the leaves of $T$
to the vertices of~$G$. A branch decomposition $(T, \delta)$ is a \emph{caterpillar decomposition} if $T$ can be obtained from a path by adding a vertex of degree one adjacent to every internal vertex.
If $w\in V(T)$, let us denote by $V_w$ the set of vertices of $G$ in bijection with the leaves of the subtree of $T$ rooted at~$w$.
\end{definition}

\begin{definition}[maximum induced matching along a cut]
  A set of vertices of a graph $G$ is an \emph{induced matching} if it induces a disjoint union of edges. If $X \subseteq V(G)$, we denote by  $\mim_G(X)$ the maximum number of edges in an induced matching of $G[X, \overline{X}]$. We drop the subscript when there is no ambiguity.
  If $(T, \delta)$ is a branch decomposition of $G$, we denote by $\mim(T, \delta)$ the maximum of $\mim(V_w)$ taken over all $w \in V(T)$ and call it the \emph{mim-width} of $(T, \delta)$. The \emph{mim-width} of $G$ is the minimum mim-width of a branch decomposition of~$G$.
\end{definition}

\begin{definition}[neighborhood equivalence, \cite{BELMONTE201354, BUIXUAN201366}]\label{def:equiv}
  Let $G$ be a graph and let $A \subseteq V(G)$. We say that two subsets $X,Y \subseteq A$ are \emph{neighborhood equivalent with respect to $A$}, denoted by $X \equiv_A Y$, if $N(X) \cap \overline{A} = N(Y) \cap \overline{A}$.
  
It is not hard to see that $\equiv_A$ is an equivalence relation.
  We write $\nec(A)$ for its number of equivalence classes. If $(T, \delta)$ is a branch decomposition of $G$, we denote by $\nec(T, \delta)$ the maximum of $\nec(V_w)$ and $\nec(\overline{V_w})$ over all $w \in V(T)$.
\end{definition}

The following lemma relates maximum induced matchings to neighborhood equivalence.
\begin{lemma}[{\cite[Lemma~1]{BELMONTE201354}}]
  \label{remy}
  For every $n$-vertex graph $G$ and $A \subseteq V(G)$, we have $\mim(A) \leq k$ if and only if, for
  every $S\subseteq A$ there is a $R\subseteq S$ such that $R \equiv_A S$ and $|R|\leq k$.
\end{lemma}

\begin{definition}[Boolean-width]
If $(T, \delta)$ is a branch decomposition of a graph $G$, the \emph{boolean-width} of $(T, \delta)$, denoted by $\boolw(T,\delta)$, is defined as the maximum of $\log( \nec(V_w))$ over all $w\in V(T)$. The \emph{boolean-width} of $G$, denoted by $\boolw(G)$, is the minimum boolean-width of a branch decomposition of~$G$.
\end{definition}

Our results on the boolean-width of $H$-graphs follow from the next result.

\begin{theorem}\label{mim2h}
  Let $H$ be a graph.
  Given any $H$-graph $G$ on $n\geq 2$ vertices, an $H$-representation of $G$ and the corresponding $H$-subdivision, one can compute in polynomial time a caterpillar decomposition $(T, \delta)$ with $\mim(T, \delta) \leq \max\{1, 2\|H\|\}$.
\end{theorem}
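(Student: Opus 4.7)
The plan is to build a linear ordering of $V(G)$ such that every prefix $X$ satisfies $\mim(X)\le 2\|H\|$, and to take this ordering as the leaf order along the spine of the caterpillar $T$. Since the cuts associated with a caterpillar decomposition are exactly the prefix cuts of its leaf ordering (up to direction), this immediately yields a decomposition of the desired mim-width.

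For the ordering, I would fix an arbitrary enumeration $e_1,\dots,e_m$ of $E(H)$ with $m=\|H\|$ and an orientation of each subdivided path $P_{e_l}\subseteq H'$. For $v\in V(G)$, define $\phi(v)=(i,j)$, where $i$ is the smallest index with $M_v\cap P_{e_i}\ne\emptyset$ and $j$ is the position along $P_{e_i}$ (in the chosen orientation) of the first node of $P_{e_i}$ lying in $M_v$. Sorting $V(G)$ lexicographically by $\phi$ is polynomial in the representation size and determines the caterpillar.

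To bound $\mim(X)$ for a prefix $X$, take an induced matching $\{(x_i,y_i)\}_{i=1}^{k}$ in $G[X,\overline{X}]$ and pick $n_i\in M_{x_i}\cap M_{y_i}$ for every $i$. Pairwise disjointness of the $2k$ models (outside each matched pair) forces the $n_i$ to be distinct. Assign each pair to the edge $e_{l(i)}$ with $n_i\in P_{e_{l(i)}}$, choosing any incident edge if $n_i$ happens to be a branching node. By pigeonhole, it suffices to show that at most two pairs are assigned to every edge of $H$, which gives $k\le 2m$.

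The per-edge bound is the crux. On $P_{e_l}$, each $M_v$ appears as a disjoint union of at most two subpaths, because $M_v$ is connected and can enter or leave $P_{e_l}$ only through its two branching endpoints. The induced matching condition transfers to disjointness of these subpaths outside of matched pairs, and combined with the positional constraints imposed by the $\phi$-ordering (roughly, each $x_i$ has an ``early'' witness on $H'$ while its partner $y_i$'s first witness is ``late'') this reduces to a multi-interval matching problem on the path $P_{e_l}$. An argument generalizing the classical left-endpoint analysis of Belmonte and Vatshelle \cite{BELMONTE201354} for interval graphs---which yields one pair per path when models are single intervals---doubles to two pairs per path when models contribute up to two subpaths. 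The main technical obstacle is the careful handling of pairs whose $x_i$ has $\phi$-index strictly less than $l$ (so $x_i$ enters $X$ because of an intersection in an earlier $P_{e_{l'}}$) while still contributing an intersection on $P_{e_l}$; here one must exploit the connectivity of $M_{x_i}$ through branching nodes to preserve the interval-like structure on $P_{e_l}$ and keep the count under control.
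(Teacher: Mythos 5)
Your high-level strategy is sound and genuinely different from the paper's: you bound $\mim$ of prefix cuts directly by counting matched pairs per path $P_e$, whereas the paper orders vertices by distance of their models to a fixed branching node and bounds the size of \emph{neighborhood representatives} (two ``protectors'' per edge of $H$), converting to a mim bound via \autoref{remy}.\eqref{remy1e}. But as written your proof has a genuine gap exactly where you flag it: the claim that at most two pairs of the induced matching can be witnessed on a single $P_{e_l}$ is asserted, not proved, and it is not a routine ``doubling'' of the Belmonte--Vatshelle left-endpoint argument. To see what is actually required, classify each matched pair $(x_i,y_i)$ witnessed on $P_{e_l}$ by where the witness $n_i$ sits inside the trace $M_{x_i}\cap V(P_{e_l})$: in a subpath anchored at the branching node $a_l$ (type L), in a subpath anchored at $b_l$ (type R), or in a genuinely interior interval (type I, which forces $x_i$ to have $\phi$-index exactly $l$ and first position below the cut threshold $\tau$ on $P_{e_l}$). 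A nesting argument gives at most one pair of each type --- which only yields the bound $3$ per edge. To get $2$ you must additionally show that two of the three types exclude each other; with your ordering one can check that types L and I cannot coexist (an L-witness lies at position between $\tau$ and $d_1$, and the interior $X$-interval of a type-I pair, having left endpoint at most $\tau$ and right endpoint beyond $d_1$, would then cover the L-witness and create a forbidden cross edge), while R and I \emph{can} coexist. None of this, nor the case analysis over where the cut falls relative to the index-$l$ block (and over $\overline{X}$-models that reach $P_{e_l}$ only because they contain $a_l$ or $b_l$, so that their position in your order is governed by a different edge), appears in your write-up; you explicitly defer it as ``the main technical obstacle.''

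Two smaller points. First, ``pairwise disjointness of the $2k$ models (outside each matched pair)'' is not what an induced matching in $G[X,\overline{X}]$ gives you: models on the same side of the cut may intersect freely; only $M_{x_i}\cap M_{y_j}=\emptyset$ for $i\neq j$ holds. That cross-disjointness does still make the $n_i$ distinct, but distinctness is not what carries the proof --- the per-edge bound is. Second, note how the paper avoids your obstacle: ordering by distance to a root $r$ yields the clean fact (its Claim~3) that any $A$-model confined to the interior of $P_e$ has one whole side of $P_e$ free of $\overline{A}$-models, so every $A$-model meeting $P_e$ is a left- or right-protector, and two extremal protectors per edge dominate all cross-neighborhoods. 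If you want to complete your route, you either need to carry out the full type-L/R/I analysis above for your lexicographic order, or switch to the distance ordering, where the analogous structural statement comes for free.
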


\begin{proof}
  We first assume that $H$ is connected, and explain at the end of the proof how we proceed when it is not the case.
  Let $F$ be the subdivision of $H$ in which $G$ can be realized and
  let $\{M_v\}_{v\in V(G)}$ be the intersection representation of $G$. We assume that both $F$ and $\{M_v\}_{v\in V(G)}$ are given as input of the algorithm that we describe now (in addition to $G$).
  Let us arbitrarily fix a branching node $r$ of $F$. Let
  $v_1, \dots, v_n$ be an ordering of $V(G)$ by non-decreasing distance of $M_{v_i}$'s
  to~$r$.

  \begin{claimn}\label{neighrep}
    For every prefix $A$ of $v_1, \dots, v_n$ and every $S \subseteq A$, there is
    a set $R \subseteq S$ of size at most $\max\{1, 2\|H\|\}$ such that 
$R \equiv_A S$.
  \end{claimn}

  \begin{proof}
  Let $A$ be a prefix of $v_1, \dots, v_n$ and let $S \subseteq A$.
  If there is a vertex $u\in S$ such that $N(u) \cap \overline{A} = N(S) \cap \overline{A}$, we set $R = \{u\}$ and we are done. This includes the case where $G$ is a disjoint union of cliques, which happens for example when $H = K_1$. If such a vertex does not exist, then $H$ has more than one node; since we assume that it is connected, it also has at least one edge.
  Recall that $M_A = \bigcup_{v \in A} M_v$ and similarly for
  $M_{\overline{A}}$ and~$M_S$.
  Let us consider the path $P_e$ corresponding to some edge $e \in E(H)$.
  Let $x_1, \dots, x_p$ be the nodes of $P_e$ in the same order.

  Let $v \in A$ and notice that since, by definition, $H[M_v]$ is connected, the vertex set $M_v \cap V(P_e)$ induces at most two
  connected components in~$P_e$. Indeed if $M_v \cap V(P_e)$ induced more than two connected components, then one of them would not contain any endpoint of $P_e$, and thus this component would not be connected to other nodes of $M_v$ in $H[M_v]$. Let us assume that it induces at least one connected component and let $x_i$ and $x_j$ be the first and last
  nodes (wrt.\ the ordering $x_1, \dots, x_p$) of this component. If $\{x_1,\dots,x_{i-1}\}$
  is disjoint from $M_{\overline{A}}$, we say that $v$ is a \emph{left-protector} of~$P_e$. If $j$ is maximum among all vertices that protect
  the left of $P_e$, then $v$ is a \emph{rightmost left-protector}. (Informally,
  it extends the most to the right.)
  Similarly, $v$ is a \emph{right-protector} when the right of $P_e$
  if $\{x_{j+1}, \dots, x_p\}$ is disjoint from $M_{\overline{A}}$ and
  is a \emph{leftmost right-protector} if $i$ is minimal.

  Let $Z_e$ be a set containing one (arbitrarily chosen) rightmost left-protector and one leftmost right-protector of $e$ if some exist, and
  let $R = \bigcup_{e\in E(H)} Z_e$. Clearly $|R| \leq 2\|H\|$.
  Let us now show that $N(S) \cap \overline{A} \subseteq N(R) \cap \overline{A}$.
  We consider a vertex $u \in N(S) \cap \overline{A}$ and we show that
  it also belongs to $N(R)$.
  Let $v$ be a neighbor of $u$ in $S$. As $u$ and $v$ are adjacent, $M_u$ and $M_v$ have non-empty intersection. Let $e$ be an edge of $H$ such that $M_u$ and $M_v$ meet on $P_e$, i.e.\ $M_u \cap M_{v} \cap V(P_e) \neq \emptyset$. Again, we denote by $x_1, \dots, x_p$ the nodes of~$P_e$.

  \begin{claimn}\label{inprotect}
    Let $w \in A$. If $M_w = \{x_i,\dots, x_{j}\}$ for some $i,j \in \{1, \dots p\}$ with $i\leq j$,
    then one of $\{x_1, \dots x_{i-1}\}$ and $\{x_{j+1}, \dots, x_p\}$
    is disjoint from $M_{\overline{A}}$.
  \end{claimn}
  
  \begin{proof}
    If there are vertices $u,u'$ of $\overline{A}$ such that $M_u$ and
    $M_{u'}$ respectively intersect $\{x_1, \dots x_{i-1}\}$ and
    $\{x_{j+1}, \dots, x_p\}$, then one of $\dist_F(M_u, r)$ and
    $\dist_F(M_{u'}, r)$ is smaller than $\dist_F(M_w, r)$. This
    contradicts the fact that $w \in A$ whereas $u,u' \not \in A$ and proves \autoref{inprotect}.
    \cqed
  \end{proof}

  \begin{claimn}\label{rightleft}
    Let $w \in A$. If $M_w$ intersects $V(P_e)$ then it is a right-protector or a left-protector.
  \end{claimn}

  \begin{proof}
    By definition, if $x_1 \in M_w$ then $M_w$ is a left-protector of $P_e$
    (and symmetrically for the right). The case where $M_w$ contains none of $x_1$ and $x_p$ follows from \autoref{inprotect}.
    \cqed
  \end{proof}

  As $M_u$ intersects $M_v$ on $P_e$, it intersects the vertex set $C$ of
  one component induced by $M_v$ on $P_e$ (recall that
  there are either one or two such components). In the case where
  there are two components, we assume without loss of generality that
  this is the ``left'' one (i.e.\ that with smallest indices).
  In the case where there is one component, we assume that $v$ is a left-protector
  of $P_e$ (according to \autoref{rightleft}, $v$ is a left-protector or a right-protector of $P_e$).
  Observe that in both cases, $v$ is a left-protector of $P_e$.
  Let $z$ be the rightmost left-protector of $P_e$ that belongs to $R$
  and let $x_{k}, \dots x_{k'}$ be the nodes of the corresponding component of
  $P_e[M_z \cap V(P_e)]$ (that is, the component used in the
  definition of left-protector).

  Notice that $C\subseteq \{x_1, \dots, x_{k'}\}$, by
  maximality of $z$ (informally, because it is ``rightmost'').
  As $z$ is a left-protector, $M_u \cap \{x_1, \dots, x_{k-1}\} = \emptyset$.
  Since $M_u$ and $C$ intersect, they intersect in $\{x_k, \dots,
  x_{k'}\}$. Therefore $M_u \cap M_z \neq \emptyset$: $z$ is
  adjacent to $u$. As $z \in R$, we are done.

  This concludes the proof of \autoref{neighrep}.%
\cqed
\end{proof}

We construct a caterpillar decomposition that follows the ordering $v_1,\ldots,v_n$ as follows. We construct a path  $x_1\ldots x_{n}$ and  $n$ vertices $y_1,\ldots,y_n$. Then we make $y_i$ adjacent to $x_i$, for every $i \in \{1, \dots, n\}$. We define $\delta(y_i)=v_i$ for $i\in\{1,\ldots,n\}$. The root is chosen arbitrarily. 
According to \autoref{neighrep} and Lemma~\ref*{remy}, this caterpillar decomposition satisfies $\mim(T, \delta) \leq \max\{1, 2\|H\|\}$. Regarding the running time, we observe that the ordering $v_1, \dots,v_n$ can be found by first labelling the vertices of $F$ with their distance from $r$ obtained by a BFS (in $\Oh(|F| + \|F\|)$ steps) then finding, for each $v \in V(G)$, then minimum label of a vertex in $M_v$ (in $O(\sum_{i=1}^n |M_v|) = O(n|F|)$ steps) and finally sorting these values (in $O(n \log n)$ steps). Overall, the algorithm thus takes polynomial time in the sizes of $G$ and $F$.

We now consider the case when $H$ is not connected.
Then $G$ is the disjoint union of connected graphs $G_1,\dots, G_p$ where for every $i\in \{1, \dots, p\}$, $H$ has a connected component $H'$ such that $G_i$ is an $H'$-graph.
For each $G_i$ we can obtain an ordering $v^i_1, \dots, v^i_{|G_i|}$ as explained above.
Let $T$ be a graph obtained from the path on vertex set
\[x^1_1, \dots, x^1_{|G_1|}, \dots, x^i_1, \dots, x^i_{|G_i|}, \dots, x^p_1, \dots, x^p_{|G_p|}\]
(in this order) by adding a degree one vertex $y^i_j$ adjacent to $x^i_j$, for every $i \in \{1, \dots, p\}$ and every $j \in \{1, \dots, |G_i|\}$. We root $T$ at $x_1^1$.
We also define $\delta(y^i_j)=x^i_j$ for $i$ and $j$ as above. It is easy to check that $(T, \delta)$ is a caterpillar decomposition of $G$ and that
\[
\left (T[x^i_1, \dots, x^i_{|G_i|}, y^i_1, \dots, y^i_{|G_i|}], \delta_{|\{y^i_1, \dots, y^i_{|G_i|}\}} \right )
\]
is a caterpillar decomposition of $G_i$ as defined at the end of the connected case.
(Intuitively, we constructed a caterpillar decomposition of $G$ by attaching the caterpillar decompositions of the $G_i$'s end-to-end.)
As there is no edge in $G$ between vertices of distinct $G_i$'s, we have $\mim_G(V_{x^i_j}) = \mim_{G_i}(V_{x^i_j})$ and $\mim_G(V_{y^i_j}) = \mim_{G_i}(V_{y^i_j})$, for every $i \in \{1, \dots, p\}$ and every $j \in \{1, \dots, |G_i|\}$. Hence the desired bound on the width of $(T, \delta)$ follows from the connected case.

This concludes the proof of \autoref{mim2h}.\qed
\end{proof}

\begin{remark}
  We note that when $H$ is a tree, the bound in \autoref{mim2h} can be improved to $\max \{1, \|H\|\}$. Indeed in this case, in the proof of \autoref{neighrep}, for $v \in A$ and $e \in H$, the vertex set $M_v\cap V(P_e)$ induces at most one component (this follows from the choice of the sequence $v_1, \dots, v_n$). Therefore $|Z_e|\leq 1$ for every $e\in H$ and $|R| \leq \max\{1, \|H\|\}$. The rest of the proof is identical.
\end{remark}
  %

The following is immediate.
\begin{corollary}\label{cor:mimh}
  Let $H$ be a graph. Every $H$-graph has mim-width at most $\max \{1, 2\|H\|\}$.
\end{corollary}

From the definition of boolean-width, we also get the next result.
\begin{corollary}\label{boollog}
  Let $H$ be a graph. Every $n$-vertex $H$-graph with $n\geq 2$ has boolean-width at most \[\max \{1, 2\|H\|\} \cdot \log n.\]
\end{corollary}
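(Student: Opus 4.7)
The plan is to invoke Corollary~\ref{necd} with $d=1$ and take logarithms. Specifically, given an $n$-vertex $H$-graph $G$ with $n \geq 2$, we first note that an intersection model can be assumed (or computed, depending on representation questions, though the bound on boolean-width only needs the existence of a suitable decomposition). Apply Corollary~\ref{necd} with $d=1$ to obtain a caterpillar decomposition $(T,\delta)$ of $G$ satisfying
\[
  \nec_1(T,\delta) \;\leq\; n^{1 \cdot 2\|H\|} \;=\; n^{2\|H\|}.
\]

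By definition, $\nec_1(T,\delta)$ is the maximum of $\nec_1(V_w)$ and $\nec_1(\overline{V_w})$ over all $w \in V(T)$, which is precisely the quantity $\nec(\equiv^1_{V_w})$ appearing in the definition of boolean-width (up to the symmetry $\nec_1(V_w)$ vs.\ $\nec_1(\overline{V_w})$, both of which are bounded by $n^{2\|H\|}$). Therefore
\[
  \boolw(T,\delta) \;=\; \max_{w \in V(T)} \log\bigl(\nec(\equiv^1_{V_w})\bigr) \;\leq\; \log\bigl(n^{2\|H\|}\bigr) \;=\; 2\|H\| \cdot \log n.
\]

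Finally, since $\boolw(G)$ is defined as the minimum boolean-width over all tree decompositions of $G$, we conclude $\boolw(G) \leq 2\|H\| \cdot \log n$. There is no real obstacle here: the corollary is a direct consequence of Corollary~\ref{necd} specialized to $d=1$, together with unpacking the definition of boolean-width. The only thing to double-check is the symmetry between $\nec_1(V_w)$ and $\nec_1(\overline{V_w})$, which is already built into the definition of $\nec_d(T,\delta)$ used in Corollary~\ref{necd}, so no extra argument is required.
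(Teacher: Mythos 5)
Your proof is correct and follows exactly the route the paper intends: the corollary is stated as an immediate consequence of \autoref{necd} with $d=1$ together with the definition of boolean-width, and your unpacking of $\log(\nec(\equiv^1_{V_w})) \leq \log(n^{2\|H\|})$ is precisely that argument. Your remark that only existence (not polynomial-time computability) of the decomposition is needed for the width bound is also accurate.
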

By choosing $H$ to be a single or double edge, we recover the results of \cite{BELMONTE201354} on the boolean-width of interval and circular-arc graphs, respectively, as special cases of \autoref{boollog}. As proven in the same paper, there is a infinite family of interval graphs with boolean-width $\Omega(\log n)$. Apart of the degenerate case where $H$ is edgeless (in which case $H$-graphs are disjoint unions of cliques), every interval graph is an $H$-graph. This shows that the bound in \autoref{boollog} is tight up to a constant factor.

We now provide algorithmic applications of our results.
Boolean-width has been used \cite{BUIXUAN20115187} to design parameterized algorithms for the problems \textsc{Maximum Weight Independent Set} and \textsc{Minimum Weight Dominating Set}. Later, invariants related to neighborhood equivalence were used in \cite{BUIXUAN201366} as parameters of FPT algorithms for the vast class of locally checkable vertex subset and vertex partitioning problems (\emph{\textsf{LC-VSP}} problems), defined as follows.

\begin{definition}[\cite{BUIXUAN201366}]
  Let $\sigma$ and $\rho$ be finite or co-finite subsets of natural numbers. A subset $S$ of vertices of a graph $G$ is a \emph{$(\sigma, \rho)$-set} of G if
  \[
    \forall v \in V(G),\ |N(v) \cap S| \in
    \begin{cases}
      \sigma & \text{if}\ v \in S \\
      \rho & \text{otherwise}.
    \end{cases}
  \]
  
  A computational problem is \emph{\textsf{LC-VSP}} if it consists in finding a minimum or maximum $(\sigma, \rho)$-set in an input graph, for some $\sigma$ and $\rho$ as above.
\end{definition}

The class of \textsf{LC-VSP} problems include fundamental problems as \textsc{Independent Set}, \textsc{Independent Dominating Set}, \textsc{Total Dominating Set}, and \textsc{Induced Matching}. We refer to \cite{BUIXUAN201366} for several other examples of classic computational problems expressed as \textsf{LC-VSP} problems. The main result of \cite{BUIXUAN201366} is the following. While its original statement deals with the relation of $d$-neighborhood equivalence (an extension of the notion defined in \autoref{def:equiv}), we state it here in terms of mim-width using the direct connection between these two notions given in~\cite[Lemma~2]{BELMONTE201354}.

\begin{theorem}[\cite{BUIXUAN201366}]\label{thlcvsp}
  For every \textsf{LC-VSP} problem $\Pi$, there are constants $d$ and $q$ such that $\Pi$ can be solved in time $\Oh(q \cdot n^{3qd \mim(T, \delta)+4})$ on an input graph of order $n$, if a decomposition $(T, \delta)$ of the input is given.
\end{theorem}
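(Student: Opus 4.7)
The plan is to establish the bound via a bottom-up dynamic programming along the given decomposition $(T,\delta)$. Recall that an LC-VSP problem is specified by a constant number $q$ of pairs of acceptance sets $(\sigma_i,\rho_i)\subseteq \N\times \N$, each of which is finite or cofinite, and asks for an optimum-weight vertex partition $(S_1,\dots,S_q)$ such that for every $i$ and every $v$, the count $|N(v)\cap S_i|$ lies in $\sigma_i$ if $v\in S_i$ and in $\rho_i$ otherwise. Choosing $d$ strictly larger than all finiteness thresholds appearing in the $\sigma_i,\rho_i$ ensures that the validity of a candidate solution depends only on the capped counts $\min(d,|N(v)\cap S_i|)$, which is precisely the information preserved by $\equiv^d_A$.

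Process $T$ from the leaves to the root. At each node $w$, I would maintain a table indexed by pairs $(\mathbf{C},\mathbf{C}')$, where $\mathbf{C}$ is a $q$-tuple of equivalence classes of $\equiv^d_{V_w}$ (one class per label $i$) and $\mathbf{C}'$ is an analogous $q$-tuple for $\equiv^d_{\overline{V_w}}$ describing the promised interface with the exterior. The entry stores the optimum cumulative weight of a labeling of $V_w$ whose $i$-th part lies in $C_i$, subject to the local LC-VSP constraint being satisfied for every $v\in V_w$ whose outside contribution is pinned down by $\mathbf{C}'$. Soundness of indexing by classes rather than by actual sets is exactly the content of the definition of $\equiv^d_A$: replacing a partial solution by an equivalent one preserves the capped counts seen from the other side of the cut, hence preserves both feasibility and objective value.

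The update at an internal node merges the tables of its two children: for each target interface at $w$ I would enumerate triples formed by the state at $w$ and the states at both children, verify that concatenating their representatives is consistent with the target class, and keep the best combined value. The number of such triples is bounded by $\nec_d(T,\delta)^{3q}$, each is processed in time polynomial in $n$, and a suitable precomputation of the equivalence classes (the main ingredient borrowed from \cite{BUIXUAN201366}) yields the advertised running time $\Oh(n^4\cdot q\cdot \nec_d(T,\delta)^{3q})$.

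The main obstacle is producing, at every node $w$ of $T$, an explicit and compact data structure for the equivalence classes of $\equiv^d_{V_w}$ and $\equiv^d_{\overline{V_w}}$, together with operations that compose classes across the cut and support the membership tests used during the merge; this is what generates the $n^4$ factor and the polynomial overhead per DP entry. Since this machinery is developed in detail in \cite{BUIXUAN201366}, I would import it verbatim rather than reconstructing it, and use Theorem~\ref{mim2h} together with Corollary~\ref{necd} to feed a concrete decomposition of small $\nec_d$ into the black box.
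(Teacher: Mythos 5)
This theorem is not proved in the paper at all: it is imported verbatim from \cite{BUIXUAN201366}, and your sketch is a faithful reconstruction of the dynamic programming over $d$-neighborhood equivalence classes that underlies that reference (capping counts at a $d$ exceeding all finiteness thresholds, indexing tables by $q$-tuples of classes on both sides of each cut, and merging parent/children states to get the $\nec_d(T,\delta)^{3q}$ factor). The only cosmetic imprecision is that general LC-VSP partitioning problems are specified by a $q\times q$ degree-constraint matrix of finite/cofinite sets rather than $q$ pairs $(\sigma_i,\rho_i)$, but this does not affect the argument, and deferring the representative-computation machinery to \cite{BUIXUAN201366} is exactly what the paper itself does.
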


Furthermore, it was recently proved \cite{jaffke2018generalized} that the distance versions of \textsf{LC-VSP} problems (such as $r$-\textsc{Independent Set}, which asks for vertices pairwise at distance at least $r$), are also solvable in polynomial time on graphs of bounded mim-width. We refer to \cite{jaffke2018generalized} for more details.
Regarding problems that are not \textsf{LC-VSP}, Jaffke, Kwon, and Telle obtained polynomial-time algorithms for problems pertaining to induced paths in graphs of bounded mim-width.
\begin{theorem}[\cite{jaffke2017polynomial}]\label{thnlcvsp}
  The problems \textsc{Longest Induced Path}, \textsc{Induced Disjoint Paths}, and, for every graph $J$, the problem \textsc{$J$-Induced Subdivision}\footnote{We refer the reader to \cite{jaffke2017polynomial} for an accurate definition of these problems.} can be solved in time $n^{\Oh(\mim(T, \delta))}$ on an input graph of order $n$, if a decomposition $(T, \delta)$ of the input is given.
\end{theorem}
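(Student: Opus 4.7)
\medskip

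\noindent\textbf{Proof proposal.}
The plan is to prove each of the three algorithmic claims by dynamic programming along the given decomposition $(T,\delta)$, processed from the leaves to the root. The unifying idea, which I would take from the general mim-width algorithmic paradigm underlying \autoref{thlcvsp}, is that at each node $w\in V(T)$ a partial solution living on $V_w$ only needs to be remembered up to its \emph{external footprint}, namely which vertices of $\overline{V_w}$ it uses and which of them it ``forbids''. By \autoref{remy}\eqref{remy1e}--\eqref{remy2e}, two subsets of $V_w$ with the same such external footprint belong to the same class of $\equiv^1_{V_w}$, and there are only $n^{\Oh{}(\mim(V_w))}\le n^{\Oh{}(\mim(T,\delta))}$ such classes; this is what drives the final running time.

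I would treat \textsc{Longest Induced Path} first as a template. A partial solution at $w$ is a (possibly empty) induced path $P$ in $G[V_w]$ together with a designated pair of endpoints that may still be prolonged through $\overline{V_w}$. The table entry $\tau_w$ stores, for every choice of up to two endpoint-representatives in $V_w$ and every $\equiv^1_{V_w}$-class $[S]$ of ``already-used vertices of $V_w$ projected outward'', the largest attainable length of such a $P$. The correctness rests on the observation that when two children tables are merged at an internal node, whether $P\cup P'$ remains induced and respects the ``not yet forbidden'' condition in $\overline{V_w}$ depends only on $[S]$, not on $S$ itself; this reduces table sizes from $2^{|V_w|}$ to $\nec_1(V_w)\le n^{\Oh{}(\mim(T,\delta))}$, and a constant-sized boundary annotation for the endpoints multiplies this by a polynomial factor.

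For \textsc{Induced Disjoint Paths} with $k$ terminal pairs I would extend the same scheme by indexing each partial solution by a partial matching of terminals to endpoints on the boundary; since $k$ is part of the input but not a parameter in the exponent, one must be careful to keep the endpoint annotations polynomially-bounded in $n$ rather than exponential in $k$, which is achieved by storing only the pairings that are active across the cut at $w$. For \textsc{$H$-Induced Subdivision} one additionally stores, for each branching node of $H$ already placed inside $V_w$, the vertex of $V_w$ that realizes it and, for each edge of $H$ whose subdivision path crosses the cut, the representative endpoint in $V_w$. Since $H$ is fixed, both the number of such annotations and their arities are constants, so the table at $w$ still has size $n^{\Oh{}(\mim(T,\delta))}$.

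The main obstacle I expect is the merge step for the induced-path problems: when combining children tables we must argue that non-edges \emph{between} the two sides are preserved by choosing class representatives. The critical point is that ``$u\in\overline{V_w}$ has no neighbor in the chosen set $S\subseteq V_w$'' is a property that, by the definition of $\equiv^1_{V_w}$, is invariant across a class; but to invoke this in a merge where both children contribute to $S$, I would refine the equivalence so that on each child $V_{w_i}$ we record the $\equiv^1_{V_{w_i}}$-class of its contribution, and then check consistency against the father's $\equiv^1_{V_w}$-class. A small but important verification is that this refinement still keeps the number of states at $n^{\Oh{}(\mim(T,\delta))}$, which follows since the father's class is determined by the two children's classes, so the product of per-node bounds telescopes across $T$ without blowing up the exponent.
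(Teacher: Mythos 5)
This statement is not proved in the paper at all: it is imported as a black box from \cite{jaffke2017polynomial}, so there is no internal argument to compare yours against, and what you are really attempting is a reconstruction of that external proof. Your overall architecture (bottom-up dynamic programming over $(T,\delta)$, table entries indexed by neighborhood-equivalence classes, state-space bound via \autoref{remy}) does match the cited approach, but there is a genuine gap at the one place where that paper does its real work. You assert that whether $P\cup P'$ remains induced after a merge ``depends only on $[S]$, not on $S$ itself.'' This is false for the relation $\equiv^1$ as defined. Knowing the class of $S_1\subseteq V_{w_1}$ under $\equiv^1_{V_{w_1}}$ tells you which vertices of $\overline{V_{w_1}}$ have a neighbour in $S_1$; knowing the class of $S_2\subseteq V_{w_2}$ tells you which vertices of $\overline{V_{w_2}}$ have a neighbour in $S_2$. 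Neither tells you whether $S_2$ itself intersects the set $\{v\in V_{w_2}\colon N(v)\cap S_1\neq\emptyset\}$, which is exactly what you must test to certify that no chord appears between the two sides. The relation $\equiv^1$ captures only the \emph{outward} footprint of a set, which suffices for domination-type (locally checkable) constraints as in \autoref{thlcvsp}; induced-substructure constraints also depend on which vertices the set contains, and this is precisely why these problems fall outside the LC-VSP framework and required a new idea. The resolution in \cite{jaffke2017polynomial} is to index each table entry by a \emph{pair} of representatives --- one for the chosen set inside the cut and one for the set of still-usable (equivalently, forbidden) vertices outside the cut --- and to prove an exchange lemma asserting that two partial solutions agreeing on both representatives admit exactly the same valid completions. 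Your closing paragraph gestures at ``refining the equivalence'' per child, but that refinement alone does not recover the missing information, and the exchange lemma, which is the heart of the correctness proof, is absent.

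A second, smaller gap concerns \textsc{Induced Disjoint Paths}: you correctly flag that the boundary annotation must stay polynomial in $n$ rather than exponential in the number $k$ of terminal pairs, but you do not say why only ``active'' pairings need be stored or why there are few of them. The reason is that mutually induced paths crossing the cut at $w$ give rise to an induced matching in $G[V_w,\overline{V_w}]$, so at most $\mim(V_w)\le\mim(T,\delta)$ of the $k$ paths can cross any cut; without this observation the claimed bound $n^{\Oh{}(\mim(T,\delta))}$ on the table size is not justified.
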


Composing \autoref{mim2h} with the two aforementioned results, we get the following meta-algorithmic consequences.

\begin{theorem}\label{thm:lc-vsphg}
  Let $H$ be a graph and let $\Pi$ be a (distance) \textsf{LC-VSP} problem.
  Given any $H$-graph, an $H$-representation of it and the corresponding $H$-subdivision $F$, one can solve $\Pi$ in polynomial time.
\end{theorem}

By summing the running times of \autoref{mim2h} (as detailled in its proof) and \autoref{thlcvsp}, we can bound the running time of the algorithm of \autoref{thm:lc-vsphg} by
\[
\Oh(\|F\| + n|F|+ q \cdot n^{6qd\|H\|+4}),
\]
where $q,d>0$ are the constants depending on the problem $\Pi$ given by \autoref{thlcvsp} and $n$ is the order of~$G$.\footnote{We here assumed, for the sake of readability, that $H$ has at least one edge. In the opposite case, $G$ has a simple structure: it is a disjoint union of cliques.}

\begin{theorem}\label{indpa}
  Let $H$ and $J$ be two graphs.
  Given any $H$-graph, an $H$-representation of it and the corresponding $H$-subdivision $F$, one can solve any of
  \textsc{Longest Induced Path}, \textsc{Induced Disjoint Paths}, and \textsc{$J$-Induced Subdivision} in $\Oh(\|F\| + n|F|+ n^{\Oh(\|H\|)})$ time.
\end{theorem}

\section{\texorpdfstring{$H$}{H}-graphs have few minimal separators}
\label{sec:minsep}

Let $G$ be a graph. If $a,b \in V(G)$, we say that $X \subseteq V(G)$ is an \emph{$(a,b)$-separator} if $a$ and $b$ are in distinct connected components of $G \setminus X$. It is a \emph{minimal $(a,b)$-separator} if it is inclusion-wise minimal with this property. A subset of $V(G)$ is a \emph{minimal separator} of $G$ if it is a minimal $(a,b)$-separator for some $a,b \in V(G)$.

The study of minimal separators is an active line of research that found many algorithmic applications
(see e.g.\ \cite{kloks1993computing, Berry1999, doi:10.1137/S0097539799359683, Fomin:2015:LIS:3065779.3065785}). In general, the number of minimal separators of a graph may be as large as exponential in its number of vertices. We prove in this section that in an $H$-graph, this number is upper-bounded by a polynomial (\autoref{minsep}). By combining this finding with meta-algorithmic results of Fomin, Todinca and Villanger \cite{Fomin:2015:LIS:3065779.3065785}, we deduce that a wide class of optimization problems can be solved in polynomial time on $H$-graphs (\autoref{c:oispt}). We complement these results by providing in \autoref{lem:minsepmany} a lower bound on the function of \autoref{minsep}.

\begin{theorem}\label{minsep}
  Let $H$ be a graph. If $G$ is a $H$-graph, it has at most $(2|G| + 1)^{\|H\|} + \|H\|\cdot
(2|G|)^2$ minimal separators.
\end{theorem}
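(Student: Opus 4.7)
My plan is to inject each minimal separator of $G$ into a set of size at most $(2|G|+1)^{\|H\|}+\|H\|\cdot(2|G|)^2$, by splitting separators into two types and encoding each type economically.

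\emph{Setup.} Fix an $H$-representation $\{M_v\}_{v\in V(G)}$ of $G$ in a subdivision $H'$ of $H$. For each edge $e\in E(H)$, the subdivided path $P_e$ is intersected by each model $M_v$ in at most two subpaths (by the same observation already used in the proof of \autoref{mim2h}). The endpoints of these subpaths produce $O(|G|)$ distinguished positions along $P_e$; after collapsing consecutive positions across which the set $V_u=\{v:u\in M_v\}$ does not change, I expect to be left with at most $2|G|+1$ genuinely distinct ``slots'' along $P_e$, one of which I reserve as a symbol meaning ``no cut on this edge''.

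\emph{Encoding.} Given a minimal separator $S$, fix a witness pair $a,b$ for which $S$ is an $(a,b)$-separator and let $C_a,C_b$ be the components of $G\setminus S$ containing $a$ and $b$. I distinguish two cases. In the \emph{local} case there is an edge $e$ with $S\subseteq V_G(e)$; then $M_S\cap V(P_e)$ is a single subpath of $P_e$, which I can specify by its two endpoints among at most $2|G|$ positions, giving at most $\|H\|\cdot(2|G|)^2$ separators overall. In the \emph{global} case I define $\tau_e$ to be ``no cut'' whenever either $C_a$ or $C_b$ fails to touch $P_e$, and otherwise a canonical distinguished position on $P_e$ that lies inside $M_S$ and separates $M_{C_a}$ from $M_{C_b}$ along $P_e$. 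Each $\tau_e$ then takes at most $2|G|+1$ values, so the encoding lives in a set of size at most $(2|G|+1)^{\|H\|}$.

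\emph{Injectivity and main obstacle.} The key step is to recover $S$ from the tuple $(\tau_e)_e$ via the formula $S=\bigcup_{e}\{v:\tau_e\in M_v\}$. For the inclusion $\subseteq$ I will use minimality of $S$: every $v\in S$ has neighbours in both $C_a$ and $C_b$, so $M_v$ must bridge the separator on at least one edge and therefore contain some $\tau_e$. For $\supseteq$, any vertex whose model contains some $\tau_e$ but lies outside $S$ would re-establish a $C_a$--$C_b$ path in $G\setminus S$, contradicting the separator property. The subtle part, and what I expect to be the main obstacle, is defining the canonical $\tau_e$ so that this injectivity genuinely holds: I must ensure that the chosen slot is always contained in $M_S$, that it truly separates $M_{C_a}$ from $M_{C_b}$ on $P_e$, and that the definition behaves correctly in corner cases such as models reaching the branching endpoints of $P_e$, both components sitting on the same ``side'' of $P_e$, or vertices whose intersection with $P_e$ decomposes into two components. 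I anticipate that resolving these points requires a coherent global ordering of the nodes of $H'$, for instance by distance to a fixed root as in the proof of \autoref{mim2h}, together with an extremal choice (``first separating slot'') along each $P_e$.
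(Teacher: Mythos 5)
Your encoding scheme is essentially the one the paper uses for this theorem: one ``cut'' or ``no cut'' datum per subdivided path $P_e$, giving $(2|G|+1)^{\|H\|}$ tuples, plus a separate $\|H\|\cdot(2|G|)^2$ count for the separators whose encoding is confined to a single path; the arithmetic is identical. However, the step you defer as ``the main obstacle'' is the actual content of the proof, and the sketch you give for it does not go through as written. Your argument for the inclusion $\bigcup_e\{v:\tau_e\in M_v\}\subseteq S$ claims that a vertex $w\notin S$ with $\tau_e\in M_w$ ``would re-establish a $C_a$--$C_b$ path.'' This fails for a vertex $w$ lying in a \emph{third} component of $G\setminus S$: its model can occupy the position $\tau_e$ between $M_{C_a}$ and $M_{C_b}$ on $P_e$ without reconnecting $C_a$ to $C_b$. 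Relatedly, a single node of $P_e$ is too weak a marker: a model can contain a node sitting just outside $M_{C_a}$ without meeting $M_{C_a}$ at all, so containing $\tau_e$ does not by itself force membership in $S$. (There is also no guarantee that a node of $M_S$ lying between $M_{C_a}$ and $M_{C_b}$ on $P_e$ exists at all.)

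The paper repairs exactly these points. It takes $C_a$ and $C_b$ to be \emph{full} components, i.e.\ $N(C_a)=N(C_b)=S$, which exist for every minimal separator, and it uses as markers the \emph{border edges} of $M_{C_a}$: edges of the subdivision with one endpoint in $M_{C_a}$ and the other outside. Decoding is via $V_S=\{v:\exists s\in S,\ s\subseteq M_v\}$. A model containing both endpoints of such an edge necessarily meets $M_{C_a}$, so the corresponding vertex is adjacent to $C_a$ but not in $C_a$, hence lies in $N(C_a)=S$ by fullness; this is what excludes third components. Conversely, each $x\in S$ has neighbours in both $C_a$ and $C_b$, so the connected set $M_x$ must contain a border edge of $M_{C_a}$, giving $S\subseteq V_S$. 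Each model contributes at most two border edges to each $P_e$ (by the same two-component observation you invoke), so there are at most $2|G|$ candidate markers per path, and the case analysis on whether two markers fall on the same path yields your two counts. No global ordering of the nodes of $H'$ by distance to a root is needed once the marker is an edge rather than a node. With these repairs your plan becomes the paper's proof.
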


\begin{proof}
  Let $G$ be a $H$-graph.
  Observe that if $H$ is edgeless, then $G$ is a disjoint union of cliques and thus has either only one  minimal separator, the empty set, or none if $G$ is a complete graph. Therefore we may now assume that $H$ has at least one edge.
  
  Let $F$ be a subdivision of $H$ where $G$
  can be represented as the intersection graph of~$\{M_v,\ v \in V(G)\}$.
  For every subset $V \subseteq V(G)$, the \emph{border edges} of $V$
  are the edges of $F$ with one endpoint in $M_V$ and one endpoint in
  $V(F) \setminus M_V$.
  Let $R$ be the union of border edges over $\{M_v,\ v\in V(G)\}$.
  Observe that for every $V \subseteq V(G)$, the set of border edges
  of $V$ is a subset of~$R$.
  For every edge $e \in E(F)$, we set
  \[
    V_e = \{v \in V(G),\ e \subseteq M_v\}
  \]
  and extend this notation to sets $S \subseteq E(F)$ as follows:
  \[
    V_S = \{v \in V(G),\ \exists s \in S,\ s\subseteq M_v\}.
  \]
  Informally, $V_S$ is the set of all vertices of $G$ whose models
  contain some edge of~$S$.
  
  \begin{claimn}\label{cutbord}
    For every minimal separator $X$ in $G$, there is a $S \subseteq R$ such
    that $X = V_S$.
  \end{claimn}

  \begin{proof}
  Let $A,B$ be
  two connected components of $G\setminus X$ such that $N(A) = N(B) =
  X$. As $X$ is an $(A,B)$-separator (i.e.\ $A$ and $B$ are included in the vertex sets of distinct connected components of $G \setminus X$), $M_A \cap M_B = \emptyset$.
  Let $S$ be the set of all border edges of $M_A$ that belong to some
  inclusion-wise minimal path that starts in $A$ and ends in $B$. As noted above, $S
  \subseteq R$. First we show $V_S \subseteq X$. Let $v \in V_S$. That
  is, $v$ is a vertex $G$ such that $M_v \supseteq s$ for some $s \in
  S$.
  Then $M_v$ contains both endpoints of $s$, one of which belongs
  to $M_A$. The vertex $v$ is adjacent to $A$ but does not belong to
  $A$ (as $M_v$ contains a vertex of $\overline{A}$), so it has to belong to the separator $X$. Therefore, $V_s \subseteq X$.
  Now we show $X \subseteq V_S$. Let $x \in X$. By definition, $x$ has
  a neighbor in both $A$ and $B$. Therefore, $M_x$ meets both $M_A$
  and $M_B$. As $M_x$ induces a connected subgraph of $F$ and $M_A$
  is disjoint from $M_B$, it contains an edge $s \in E(F)$ with one
  endpoint in $M_A$ and the other in $V(F) \setminus M_A$. Then $s$ is
  a border edge of $M_A$ in a minimal path from $A$ to $B$: $x \in
  V_S$. Hence $X = V_S$.%
  \cqed
  \end{proof}
  
  From \autoref{cutbord} we can already deduce that the number of
  minimal separators of $G$ is at most the number of subsets of
  $R$. In order to obtain better bounds, we need other
  observations.

  \begin{claimn}\label{2edges}
    For every $V \subseteq V(G)$ such that $M_V$ induces a connected
    subgraph of $F$, and every $e \in E(H)$, the set $M_V$ has
    at most two border edges in $E(P_e)$. Hence, $|R| \leq 2|G| \cdot \|H\|$.
  \end{claimn}
  \begin{proof}
    Follows from the fact that $F[M_V]$ is connected.%
    \cqed
  \end{proof}

  \begin{claimn}
   For every minimal separator $X$ of $G$, if $S\subseteq R$ is the
   subset of edges of $F$ defined in the proof of \autoref{cutbord}, then
   \begin{itemize}
   \item either $|S \cap E(P_e)| \leq  1$ for every $e \in E(H)$; 
   \item or $|S| = 2$ and $S \subseteq E(P_e)$ for some $e \in E(H)$.
   \end{itemize}
 \end{claimn}
 \begin{proof}
   Let $A$ and $B$ be as in the  proof of \autoref{cutbord}. According
   to \autoref{2edges} and as $S$ is a subset of the border edges of $M_A$,
   we deduce $|S \cap E(P_e)| \leq 2$ for every $e\in E(H)$. Let us assume that $|S \cap
   E(P_e)| = 2$ for some $e\in E(H)$. Let $u,u'$ and $v,v'$ be the endpoints of the two
   edges shared by $S$ and $E(P_e)$, respectively and in this order on the
   path. Then the model of one of $A$ and $B$ has its vertices in the
   subpath $Q$ of $P_e$ delimited by $u'$ and $v$. Indeed, both $\{u,u'\}$
   and $\{v, v'\}$ and have an endpoint that does not belong to
   $M_A$. As $M_A$ induces a connected subgraph of $F$, either these
   endpoints are $u'$ and $v$ (intuitively, the exterior endpoints) or they are $u$,
   $v'$ (the interior endpoints). In the first case $M_A\subseteq E(Q)$ and in the second one,
   $M_B \subseteq E(Q)$. From the definition of $R$, we can then conclude that $S \subseteq E(P_e)$ and we get $|S|=2$.%
   \cqed
 \end{proof}

 Therefore, for every minimal separator $X$ of $G$, there is a set $S
 \subseteq R$ such that:
 \begin{enumerate}
 \item either $|S \cap E(P_e)| \leq 1$ for every $e \in E(H)$;\label{intone}
 \item or $|S| =2$ and $S \subseteq E(P_e)$ for some $e \in E(H)$;\label{inttwo}
 \end{enumerate}

 In order to upper-bound the number of possible minimal separators of
 $G$, it suffices to upper-bound the number of sets $S \subseteq
 R$ that satisfy one of the two conditions above.
As noted in \autoref{2edges}, for every $e \in E(H)$ we have $R \cap
E(P_e) \leq 2|G|$. Hence there are at most $(2|G|)^2$ possible choices of set
$S$ that satisfy \eqref{inttwo} for each $e \in E(H)$. We deduce that
there are at most $\|H\|\cdot (2|G|)^2$ distinct sets $S \subseteq R$
satisfying~\eqref{inttwo}. Let us now consider sets $S \subseteq R$ that satisfy~\eqref{intone}. For every $e\in
E(H)$, either $S$ contains one of the $2|G|$ edges of $R \cap E(P_e)$ or
it does not contain any of them. This makes $2|G|+1$ possible
choices for each $e \in E(H)$, and $(2|G| + 1)^{\|H\|}$ in total.
Consequently, $G$ has at most $(2|G| + 1)^{\|H\|} + \|H\|\cdot
(2|G|)^2$ minimal separators. This proves~\autoref{minsep}.%
\qed
\end{proof}

For every $r\in \N$, let $\theta_r$ be the graph with 2 vertices and
$r$ parallel edges.
The following shows that the exponential contribution of $\|H\|$ in \autoref{minsep} cannot be avoided.

\begin{lemma}\label{lem:minsepmany}
For every $r\in \N$, there is a $\theta_r$-graph $G$ with at least
$\left (\frac{|G|-2}{r}\right )^r$ minimal separators.
\end{lemma}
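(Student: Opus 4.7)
\medskip

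\noindent\textbf{Proof plan.} The plan is to construct, for each positive integer $k$, a $\theta_r$-graph $G$ with $|G|=rk+2$ that looks like ``$r$ parallel internally-disjoint $(s,t)$-paths of length $k+1$'' and then count minimal $(s,t)$-separators. Start from the subdivision $F$ of $\theta_r$ in which each of the $r$ edges between the two branching nodes $a,b$ is subdivided so as to create $k+1$ internal subdivision nodes $p_1^{(i)},\dots,p_{k+1}^{(i)}$ along the $i$-th path $P_i$ ($i\in[r]$). Then define the vertex set of $G$ and its models as follows: one vertex $s$ with model $M_s=\{a\}\cup\{p_1^{(i)}:i\in[r]\}$, one vertex $t$ with model $M_t=\{b\}\cup\{p_{k+1}^{(i)}:i\in[r]\}$, and for every $i\in[r]$ and $j\in[k]$ a vertex $v_j^{(i)}$ with model $M_{v_j^{(i)}}=\{p_j^{(i)},p_{j+1}^{(i)}\}$. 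Each of these sets induces a connected subgraph of $F$ (for $M_s$ and $M_t$, note that $a$ is adjacent in $F$ to every $p_1^{(i)}$ and $b$ to every $p_{k+1}^{(i)}$).

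Next I would check the adjacencies in the resulting intersection graph $G$. Two models intersect only if they share a node of $F$; since the interior of each path $P_i$ is used only by $s$, $t$, and the $v_j^{(i)}$'s, the only possible cross-path adjacencies would come through $a$ or $b$, which appear only in $M_s$ and $M_t$. A direct check then shows: $s$ is adjacent exactly to $v_1^{(1)},\dots,v_1^{(r)}$; $t$ is adjacent exactly to $v_k^{(1)},\dots,v_k^{(r)}$; within each path, $v_j^{(i)}$ is adjacent to $v_{j+1}^{(i)}$ via $p_{j+1}^{(i)}$ and to no other $v_{j'}^{(i')}$; and $s$ is not adjacent to $t$. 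So $G$ is the graph obtained from $r$ internally-disjoint paths of length $k+1$ joining $s$ to $t$, with internal vertices $v_1^{(i)},\dots,v_k^{(i)}$ on the $i$-th path, and $|G|=rk+2$.

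Now I would count minimal $(s,t)$-separators of $G$. For any choice of $(j_1,\dots,j_r)\in[k]^r$, set $X_{(j_1,\dots,j_r)}=\{v_{j_i}^{(i)}:i\in[r]\}$. Because each $(s,t)$-path in $G$ lies entirely on a single lane $P_i$, removing $X_{(j_1,\dots,j_r)}$ disconnects $s$ from $t$; conversely, putting back any one $v_{j_i}^{(i)}$ restores the $i$-th lane and reconnects $s$ to $t$, so $X_{(j_1,\dots,j_r)}$ is an inclusion-minimal $(s,t)$-separator. Distinct tuples give distinct sets (the $v_j^{(i)}$ are pairwise different vertices), so $G$ has at least $k^r$ minimal separators, which equals $\left(\frac{|G|-2}{r}\right)^r$ as required.

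The only delicate step is the last one, i.e.\ verifying that nothing outside the $r$ lanes can be used to route an $(s,t)$-path and that every minimal $(s,t)$-separator of this specific form really is minimal; both reduce to the structural description of $G$ established in the previous paragraph, so the argument is straightforward once the construction is in place.
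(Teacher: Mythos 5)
Your proof is correct and follows essentially the same route as the paper: the graph you build (two vertices joined by $r$ internally disjoint paths with $k$ internal vertices each) is exactly the subdivided $\theta_r$ that the paper uses, and the counting of minimal separators by choosing one internal vertex per lane is identical. The only difference is cosmetic: the paper simply observes that the subdivision itself is a $\theta_r$-graph (each vertex modelled by the singleton containing itself), whereas you construct an explicit two-node-per-vertex intersection model realizing the same graph.
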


\begin{proof}
  Let $G$ be the graph obtained from $\theta_r$ by subdividing $k$
  times each edge (see \autoref{fig:minsep} for an example with $r=4$).
  Then $G$ is a $\theta_r$-graph and $|G| = kr+2$. Notice that any choice
  of $r$ subdivision nodes, each corresponding to a different edge
  of $\theta_r$, gives a distinct minimal separator of $G$. Hence $G$
  has at least $k^r = \left (\frac{|G|-2}{r}\right )^r$ minimal
  separators.%
  \qed
\end{proof}

\begin{figure}[h]
  \centering
  \begin{tikzpicture}[every node/.style = black node]
    \draw (0,0) node (a) {} (4,0) node (b) {};
    \draw
    (a) --++(1, 0.45) node {} -- ++(1,0) node{} -- ++(1,0) node{} node[midway, normal, fill = white] {$\dots$} -- (b)
    (a) --++(1, 0.15) node {} -- ++(1,0) node{} -- ++(1,0) node{} node[midway, normal, fill = white] {$\dots$} -- (b)
    (a) --++(1, -0.15) node {} -- ++(1,0) node{} -- ++(1,0) node{} node[midway, normal, fill = white] {$\dots$} -- (b)
    (a) --++(1, -0.45) node {} -- ++(1,0) node{} -- ++(1,0) node{} node[midway, normal, fill = white] {$\dots$} -- (b);
    \draw[decorate,decoration={brace,amplitude=5pt}] (3.25,-0.65) -- (0.75, -0.65) node[midway, normal, anchor = north, yshift = -0.25cm] {$k$};
  \end{tikzpicture}
  \caption{A $\theta_4$-graph with at least $k^4$ minimal separators.}
  \label{fig:minsep}
\end{figure}

Our results on minimal separators have algorithmic consequences.
Let $t \in \N$ and let $\mathcal{P}$ be a boolean function depending on a graph and a subset of its vertices. (More formally, $\mathcal{P}(G,X)$ is a boolean value, for every graph $G$ and $X \subseteq V(G)$.)
We consider the following generic problem described in~\cite{Fomin:2015:LIS:3065779.3065785}.

\begin{center}
\fbox{\begin{minipage}{0.95\textwidth}
    \noindent{\textsc{Optimal Induced Subgraph for $\mathcal{P}$ and $t$}, $\textsc{OIS}(\mathcal{P}, t)$ for short}
    \begin{description}
    \item[Input:] A graph $G$
    \item[Task:] Find sets $X\subseteq Y \subseteq V(G)$ such that $X$ is of maximum size, the induced
    subgraph $G[Y]$ is of treewidth at most $t$, and $\mathcal{P}(G[Y], X)$ is true.
    \end{description}
\end{minipage}}
\end{center}

For various choices of $\mathcal{P}$ and $t$, this generic problem corresponds to natural families of optimization meta-problems like \textsc{$\mathcal{F}$-minor-deletion} (where $\mathcal{F}$ is a class of graphs containing at least one planar graph) whose goal is to delete a minimum number of vertices in order to get an $\mathcal{F}$-minor free graph\footnote{In fact, \textsc{Optimal Induced Subgraph for $\mathcal{P}$ and $t$} corresponds to the dual equivalent problem of \textsc{$\mathcal{F}$-minor-deletion}, which asks for a largest $\mathcal{F}$-minor free subgraph of the input.} and \textsc{Independent $\mathcal{F}$-packing} (where $\mathcal{F}$ is a class of connected graphs), which asks for a maximum number of disjoint copies of graphs in $\mathcal{F}$ as pairwise independent subgraphs of the input.
Fomin, Todinca, and Villanger proved that when the property $\mathcal{P}$ can be expressed in Counting Monadic Second Order logic (\textsf{CMSOL}, see \cite{Fomin:2015:LIS:3065779.3065785}), the above problem can be easily solved on classes of graphs that have a polynomial number of minimal separators.

\begin{theorem}[\cite{Fomin:2015:LIS:3065779.3065785}]\label{th:minsepmalgo}
    For any fixed $t\in \N$ and \textsf{CMSOL} property $\mathcal{P}$, $\textsc{OIS}(\mathcal{P}, t)$ is solvable on an $n$-vertex graph with $s$ minimal separators in time 
    $\Oh(s^2\cdot n^{t+4} \cdot f(t, \mathcal{P}))$, for some function  $f$ of $t$ and $\mathcal{P}$ only.
  \end{theorem}

  We deduce that $\textsc{OIS}(\mathcal{P}, t)$ can be solved in polynomial time in $H$-graphs:
  \begin{corollary}\label{c:oispt}
    Let $H$ be a graph. For any fixed $t\in\N$ and \textsf{CMSOL} property $\mathcal{P}$, $\textsc{OIS}(\mathcal{P}, t)$ can be solved on an $n$-vertex $H$-graph in time $n^{\Oh(\|H\|+ t+4)} \cdot f(t, \mathcal{P})$, for some function  $f$ of $t$ and $\mathcal{P}$ only.
  \end{corollary}
  
\section{Parameterized complexity of basic problems for \texorpdfstring{$H$}{H}-graphs}\label{sec:basic}
\label{sec:pc}

In this section we investigate the parameterized complexity of some basic graph problems for $H$-graphs: \textsc{Dominating Set}, \textsc{Independent Set} and \textsc{Clique}. First, in \autoref{sec:hard}, we show that  \textsc{Dominating Set} and \textsc{Independent Set} are \classW{1}-hard when parameterized by the solution size and the size of $H$. In \autoref{sec:ds-tree}, we show that  \textsc{Dominating Set} is \classFPT when parameterized by the number of vertices of $H$ if $H$ is a tree. In fact, we show a more general result by proving that \textsc{Dominating Set} is \classFPT for chordal graphs if the problem is parameterized by the \emph{leafage} of the input graph, that is, by the minimum number of leaves in a clique tree for the input graph. This result is somehow tight since \textsc{Dominating Set} is well-known to be \classW{2}-hard for split graphs when parameterized by the solution size~\cite{RamanS08}. Recall also that \textsc{Independent Set} is polynomial-time solvable for chordal graphs~\cite{Gavril72,Golumbic04} and, therefore, for $H$-graphs if $H$ is a tree. Finally, in \autoref{sec:clique}, we show that \textsc{Clique} admits a polynomial kernel when parameterized by the solution size and the size of $H$, in the case where the representation is given.

\subsection{Hardness of Independent Set and Dominating Set on \texorpdfstring{$H$}{H}-graphs}\label{sec:hard}
In this section we prove \classW{1}-hardness of \textsc{Dominating Set} and \textsc{Independent Set} for $H$-graphs (\autoref{thm:ds-is}).  Recall that \textsc{Dominating Set} and \textsc{Independent Set}, given a graph $G$ and a positive integer $k$, ask whether $G$ has a dominating set of size at most $k$ and independent set of size at least $k$ respectively. To show hardness, we reduce from the \textsc{Multicolored Clique} problem. This problem, given a graph $G$ with a $k$-partition of its vertex set $V_1,\ldots,V_k$, asks whether $G$ has a $k$-clique with exactly one vertex in each $V_i$ for $i\in\{1,\ldots,k\}$. The problem is well-known to be \classW{1}-complete when parameterized by~$k$~\cite{FellowsHRV09,Pietrzak03}.

\begin{theorem}\label{thm:ds-is}
\textsc{Dominating Set} and \textsc{Independent Set} are \classW{1}-hard for $H$-graphs when parameterized by $k+\|H\|$ and the hardness holds even if an $H$-representation of $G$ is given.
\end{theorem}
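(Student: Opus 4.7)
The plan is to reduce from \textsc{Multicolored Clique}, which is $\classW{1}$-hard parameterized by the number of color classes~$k$. Given an instance $(G, V_1, \dots, V_k)$, I will produce in polynomial time a graph $H$ with $\|H\| = \Oh(k^2)$, an $H$-graph $G'$ together with an explicit $H$-representation, and a parameter $k' = k + \binom{k}{2}$, such that $G$ has a multicolored $k$-clique if and only if $G'$ has an independent set of size at least $k'$; a variant of the same construction will handle the dominating-set case with the same bound. Since both $\|H\|$ and $k'$ depend only on $k$, this yields $\classW{1}$-hardness in the combined parameter $k+\|H\|$ (and, in both directions, the $H$-representation is provided by the reduction itself).

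I would take $H$ to be the multigraph on the vertex set $\{u_1,\dots,u_k\} \cup \{x_{ij} : 1 \le i < j \le k\}$ where, for each pair $i<j$, $x_{ij}$ is joined to $u_i$ and to $u_j$ by \emph{two} parallel edges. Each such edge is subdivided into a path whose $L = \max_i |V_i| + 1$ internal nodes are numbered $1,\dots,L$ from the $u$-side toward $x_{ij}$. After fixing an injection $\pi_i : V_i \to \{1,\dots,L-1\}$ per color class, the reduction introduces, for every $v \in V_i$, a \emph{vertex-gadget} whose model contains $u_i$ and occupies, on the two parallel tracks from $u_i$ to each $x_{ij}$, the prefixes of lengths $\pi_i(v)$ and $L - \pi_i(v)$ respectively; and for every edge $vw \in E(G)$ with $v \in V_i$, $w \in V_j$, an \emph{edge-gadget} whose model contains $x_{ij}$ and occupies, on the four tracks of the pair $(i,j)$, the suffixes complementary to the prefixes used by $v$ and $w$ on the corresponding sides. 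A direct track-by-track case analysis then yields three adjacency properties in $G'$: (a) the vertex-gadgets of $V_i$ form a clique (they share $u_i$); (b) the edge-gadgets of pair $(i,j)$ form a clique (they share $x_{ij}$); and (c) the edge-gadget of $vw$ is non-adjacent to a vertex-gadget $v' \in V_i$ exactly when $v' = v$, and symmetrically on the $V_j$-side.

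Properties (a)--(c) immediately give the \textsc{Independent Set} equivalence with $k' = k + \binom{k}{2}$: by (a) and (b), any independent set of size $k'$ must take exactly one vertex-gadget per class and one edge-gadget per pair; by (c), the chosen edge-gadget of pair $(i,j)$ has to be the one joining the chosen vertex-gadgets of $V_i$ and $V_j$, so these form a multicolored $k$-clique, and the converse is built into the construction. For \textsc{Dominating Set} I would keep the same backbone and add, for each class and each pair, a small ``forcer'' sub-gadget (attached through $\Oh(k^2)$ additional parallel edges to $H$, so that $\|H\|$ stays $\Oh(k^2)$) whose private domination forces any solution of size at most $k'$ to include exactly one vertex-gadget per class and one edge-gadget per pair; property (c) then produces the same correspondence with multicolored $k$-cliques under the same bound $k' = k + \binom{k}{2}$.

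The main obstacle is property (c): on a single track, the prefixes assigned to the vertex-gadgets of $V_i$ are nested, so a suffix placed by an edge-gadget either intersects all longer prefixes or misses all shorter ones, and cannot by itself single out exactly one vertex-gadget. This is precisely why $H$ uses \emph{two} parallel tracks between $u_i$ and $x_{ij}$: the complementary pair of prefix lengths $(\pi_i(v), L - \pi_i(v))$ guarantees that the edge-gadget of $vw$ catches each ``wrong'' vertex-gadget of $V_i$ on exactly one of the two tracks while missing the intended vertex $v$ on both; verifying this amounts to two short inequalities per case. A secondary subtlety is the dominating-set variant, whose forcers must be designed to avoid cheaper shared dominators (ensured by giving each one a private singleton neighborhood in $G'$) and to be $H$-representable without inflating $\|H\|$ beyond $\Oh(k^2)$ (handled by attaching each forcer on its own short subdivided stub).
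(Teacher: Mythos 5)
Your \textsc{Independent Set} reduction is essentially identical to the paper's: the same multigraph $H$ with pair-nodes joined to $u_i$ and $u_j$ by two parallel subdivided tracks, the same complementary prefix/suffix encoding, and the same crucial non-adjacency property (your (c) is exactly the paper's Claim~\ref{claim:star}). That half is correct.

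The \textsc{Dominating Set} half has a genuine gap. You propose to keep the backbone, add forcers so that any solution of size $k'=k+\binom{k}{2}$ must contain exactly one vertex-gadget per class and one edge-gadget per pair, and then invoke property (c). But once one member of each class-clique $Z_i$ and each pair-clique $R_{i,j}$ is selected, \emph{every} vertex of the backbone is already dominated: each $Z_i$ and each $R_{i,j}$ is a clique, so its chosen member dominates all of its other members, and the forcers are dominated by construction. Non-adjacency between a chosen edge-gadget and a chosen vertex-gadget causes no violation, because the edge-gadget is dominated by the chosen member of its own clique $R_{i,j}$. Hence any choice of one gadget per clique is a dominating set of size $k'$, regardless of whether the choices are consistent with an edge of $G$; the reduction is vacuous and every instance maps to a yes-instance. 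The paper avoids this by reducing \textsc{Dominating Set} from \textsc{Multicolored Independent Set} with target size $k$: it adds only $k$ forcers $d_1,\dots,d_k$ with models $\{u_i\}$, so the budget admits one vertex-gadget per class and \emph{no} edge-gadgets, and the edge-gadgets $r_{s,t}^{(i,j)}$ act as checkers --- by Claim~\ref{claim:star}, $r_{s,t}^{(i,j)}$ is left undominated precisely when the selected vertices are $z_s^i$ and $z_t^j$, i.e.\ when the selection uses both endpoints of the edge $v_s^iv_t^j$. To repair your argument you would need to switch to this (or an equivalent) mechanism in which domination of the edge-gadgets, rather than selection of them, encodes the constraint.
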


\begin{proof}
First, we show the \classW{1}-hardness for \textsc{Independent Set} and then explain how to modify the reduction for \textsc{Dominating Set}. The reduction is from \textsc{Multicolored Clique}. 

Let $(G,V_1,\ldots,V_k)$ be an instance of \textsc{Multicolored Clique}. We assume that $k\geq 2$ and $|V_i|=p$ for $i\in\{1,\ldots,k\}$. The second assumption can be made without loss of generality because we always can add isolated vertices to the sets $V_1,\ldots,V_k$ to ensure that they have the same size. Denote by $v_1^i,\ldots,v_p^i$  the vertices of $V_i$ for $i\in\{1,\ldots,k\}$.

\begin{figure}[ht]
\centering
\scalebox{0.75}{\input{Fig1.pdf_t}}
\caption{The construction of $H$ for $k=3$ and the subdivision of the edges of $H$.\label{fig:h}}
\end{figure}

We construct the multigraph $H$ as follows (see \autoref{fig:h} a)).
\begin{enumerate}[(i)]
\item \label{constrind1} Construct $k$ nodes $u_1,\ldots,u_k$.
\item \label{constrind2} For every $i,j \in \{1, \dots, k\}$ with $i<j$, construct a node $w_{i,j}$ and two pairs of parallel edges $u_iw_{i,j}$ and $u_jw_{i,j}$.
\end{enumerate}
Note that $|H|=k(k+1)/2$ and $\|H\|=2k(k-1)$.

Then we construct the subdivision $H'$ of $H$ obtained by subdividing each edge $p$ times. We denote the subdivision nodes for the 4 edges of $H$ constructed for each $i,j \in \{1, \dots, k\}$ with $i<j$ in \eqref{constrind2} by $x_1^{(i,j)},\ldots,x_p^{(i,j)}$, $y_1^{(i,j)},\ldots,y_p^{(i,j)}$, $x_1^{(j,i)},\ldots,x_p^{(j,i)}$ and $y_1^{(j,i)},\ldots,y_p^{(j,i)}$ as it is shown in \autoref{fig:h} b). To simplify notations, we assume that $u_i=x_0^{(i,j)}=y_0^{(i,j)}$, $u_j=x_0^{(j,i)}=y_0^{(j,i)}$ and $w_{i,j}=x_{p+1}^{(i,j)}=y_{p+1}^{(i,j)}=x_{p+1}^{(j,i)}=y_{p+1}^{(j,i)}$.

\begin{figure}[ht]
\centering
\scalebox{0.75}{\input{Fig2.pdf_t}}
\caption{The construction of $G'$.\label{fig:g}}
\end{figure}

Now we construct the $H$-graph $G'$ by defining its $H$-representation $\mathcal{M}=\{M_v\}_{v\in V(G')}$ where the model of each vertex is a connected subset of $V(H')$ (see \autoref{fig:g}). Recall that $G$ is the graph of the original instance of \textsc{Multicolored Clique}.
\begin{enumerate}[(i)]
\item For each $i\in \{1,\ldots,k\}$ and $s\in\{1,\ldots,p\}$, construct a vertex $z_s^i$ with the model 
  \[
    M_{z_s^i}=\bigcup_{j\in\{1,\ldots,k\},j\neq i}\left \{ \left \{x_0^{(i,j)},\ldots,x_{s-1}^{(i,j)} \right \}\cup \left \{y_0^{(i,j)},\ldots, y_{p-s}^{(i,j)}\right\}\right \}.
  \]
\item For each edge $v_s^iv_t^j\in E(G)$, $s,t\in\{1,\ldots,p\}$ and $i,j \in \{1, \dots, k\}$ with $i<j$, construct a vertex $r_{s,t}^{(i,j)}$ with the model 
  \begin{align*}
    M_{r_{s,t}^{(i,j)}}= & \phantom{{}\cup{} {}} \left \{x_{s}^{(i,j)},\ldots,x_{p+1}^{(i,j)} \right\}\\
                         &\cup \left \{y_{p-s+1}^{(i,j)},\ldots,y_{p+1}^{(i,j)} \right \}\\
                         &\cup \left \{x_{t}^{(j,i)},\ldots,x_{p+1}^{(j,i)} \right \}\\
                         &\cup \left \{y_{p-t+1}^{(j,i)},\ldots,y_{p+1}^{(j,i)} \right \}.
  \end{align*}
\end{enumerate}
Note that the neighborhood of $r^{(i,j)}_{s,t}$ is $(V_i \cup V_j) - {v^i_s,v^j_t}$.
Finally, we define $k'=k(k+1)/2$.
We claim that $(G,V_1,\ldots,V_k)$ is a yes-instance of \textsc{Multicolored Clique} if and only if $G'$ has an independent set of size $k'$.
The proof is based on the following crucial property of our construction, that can be easily checked.

\begin{claimn}\label{claim:star}
  For every $i,j \in \{1, \dots, k\}$ with $i<j$, a vertex $z_h^i\in V(G')$ (a vertex $z_h^j\in V(G')$) is not adjacent to a vertex $r_{s,t}^{(i,j)}\in V(G')$ corresponding to the edge $v_s^iv_t^j\in E(G)$ if and only if $h = s$ ($h = t$, respectively).
\end{claimn}
\cqed

We now show that $G'$ has an independent set of size $k'$ if $G$ has a clique of size $k$, and vice-versa.
Let $\{v_{h_1}^1,\ldots,v_{h_k}^k\}$ be a clique of $G$. Consider the set 
\[
  I = \left \{z_{h_1}^1,\ldots, z_{h_k}^k \right \} \cup \left \{ r_{h_i,h_j}^{(i,j)}\mid 0\leq i<j\leq k \right \}
\]
of vertices of $G'$.
It is straightforward to verify using \autoref{claim:star} that $I$ is an independent set of size $k'$ in~$G'$.

Suppose now that $G'$ has an independent set $I$ of size $k'$. For each $i\in\{1,\ldots,k\}$, the set $Z_i=\left \{z_{h}^i\mid 1\leq h\leq p \right \}$ is a clique of $G'$, and for each $i,j \in \{1, \dots, k\}$ with $i<j$,
the set

\[
  R_{i,j}=\left \{r_{s,t}^{(i,j)}\mid 1\leq s,t\leq p,v_s^iv_t^j\in E(G)\right \}
\]
is also a clique of $G'$.  Since all these $k+\binom{k}{2}=k(k+1)/2=k'$ cliques form a partition of $V(G')$, we have that for each 
$i\in\{1,\ldots,k\}$, there is a unique $z_{h_i}^i\in Z_i\cap I$, and for every $i,j \in \{1, \dots, k\}$ with $i<j$, there is a unique $r_{s_i,s_j}^{(i,j)}\in R_{i,j}\cap I$. Since $r_{s_i,s_j}^{(i,j)}$ is not adjacent to $z_{h_i}^i$ and $z_{h_j}^j$, we obtain that $s_i=h_i$ and $s_j=h_j$ by \autoref{claim:star}. It implies that $v_{h_i}^iv_{h_j}^j\in E(G)$. Since it holds for every $i,j \in \{1, \dots, k\}$ with $i<j$, $\left \{v_{h_1}^1,\ldots,v_{h_k}^k \right\}$ is a clique in $G$.

This completes the \classW{1}-hardness proof for \textsc{Independent Set}. Now we explain how we modify our proof to show the \classW{1}-hardness of \textsc{Dominating Set}.
This time we do not reduce from \textsc{Multicolored Clique} but from the \textsc{Multicolored Independent Set} problem that, given a graph $G$ with a $k$-partition of its vertex set $V_1,\ldots,V_k$, asks whether $G$ has a independent set of size  $k$ with exactly one vertex in each $V_i$ for $i\in\{1,\ldots,k\}$. Clearly, the \classW{1}-completeness of \textsc{Multicolored Clique} parameterized by $k$~\cite{FellowsHRV09,Pietrzak03} immediately implies the same for \textsc{Multicolored Independent Set}. 

Let $(G,V_1,\ldots,V_k)$ be an instance of \textsc{Multicolored Independent Set}. We assume without loss of generality that $k\geq 2$ and $|V_i|=p$ for $i\in\{1,\ldots,k\}$. As before, denote by $v_1^i,\ldots,v_p^i$  the vertices of $V_i$ for $i\in\{1,\ldots,k\}$.
We construct the same multigraph $H$ and its subdivision $H'$ as above. We construct the $H$-graph $G''$ from the graph $G'$ constructed above by adding $k$ new vertices $d_1,\ldots,d_k$ with the models $M_{d_i}=\{u_i\}$ for $i\in\{1,\ldots,k\}$.

We show that $(G,V_1,\ldots,V_k)$ is a yes-instance of \textsc{Multicolored Independent Set} if and only if $G''$ has a dominating set  of size $k$.

Suppose that $\left \{v_{h_1}^1,\ldots,v_{h_k}^k \right\}$ is an independent set of $G$. Consider the set $D=\left \{z_{h_1}^1,\ldots, z_{h_k}^k \right\}$. By \autoref{claim:star} and the construction of $G''$, we obtain that $D$ is a dominating set of $G''$.

Let now $D$ be a dominating set of $G''$ with $|D|=k$. Note that each vertex $d_i$ is adjacent only to the vertices of the set $Z_i=\left \{z_{h}^i\mid 1\leq h\leq p \right \}$ for $i\in\{1,\ldots,k\}$. It implies that for every $i\in\{1,\ldots,k\}$,
\[
  D\cap (Z_i\cup\{d_i\})\neq\emptyset .
\]
Since $Z_i$ is a clique, we can assume without loss of generality that $D\cap Z_i\neq\emptyset$ as, otherwise, we can replace $d_i$ in $D$ by an arbitrary vertex of $Z_i$. Since $Z_i\cap Z_j=\emptyset$ if $i\neq j$, we conclude that $D$ contains a unique vertex from each $Z_i$ and no other vertices. Let  $D= \left \{z_{h_1}^1,\ldots,z_{h_k}^k \right \}$. We claim that $I = \left \{v_{h_1}^1,\ldots,v_{h_k}^k \right \}$ is an independent set of $G$. To obtain a contradiction, assume that $v_{h_i}^iv_{h_j}^j\in E(G)$ for some $i,j \in \{1, \dots, k\}$ where $i<j$. Consider the vertex $r_{h_i,h_j}^{(i,j)}$ of $G''$. By \autoref{claim:star}, $r_{h_i,h_j}^{(i,j)}$ is adjacent neither to $z_{h_i}^i$ no $z_{h_j}^j$. Because $r_{h_i,h_j}^{(i,j)}$ is not adjacent to $z_{h_s}^s$ for any $s\in\{1,\ldots,k\}$ such that $s\neq i,j$, we have that $r_{h_i,h_j}^{(i,j)}$ is not dominated by $D$. This contradiction shows the claim and concludes the proof of \autoref{thm:ds-is}.
\qed
\end{proof}

Recall that we proved in \autoref{mim2h} that for every fixed $H$, every $H$-graph has mim-width at most $2\|H\|+1$. We deduce from the negative results above the following corollary.
\begin{corollary}
  \textsc{Dominating Set} and \textsc{Independent Set} are \classW{1}-hard when parameterized by the solution size plus the mim-width of the input.
\end{corollary}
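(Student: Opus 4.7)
The plan is to observe that this is a straightforward consequence of \autoref{mim2h} applied to the reduction behind \autoref{thm:ds-is}. First I would recall that the reduction of \autoref{thm:ds-is} transforms a \textsc{Multicolored Clique} (resp.\ \textsc{Multicolored Independent Set}) instance with parameter $k$ into an $H$-graph $G'$ (resp.\ $G''$) where the graph $H$ satisfies $\|H\| = 2k(k-1)$, and where the target solution size $k'$ is a function of $k$ only ($k'=k(k+1)/2$ for \textsc{Independent Set}, $k' = k$ for \textsc{Dominating Set}). Moreover, the reduction outputs an explicit $H$-representation of the constructed graph.

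Second I would invoke \autoref{mim2h} on this explicit representation: it yields, in polynomial time, a tree decomposition of $G'$ (resp.\ $G''$) of mim-width at most $2\|H\| = 4k(k-1)$. Hence the mim-width of the produced instance is bounded by a function of~$k$.

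Combining these two observations, the reduction of \autoref{thm:ds-is} is in fact a parameterized reduction from \textsc{Multicolored Clique}/\textsc{Multicolored Independent Set} parameterized by $k$ to \textsc{Independent Set}/\textsc{Dominating Set} parameterized by solution size plus mim-width: both components of the new parameter are upper bounded by functions of~$k$, and the reduction runs in polynomial time and preserves yes/no answers. Since the source problems are \classW{1}-hard, the corollary follows. No additional obstacle arises beyond checking that the mim-width of the output can be bounded explicitly, which is exactly what \autoref{mim2h} provides.
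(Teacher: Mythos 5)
Your proposal is correct and matches the paper's argument: the paper likewise deduces the corollary by combining the $\classW{1}$-hardness reduction of \autoref{thm:ds-is} (where $\|H\|$ and the solution size are functions of $k$) with the bound $\mimw(G)\leq 2\|H\|$ from \autoref{mim2h}, so that both components of the new parameter are bounded in terms of $k$. No gap to report.
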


We note that the construction in the proof of \autoref{thm:ds-is} has been adapted in \cite{jaffke17noteKT} to show that the \textsc{Feedback Vertex Set} problem is \classW{1}-hard on $H$-graphs when parameterized by the solution size plus the number of edges of~$H$.

\subsection{Dominating Set for  \texorpdfstring{$T$}{T}-graphs}\label{sec:ds-tree}
In this section we show that \textsc{Dominating Set} is \classFPT for chordal graphs if the problem is parameterized by the leafage (hereafter defined) of the input graph. We stress that our algorithm does not require the intersection representation of the input graph to be given.

Let $G$ be a graph. As it is standard, we say that $u\in V(G)$ \emph{dominates} $v\in V(G)$ if $v\in N_G[u]$ and $u$ dominates a set $W\subseteq V(G)$ if every vertex of $W$ is dominated by $u$. Respectively, a set $D\subseteq V(G)$ dominates $W\subseteq V(G)$ if every vertex of $W$ is dominated by some vertex of $D$.

Let $G$ be a graph. Let $\mathcal{K}$ be the set of (inclusion-wise) maximal cliques of $G$ and let $\mathcal{K}_v\subseteq \mathcal{K}$ be the set of maximal cliques containing $v\in V(G)$. A tree $T$ whose node set is $\mathcal{K}$ such that each $\mathcal{K}_v$ (for $v \in V(G)$) induces a subtree of $T$ is called a \emph{clique tree} of~$G$. It is well-known~\cite{Gavril74} that $G$ is a chordal graph if and only if $G$ has a clique tree $T$. Moreover, if $T$ is a clique tree of $G$, then $G$ is an intersection graph of subtrees of $T$, that is, $G$ is a $T$-graph. Conversely, if $G$ is a $T$-graph, then there is a clique tree $T'$ of $G$ where the number of leaves of $T'$ is at most the number of leaves of $T$.
Note that a clique tree of a chordal graph is not necessarily unique. For a connected chordal graph $G$, the \emph{leafage} $\ell(G)$ of $G$ is the minimum number of leaves in tree $T$ such that $G$ is a $T$-graph \cite{lin1998leafage} which is also, by the above remarks, the minimum number of leaves in a clique tree of $G$.
It was shown by Habib and Stacho in~\cite{HabibS09} that the leafage of a connected chordal graph $G$ can be found in polynomial time. Their algorithm also constructs a corresponding clique tree $T$ with the minimum number of leaves. In other words,  given a connected chordal graph $G$, we can construct in polynomial time a clique tree $T$ with $\ell(G)$ leaves and a $T$-representation of $G$. Also, if one is given a graph $G$ that is promised to be a $T$-graph, then one can produce a clique tree $T'$ of $G$ in polynomial time where the number of leaves of $T'$ is at most the number of leaves of $T$. 

To solve  \textsc{Dominating Set}, we are going to use a dynamic programming algorithm over a clique tree $T'$ of the input graph $G$. However, to do it, we have to treat the vertices of $G$ whose models contain branching nodes of $T'$ in a special way. The vertices of other type, that is, the vertices whose models  contain only subdivision nodes of $T'$ on each path corresponding to an edge of $T$, induce an interval  graph and a minimum dominating set can be selected by a well-known greedy procedure (see, e.g., \cite{Golumbic04}). If the model of a vertex $v$ contains branching nodes, then this vertex can dominate various vertices whose models are in different parts of $T'$ that can be far away from each other and, symmetrically, such a vertex can be dominated by vertices with models that are in different parts of $T'$.  To overcome these difficulties, we show that it is possible to upper bound the number of these vertices in a minimum dominating set. This allows us to guess the structure of models of the vertices in a minimum dominating sets with respect to branching vertices in them. 
Furthermore, we  apply some reduction rules to the input graph $G$ and $T'$ to simplify the models of the vertices of $G$. More precisely, we obtain a representation such that each model contains at most one branching node. We are paying for these reductions by switching to a special labeled variant of \textsc{Dominating Set} called  \textsc{Dominating Set Extension}. Nevertheless, the obtained representation has \emph{local} models and we can use it to construct a dynamic programming algorithm.

Let $T$ be a tree and let $G$ be a connected $T$-graph with its $T$-representation $\mathcal{M}=\{M_v\}_{v\in V(G)}$ with respect to a subdivision  $T'$ of $T$. 
For every non-empty $Q \subseteq V(T)$, we say that $v\in V(G)$ is a $Q$-vertex  if $M_v\cap V(T)=Q$. If $Q=\{u\}$, we write $u$-vertex instead of $\{u\}$-vertex. 
Also, we denote the set of $Q$-vertices by $V_G(Q)$ and $V_G(u)$ if $Q=\{u\}$. We also denote by $V_G(T)$ the set of all $Q$-vertices of $G$ for every non-empty $Q \subseteq V(T)$. In other words, these are the vertices of $G$ whose models contain nodes of~$T$.
For every $e\in E(T)$,  $v\in V(G)$ is an $e$-vertex if $M_v$ contains only subdivision nodes of $T'$ from the path in $T'$ corresponding to $e$ in $T$. 
The set of $e$-vertices is denoted by $V_G(e)$.

We need the following lemma that allows us to upper bound the number of vertices in a minimum dominating set whose models contain given nodes of $T$.

\begin{figure}[ht]
\centering
\scalebox{0.75}{\input{Fig3.pdf_t}}
\caption{The construction of $X=\{x_1,x_2,x_3\}$ for $z_1,z_2,z_3,z_4\in D$; $N_T(X)=\{v_1,\ldots,v_5\}$, $u_1=z_1$, $u_2=z_2$, $u_3=z_4$, $u_4=u_5=z_3$, $u_{x_1x_2}=z_2$ and $u_{x_2x_3}=z_3$. 
\label{fig:lem:upper-a}}
\end{figure}

\begin{lemma}\label{lem:upper-a}
Let $\|T\|\geq 2$ and
let $D$ be a minimum dominating set of $G$. 
Let  also $X\subseteq V(T)$ be a connected 
 set of nodes of $T$ such that
\begin{enumerate}[(i)]
\item for every $x\in X$, there is $u\in D$ with $x\in M_u$; and
\item for every $xy\in E(T)$ with $x,y\in X$, there is 
$u\in D$ with $x,y\in M_u$
\end{enumerate}
(see \autoref{fig:lem:upper-a}).
 Then the set $U=\{u\in D\mid X\cap  M_u\neq\emptyset\}$ contains at most  $|N_T[X]|-1$ vertices.   
\end{lemma}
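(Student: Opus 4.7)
The plan is to establish the bound $|U| \le |N_T[X]|$ by exhibiting an injection $\phi\colon U \to N_T[X]$.

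First I would observe that for each $u \in U$, the set $\hat{M}_u := M_u \cap V(T)$ is a nonempty subtree of $T$. Indeed, since $M_u$ is connected in $T'$ and $T'$ is a tree, the unique $T'$-path between any two branching nodes in $M_u$ must lie in $M_u$ and must pass through all intermediate branching nodes, so the branching nodes of $M_u$ induce a subtree of $T$. Moreover, because $\hat{M}_u \cap X \ne \emptyset$ and $X \subseteq N_T[X]$, we also have $\hat{M}_u \cap N_T[X] \ne \emptyset$.

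I would then set up the bipartite graph $B$ with parts $U$ and $N_T[X]$, placing an edge between $u$ and $z$ whenever $z \in \hat{M}_u$. A matching of $B$ saturating $U$ is precisely the desired injection $\phi$ with $\phi(u) \in \hat{M}_u \cap N_T[X]$, so by Hall's theorem the bound reduces to showing
\[
\Bigl|\bigcup_{u \in S} \hat{M}_u \cap N_T[X]\Bigr| \ge |S| \qquad \text{for every } S \subseteq U.
\]
I would prove this condition by contradiction, using the minimality of $D$. Suppose some $S \subseteq U$ violates Hall's condition. The plan is to build a replacement set $S^* \subseteq V(G)$ with $|S^*| < |S|$ such that $(D \setminus S) \cup S^*$ is still a dominating set of $G$, contradicting the minimality of $D$. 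The representatives forming $S^*$ will be chosen by picking, for each node $z$ in the union $(\bigcup_{u \in S} \hat{M}_u) \cap N_T[X]$, a vertex of $S$ whose model contains $z$ and extends as far as possible along each $T$-branch leaving $z$ away from $X$; together with a small corrective collection accounting for the private-neighbor obligations of the $S$-vertices (which exist because $D$ is minimum), these representatives form $S^*$.

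The hard part will be verifying that $S^*$ indeed dominates every vertex previously dominated by $S$. This requires a careful case analysis on where in $T'$ the models of dominated vertices meet the models of $S$, and is precisely where the inclusion-maximality of $X$ plays its key role: for any node $y \in V(T) \setminus X$ that belongs to some model $M_u$ with $u \in S$, maximality of $X$ forces $y$ to be either in $N_T(X) \setminus X$ (and hence covered by a representative built above) or to have a neighbor $x \in X$ in $T$ with no $D$-vertex containing both $x$ and $y$, which restricts the possible configurations enough to make the replacement analysis go through. Making this replacement construction precise, and reconciling it with the subdivision-node structure of $T'$, is the main obstacle.
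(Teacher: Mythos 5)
Your reduction to Hall's theorem is the wrong move: the injection $\phi\colon U\to N_T[X]$ with $\phi(u)\in M_u\cap V(T)$ that you are trying to produce need not exist, even though the lemma is true. The obstruction is that a vertex $u\in U$ can ``occupy'' the direction from $X$ towards a neighbour $v\in N_T(X)$ without its model containing the branching node $v$ at all: its model may stop at a subdivision node of the path $P_{xv}$ of $T'$. Concretely, let $T$ be a star with centre $x$ and leaves $v_1,\dots,v_5$, each edge subdivided many times, and let $X=\{x\}$ (condition (ii) is vacuous and $X$ is inclusion-maximal as long as no model of a $D$-vertex contains both $x$ and some $v_i$). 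Take $G$ with vertices $u_1,\dots,u_5,w_1,\dots,w_5$, where $M_{u_i}$ consists of $x$ together with the first few subdivision nodes towards $v_i$, and $M_{w_i}$ is a short interval of subdivision nodes meeting $M_{u_i}$ but no other model. Then $D=\{u_1,\dots,u_5\}$ is a minimum dominating set (each $w_i$ forces one of $u_i,w_i$), $U=D$ has $5\le 6=|N_T[X]|$ elements, but $\bigl(\bigcup_{u\in U}M_u\bigr)\cap V(T)=\{x\}$, so for $S=U$ your Hall condition reads $1\ge 5$ and fails. No amount of work in the second half of your plan can rescue this, because you are trying to prove a statement that is false; and in any case that second half (the replacement set $S^*$ and the verification that it still dominates) is exactly the part you declare to be ``the main obstacle'' and leave open, so the proposal is incomplete even on its own terms.

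The paper's proof avoids this by not insisting that the representative for $v\in N_T(X)$ \emph{contain} $v$: for each $v_i\in N_T(X)$ it picks a vertex $u_i\in U$ whose model is \emph{closest to} $v_i$ in the subdivision $T'$, and for each edge $xy$ of $T$ with $x,y\in X$ it picks (using hypothesis (ii)) a vertex $u_{xy}\in U$ whose model contains both $x$ and $y$, hence all of the path $P_{xy}$. Since $X$ induces a subtree of $T$, every node of $T'$ lying in some $M_u$ with $u\in U$ is either on a path between two nodes of $X$ (covered by some $M_{u_{xy}}$) or on a path leaving $X$ towards some $v_i$ (covered by $M_{u_i}$, by the choice of $u_i$ as closest to $v_i$). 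Hence $\bigcup_{u\in U}M_u\subseteq\bigcup_{u\in U'}M_u$ for $U'=\{u_1,\dots,u_s\}\cup\{u_{xy}\}$ with $|U'|\le |N_T(X)|+|X|=|N_T[X]|$, and $(D\setminus U)\cup U'$ is again dominating; minimality of $D$ forces $|U|\le|U'|$. If you want to keep a matching-flavoured argument, the objects to match against are the ``exits'' of $X$ (the edges of $T$ leaving $X$ plus the edges inside $X$), not the nodes of $N_T[X]$ contained in the models.
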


\begin{proof}
Denote by $v_1,\ldots,v_s$ the nodes of  $N_T(X)$.
To obtain a contradiction, assume that $|U|\geq |N_T[X]|=s+|X|$. For each $i\in\{1,\ldots,s\}$, let $u_i\in U$ be a vertex such that the distance between $M_{u_i}$ and $v_i$ in $T'$ is minimum, and   for each $xy\in E(T)$ with $x,y\in X$, let $u_{xy}\in U$ be an arbitrary vertex with $x,y\in M_{u_{xy}}$ (see \autoref{fig:lem:upper-a}).
Let $U'=\{u_1,\ldots,u_\ell\}\cup\{u_{xy}\mid x,y\in X,xy\in E(T)\}$. Note that $|U'|\leq s+\|T[X]\|=s+|X|-1=|N_T[X]|-1$.
Since $X$ induces a subtree of $T$, we have that
$\cup_{u\in U}M_u\subseteq \cup_{u\in U'}M_u$.   This immediately implies that $D'=(D\setminus U)\cup U'$ 
is a dominating set of $G$ contradicting the minimality of~$D$.%
\qed
\end{proof}

In particular, since $|N_T(X)|$ is at most the number of leaves $\ell$, we have that $|U|\leq |X|+\ell-1$. Notice also that $|N_T[X]|-1=\|T[N_T[X]]\|$. 

The next lemma gives an upper bound for the number of vertices in a minimum dominating set whose models contain nodes of $T$ (a very similar bound was given in \cite[Lemma~13]{ChaplickTVZ16}). 

\begin{lemma}\label{lem:upper-b} 
Let $D$ be a minimum dominating set of $G$ and let $\|T\|\geq 1$. Then  $|D\cap V_G(T)|\leq 2|T|-2$.
\end{lemma}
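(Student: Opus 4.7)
The plan is to apply \autoref{lem:upper-a} not once but to a family of disjoint maximal subtrees of $T$ that collectively accommodate every vertex of $D\cap V_G(T)$, and then to sum up the resulting bounds.

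First I would set $Y = \{v\in V(T)\mid \exists u\in D,\ v\in M_u\}$, the branching nodes hit by some dominator's model, and $E^\ast = \{xy\in E(T)\mid \exists u\in D,\ \{x,y\}\subseteq M_u\}$, the ``covered'' edges. The subgraph $T^\circ = (Y, E^\ast)$ is a subforest of $T$; let $X_1,\dots,X_s$ be the vertex sets of its connected components. Because $T$ is a tree, any $T$-edge whose endpoints lie in the same $T^\circ$-component must itself belong to $E^\ast$, so each $X_i$ actually induces a subtree of $T$. Moreover each $X_i$ satisfies conditions (i) and (ii) of \autoref{lem:upper-a} and is inclusion-maximal among subtrees doing so: extending $X_i$ by a vertex $v\notin X_i$ either violates (i) (if $v\notin Y$) or puts $v$ inside $X_i$'s $T^\circ$-component (if $v\in Y$ and the connecting edge is in $E^\ast$).

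Next I would show that the sets $U_i := \{u\in D\mid X_i\cap M_u\neq\emptyset\}$ partition $D\cap V_G(T)$. The key observation is that for every $u\in V(G)$ the set $S_u := M_u\cap V(T)$ is a subtree of $T^\circ$: connectedness of $M_u$ in the subdivision $T'$ forces the $T'$-path between any two branching nodes of $M_u$ to pass through all intermediate branching nodes of the corresponding $T$-path, so $S_u$ is $T$-connected; and every $T$-edge inside $S_u$ is covered by the single vertex $u$, hence lies in $E^\ast$. Consequently $S_u$ is contained in a unique component $X_i$, so the $U_i$'s are disjoint and their union is $D\cap V_G(T)$. Applying \autoref{lem:upper-a} to each $X_i$ yields $|U_i|\le |N_T[X_i]|$, and therefore $|D\cap V_G(T)| = \sum_i |U_i| \le \sum_i |N_T[X_i]|$.

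Finally I would bound $\sum_i |N_T[X_i]|$ by double counting. Since the $X_i$'s partition $Y$, we have $\sum_i |X_i|=|Y|\le |V(T)|$. For each $v\in V(T)$ the number of indices $i$ with $v\in N_T(X_i)\setminus X_i$ is at most $d_T(v)$, because any such occurrence requires a distinct $T$-neighbor of $v$ to lie in $X_i$ and the components are pairwise disjoint. Summing gives $\sum_i |N_T(X_i)\setminus X_i|\le \sum_{v\in V(T)} d_T(v) = 2(|V(T)|-1)$, so $\sum_i |N_T[X_i]| \le |V(T)| + 2(|V(T)|-1) = 3|V(T)|-2$, as needed. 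The step I expect to be the main obstacle is precisely the subtree-and-covered-edges property of $S_u = M_u\cap V(T)$: it is what forces every vertex of $D\cap V_G(T)$ into a single component of $T^\circ$, and without it the $U_i$'s could overlap and the additive bound would break down.
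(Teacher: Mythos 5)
Your proof is correct and takes essentially the same route as the paper's: partition the set of branching nodes hit by $D$ into the connected pieces determined by the ``covered'' edges, apply \autoref{lem:upper-a} to each piece $X_i$, and bound $\sum_i |N_T[X_i]|$ by $|V(T)|+2|E(T)|=3|V(T)|-2$ via the tree's edge count. The paper's version is terser (it does not spell out that the $U_i$ are pairwise disjoint, which in any case is not needed for an upper bound since the $U_i$ already cover $D\cap V_G(T)$), but the decomposition and the counting are the same.
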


\begin{proof}
Let $D$ be a minimum dominating set of $G$. Consider the set $W$ of nodes of $T$ that are included in the models of the vertices of $D$.  
Let $X_1,\ldots,X_r$ be the partition of $W$ into inclusion maximal connected subsets such that for each $i\in\{1,\ldots,r\}$ and adjacent $x,y\in X_i$, there is $u\in D$ with $x,y\in M_u$. By \autoref{lem:upper-a} and the fact that each edge of $T$ belongs to at most two subtrees $T[N_T[X_i]]$, 
$$ |D\cap V_G(T)| \leq \sum_{i=1}^r\|T[N_T(X_i)]\|\leq 2\|T\|= 2|T|-2.$$
\qed
\end{proof}

For an edge $e\in E(T)$, we say that $G'$ is obtained \emph{by contracting $e$ in $T$} if $G'$ is the $(T/e)$-graph with the model obtained as follows: 
\begin{enumerate}[(i)]
\item contract $xy$ in $T$ and, respectively, the $(x,y)$-path $P$ in $T'$, and denote the node obtained from $x$ and $y$ by $z$,
\item delete all $e$-vertices of $G$, 
\item for each remaining vertex $u\in V(G)$, delete from $M_u$ the subdivision nodes of $P$ and replace $x$ and $y$ by $z$ if at least one of these nodes is in $M_u$.
\end{enumerate}
Note that $V(G')\subseteq V(G)$ and $G[V(G')]$ is a subgraph of $G'$ but not necessarily induced since two vertices of $G'$ that are not adjacent in $G$ could be adjacent in $G'$.

Consider a \emph{coloring} $c\colon V_G(T)\rightarrow \{1,\ldots,2^{|T|}\}$ such that for $u,v\in V_G(T)$, $c(u)=c(v)$ if and only if $u$ and $v$ are $Q$-vertices for the same $Q \subseteq V(T)$. 
The next lemmas are used to simplify the models of vertices of $G$ by contracting edges of $T$.

\begin{lemma}\label{lem:contraction}
Let $C\subseteq c(V_G(T))$ and let
\[
  A=\{xy\in E(T)\mid x,y\in M_u\text{ for some~}u\in V_G(T)\text{~such that~} c(u)\in C\}.
\]
Further, let $G'$ be the graph obtained from $G$ by iteratively contracting edges of $A$ in $T$.  Then for any set $D\subseteq V(G)$, $D$ is a minimum dominating set of $G$ satisfying the condition $C= c(D\cap V_G(T))$ if and only if $D$ is a minimum dominating set of $G'$ satisfying the same condition  $C=c(D\cap V_{G}(T))$.
\end{lemma}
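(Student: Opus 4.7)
The plan is to prove the ``iff'' by showing that, for sets $D$ satisfying the color condition $C=c(D\cap V_G(T))$, being a dominating set of $G$ and being a dominating set of $G'$ agree (up to verifying $D\subseteq V(G')$), so since $|D|$ is intrinsic to $D$, the minimum is preserved.

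The key observation, used throughout, is the following. For every $e=xy\in A$, by definition of $A$ there exists some $u\in V_G(T)$ with $c(u)\in C$ and $\{x,y\}\subseteq M_u$. Since $C=c(D\cap V_G(T))$, some $u'\in D$ has $c(u')=c(u)$, so $M_{u'}\cap V(T)=M_u\cap V(T)\supseteq\{x,y\}$. By connectivity of $M_{u'}$ in the subdivision $T'$, it must contain the entire path $P_e$ of $T'$ obtained by subdividing $e$, so $M_{u'}\supseteq V(P_e)$. Consequently, $u'$ dominates in $G$ every vertex whose model intersects $V(P_e)$, including every $e$-vertex.

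I would then establish three claims. (i) If $D$ is a minimum dominating set of $G$ satisfying the condition, then $D\subseteq V(G')$: any $e$-vertex $w\in D$ with $e\in A$ could be deleted because the witness $u'$ already dominates $N_G[w]$ (since $M_w\subseteq V(P_e)\subseteq M_{u'}$), contradicting minimality. (ii) If $D\subseteq V(G')$ satisfies the color condition and dominates $G$, then $D$ dominates $G'$: each $v\in V(G')$ is dominated in $G$ by some $d\in D\subseteq V(G')$, and since $G[V(G')]$ is a subgraph of $G'$, the adjacency survives in $G'$. (iii) If $D\subseteq V(G')$ satisfies the color condition and dominates $G'$, then $D$ dominates $G$: vertices of $V(G)\setminus V(G')$ are $e$-vertices for some $e\in A$, hence dominated by the appropriate $u'\in D$; while for $v\in V(G')$ dominated in $G'$ by $d\in D$, either $d,v$ are already adjacent in $G$, or the new adjacency in $G'$ arises because $M_d$ and $M_v$ both meet $V(P_e)$ for some $e\in A$, in which case $M_v$ meets $V(P_e)\subseteq M_{u'}$, so $v$ is dominated by $u'\in D$.

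Combining these with the fact that $|D|$ does not depend on the ambient graph yields the lemma: if $D$ is a minimum dominating set of $G$ satisfying the color condition, (i) and (ii) place $D$ among the dominating sets of $G'$ satisfying the condition; a strictly smaller such $D'$ would by (iii) be a dominating set of $G$ with the same condition and smaller size, contradicting minimality. The converse direction is symmetric. The main subtlety is claim (iii): the contraction introduces ``phantom'' edges in $G'$ that are absent from $G$, and these must be matched to dominators back in $G$. The color condition is precisely what supplies, for every contracted edge, a witness $u'\in D$ whose model covers all of $V(P_e)$ and thereby absorbs these phantom adjacencies.
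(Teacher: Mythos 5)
Your proof is correct and follows essentially the same route as the paper's: minimality together with the colour condition forces $D\subseteq V(G')$ (an $e$-vertex of $D$ with $e\in A$ is made redundant by the colour witness, whose model contains all of $V(P_e)$ since $T$ is a tree), and domination transfers in both directions because $G[V(G')]$ is a subgraph of $G'$ while every adjacency created by the contraction is absorbed by such a witness in $D$. The only (harmless) imprecision is in your claim (iii): when several edges of $A$ are contracted to a single node, $M_d$ and $M_v$ may meet \emph{different} paths $P_e$ of that contracted component rather than a common one, but your argument only uses that $M_v$ meets $V(P_e)$ for \emph{some} $e\in A$, which still holds.
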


\begin{proof}
Before we start proving the lemma, observe that $V_G(T)\subseteq V(G')$ and that $V_G(T)$ is the set of $Q$-vertices of $G'$ for non-empty subsets $Q$ of the set of nodes of $T/A$. Note also that for $G'$ and $T/A$, the coloring $c$ does not necessarily have the property that if $u$ and $v$ are $Q$-vertices of the same $Q$, then $c(u)=c(v)$.

Suppose that $D$ is a dominating set of $G$ with  $C= c(D\cap V_G(T))$ that has the minimum size. We claim that $D\subseteq V(G')$. To see it assume that there is $u\in D\setminus V(G')$. Then $u$ is an $e$-vertex of $G$ for some $e=xy\in A$. Then there is $v\in V_G(T)$ such that $x,y\in M_v$ and $c(v)\in C$. Since $D$ has a vertex $v'$ with $c(v')=c(v)$, we have that $x,y\in M_{v'}$. Clearly, $M_u\subseteq M_{v'}$. This implies that $D\setminus\{u\}$ is a dominating set of $G$. Since $u$ is an $e$-vertex, $u$ is not colored and the obtained dominating set contains vertices with all the colors from $C$, but this contradicts the choice of $D$. Therefore, $D\subseteq V(G')$. Because $G[V(G')]$ is a subgraph of $G'$, we have that $D$ is a dominating set of $G'$.

Let now $D$ be a dominating set of $G'$ with $C= c(D\cap V_G(T))$ that has the minimum size. We show that $D$ is a dominating set of $G$. 

First, we prove that $D$ dominates every  $u\in V(G)\setminus V(G')$. If $u\in V(G)\setminus V(G')$, then $u$ is an $e$-vertex for some $e=xy\in A$. We have that there is $v\in V_G(T)$ such that $x,y\in M_v$ and $c(v)\in C$. Because $C\subseteq c(D\cap V_{G}(T))$, there is $v'\in D$ with $c(v')=c(v)$. This implies that $x,y\in M_{v'}$ in $G$ and, therefore, $v'$ dominates $u$.

Now we show that $D$ dominates the vertices of $V(G')$ in $G$. To obtain a contradiction, assume that there is $u\in V(G')$ that is not dominated by $D$ in $G$. As $D$ is a dominating set of $G'$, there is $v\in D$ such that $uv\in E(G')$ and $uv\notin E(G)$. It follows that there are $x,y\in V(T)$ such that $x\in M_u$, $y\in M_v$ and for the $(x,y)$-path $P$ in $T$, $E(P)\subseteq A$.  Let $xy'$ be the edge of $P$ that is incident to $x$. Since $xy'\in A$, there is $w\in V_G(T)$ such that $x,y\in M_w$  in $G$ and $c(w)\in C$. Because $C\subseteq c(D\cap V_{G}(T))$, there is $w'\in D$ with $c(w')=c(w)$. This implies that $x\in M_{w'}$ in $G$ and, therefore, $w'$ dominates $u$.
\qed
\end{proof}

\begin{figure}[ht]
\centering
\scalebox{0.75}{\input{Fig4.pdf_t}}
\caption{ $A'=\{xy\}$; the models of vertices of $V_G(T)$ with colors from $C$ are shown by solid lines and the other models are shown by dashed lines.   
\label{fig:lem:contr-uncol}}
\end{figure}

\begin{lemma}\label{lem:contraction-uncolored}
Let $C\subseteq c(V_G(T))$ and  
\begin{align*}
A'=\{xy\in E(T)\mid& x,y\in M_u\text{ for some~}u\in V_G(T)\text{~s.t.~}c(u)\notin C\text{ and }\\
&x,y\notin M_v\text{ for all~}v\in V_G(T)\text{~s.t.~} c(v)\in C\},
\end{align*}
and assume that for every $e\in A'$, $G$ has no $e$-vertices (see \autoref{fig:lem:contr-uncol} for an example). 
Further, let $G'$ be the graph obtained from $G$ by iteratively contracting of edges of $A'$ in $T$.  Then for any set $D\subseteq V(G)$, $D$ is a minimum dominating set of $G$ satisfying the condition $C= c(D\cap V_G(T))$ if and only if $D$ is a dominating set of $G'$ satisfying the same condition  $C= c(D\cap V_{G}(T))$.
\end{lemma}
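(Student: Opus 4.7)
The proof follows the structure of Lemma~\ref{lem:contraction}. First, the hypothesis that no $e\in A'$ has an $e$-vertex means that step~(ii) of the contraction removes nothing, so $V(G) = V(G')$. Moreover, for every $v\in V_G(T)$ the set $M_v$ still meets $V(T/A')$ after contraction (the image of any branching node in $M_v$ lies in $V(T/A')$), so $V_G(T) = V_{G'}(T/A')$ and the coloring $c$ and constraint $C = c(D\cap V_G(T))$ transfer unchanged between $G$ and $G'$. Contraction can only create, never destroy, intersections between models, hence $E(G) \subseteq E(G')$.

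I will prove the stronger equivalence: for every $D \subseteq V(G)$ with $c(D\cap V_G(T)) = C$, $D$ dominates $G$ if and only if $D$ dominates $G'$. This immediately yields the equivalence of minimum dominating sets under the color constraint (a strictly smaller dominating set on one side would transfer to the other), which is what the lemma asserts. The forward direction is trivial from $E(G) \subseteq E(G')$.

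For the converse, let $D$ dominate $G'$ and satisfy $c(D\cap V_G(T)) = C$, take $u \in V(G)$ and $v \in D$ with $uv \in E(G')$, and assume toward a contradiction that $uv \notin E(G)$, i.e.\ $M_u \cap M_v = \emptyset$. If either $u$ or $v$ lies outside $V_G(T)$ (say $u$, by symmetry), then by hypothesis $u$ is an $e$-vertex for some $e\notin A'$; the subdivision path of $e$ is untouched by the contraction, so $M'_u = M_u$, and comparing pre-images, any common node of $M'_u$ and $M'_v$ in $T'/A'$ must already lie in $M_u \cap M_v$, a contradiction. Otherwise $u,v \in V_G(T)$: the sets $M_u \cap V(T)$ and $M_v \cap V(T)$ are nonempty disjoint subtrees of $T$, and the unique $T$-path between them $P = x_0 x_1 \cdots x_k$, with $x_0 \in M_u$, $x_k \in M_v$ and $x_1,\dots,x_{k-1} \notin M_u \cup M_v$, must consist entirely of edges of $A'$ (otherwise $M'_u$ and $M'_v$ would remain separated in $T'/A'$). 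Finally, since $v \in D \cap V_G(T)$ forces $c(v)\in C$, the defining property of $A'$ applied to the edge $x_{k-1}x_k$ gives $x_k \notin M_v$, contradicting $x_k \in M_v$.

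The main subtlety is the case $u,v \in V_G(T)$: one has to extract, from a newly-created overlap between $M'_u$ and $M'_v$ in $T'/A'$, an $A'$-only $T$-path between $M_u\cap V(T)$ and $M_v\cap V(T)$. Once this structural translation is made, the defining property of $A'$, which bars vertices of color in $C$ from touching either endpoint of an $A'$-edge, supplies the final contradiction; this is the key difference with Lemma~\ref{lem:contraction}, whose argument instead uses that $C$-colored vertices of $D$ themselves reproduce the contracted edges.
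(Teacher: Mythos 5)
Your proof is correct and follows essentially the same route as the paper's: both directions rest on the observation that an edge of $G'$ absent from $G$ can only arise via a $T$-path of $A'$-edges between branching nodes of the two models, and that the definition of $A'$ forbids any vertex coloured in $C$ (hence any vertex of $D\cap V_G(T)$) from having such an endpoint in its model. Your explicit treatment of the case where one endpoint is an $e$-vertex is a small extra detail the paper leaves implicit, but the argument is the same.
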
 

\begin{proof}
Observe that $V(G')=V(G)$, because we do not delete $e$-vertices when we contract $A'$. Notice also that the model $M_u$ of a vertex $u\in V_G(T)$ with $c(u)\in C$ remains the same in the modified representation obtained by the contraction. 

If $D$ is a dominating set of  $G$ with  $C=c(D\cap V_G(T))$, then it is straightforward  to verify that $D$ is a dominating set of $G'$. Suppose that 
$D$ is a dominating set of  $G'$ with  $C=c(D\cap V_G(T))$. We have that $G$ is a subgraph of $G'$. Suppose that $u,v\in V(G)$ are adjacent in $G'$ but are not adjacent in $G$. It follows that there are $x,y\in V(T)$ such that $x\in M_u$, $y\in M_v$ and there is an $(x,y)$-path $P$ in $T$ such that $E(P)\subseteq A'$.
Then $c(u),c(v)\notin C$ by the definition of $A'$. 
Hence, $u,v\notin D$. We obtain that for every $u,v\in V(G)$ that are adjacent in $G'$ but are not adjacent in $G$, $u,v\notin D$. 
Hence, $D$ is a dominating set of $G$.%
\qed
\end{proof}

We say that $\mathcal{M}$ is a \emph{nice} representation if $|M_v\cap V(T)|\leq 1$ for each $v\in V(G)$, i.e., each set $M_v$ contains at most one branching node of $T'$.

We say that a $T$-representation $\mathcal{M}=\{M_v\}_{v\in V(G)}$ of $G$ with respect to a subdivision  $T'$ is an \emph{$r$-rooted} representation if a node $r$ of $T$ is chosen to be a root. The root defines the parent-child relation on $V(T)$ and $V(T')$. For $x\in V(T)$ ($x\in V(T')$), we denote by $T_x$ ($T_x'$ respectively) the subtree of $T$ ($T'$ respectively) induced by $x$ and its descendants.  For $x\in V(T)$ and any child $y$ of $x$, we denote by $T_{xy}$  the subtree of $T$ induced by $x$, $y$ and the descendants of $y$.
For each $x \in V(T)$, we use $V_x(G)$ to denote the vertices of whose models
use nodes of the subtree rooted at $x$, that is, \[V_x(G)=\{v\in V(G)\mid M_v\cap V(T_x')\neq\emptyset\}.\]
For a child $y$ of $x$, we denote by $V_{xy}(G)$ the union of the set of vertices of $V_y(G)$ and the $xy$-vertices, that is,
\[V_{xy}(G)=\{v\in V(G)\mid x\notin M_v\text{ and }M_v\cap V(T_{xy}')\neq\emptyset\}.\]

Let $\mathcal{M}=\{M_v\}_{v\in V(G)}$ be a nice $r$-rooted $T$-representation  of $G$ with respect to $T'$. For any vertex $x\in V(T')$, we denote by $\dist(x)$ the distance between $r$ and $x$ in $T'$. For every set $X\subseteq V(T')$, we define
\[
  d_{\min}(X)=\min\{\dist(x)\mid x\in X\} \quad \text{and} \quad d_{\max}(X)=\max\{\dist(x)\mid x\in X\}.
\]
Consider an edge $e=xy\in E(T)$, where $y$ is a child of $x$, and denote by $P_e$ the $(x,y)$-path in $T'$ corresponding to $e$. 
Consider $u,v\in V(G)$ such that $M_u^e=M_u\cap V(P_e)\neq\emptyset$ and $M_v^e=M_v\cap V(P_e)\neq\emptyset$. We write $u\preceq_e v$ if either
\begin{itemize}
\item $d_{\min}(M_u^e)<d_{\min}(M_v^e)$; or
\item $d_{\min}(M_u^e)=d_{\min}(M_v^e)$ and $d_{\max}(M_u^e)\leq d_{\max}(M_v^e)$.
\end{itemize}
Respectively, $u\prec_e v$ if either
\begin{itemize}
\item  $d_{\min}(M_u^e)<d_{\min}(M_v^e)$; or 
\item $d_{\min}(M_u^e)=d_{\min}(M_v^e)$ and $d_{\max}(M_u^e)< d_{\max}(M_v^e)$.
\end{itemize}

We consider the following auxiliary problem for $T$-graphs with nice representations.

\begin{center}
  \fbox{\begin{minipage}{0.95\textwidth}
      \noindent\textsc{Dominating Set Extension}
        \begin{description}
        \item[Input:] A tree $T$  and a  graph $G$ with a  
$T$-representation  of $G$,
 positive integers $k$ and $d$,
          a \emph{labeling} function $c\colon \bigcup_{x\in V(T)}V_G(x)\rightarrow \mathbb{N}$,   and a collection of sets $\{C_x\}_{x\in V(T)}$ of size at most $d$ where for each $C_x\subseteq c(V_G(x))$   (some sets could be empty) such that for  every dominating set $D$ of $G$ of minimum size with the properties that  
          \begin{itemize}   
          \item[(a)]  $D$ has  at most $d$ $x$-vertices for each $x\in V(T)$,
          \item[(b)] for each $x\in V(T)$, $C_x\subseteq c(D\cap V_G(x))$,
          \end{itemize} 
          it holds that the number of nodes $x\in V(T)$ such that $D$ contains an $x$-vertex is maximum and  for each $x\in V(T)$, $C_x= c(D\cap V_G(x))$.
        \item[Task:] Decide whether there is a dominating set $D'$ of $G$ of size at most $k$ containing at most  $d$ $x$-vertices for $x\in V(T)$ such that
          for each $x\in V(T)$,  $C_x=c(D'\cap V_G(x))$.
        \end{description}
      \end{minipage}}
  \end{center}
  
Note that \textsc{Dominating Set Extension} is a \emph{promise} problem: we are promised that there is $D$ with the described properties but $D$ itself is not given.
Moreover,  the promise could be false but we are not asked  to verify it.

\begin{lemma}\label{lem:promise}
Given a nice $r$-rooted representation $T$ of the input graph where $T$ is a tree with at most $\ell$ leaves,
  \textsc{Dominating Set Extension} can be solved in time $2^{\Oh((d+\ell)\log d)}\cdot n^{\Oh(1)}$.  Moreover, it can be done by an algorithm that either returns a correct yes-answer or (possibly incorrect) no-answer even if the promise is false.
\end{lemma}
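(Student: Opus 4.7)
The plan is a bottom-up dynamic programming on the tree $T$ rooted at $r$, leveraging the fact that although $|V(T)|$ may be as large as $n$, the tree has only $\ell$ leaves and hence at most $\ell-2$ nodes of degree $\geq 3$; call these $\Oh(\ell)$ nodes the \emph{structural} nodes of $T$. Between two consecutive structural nodes lies a chain of degree-$2$ nodes which, together with the subdivision nodes of $T'$ between them, behaves as a path; along such a path the subproblem reduces to an interval-cover-type task that admits a polynomial-time subroutine.

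At each node $x$ of $T$ the DP table is indexed by a \emph{boundary configuration}, which records: (i) the multiset of colors $c(u)$ of the $x$-vertices $u$ placed in the partial dominating set (whose underlying set must equal $C_x$ for a valid run, so at most $d$ distinct labels with total multiplicity at most $d$), and (ii) for each chosen $x$-vertex and each edge $e$ of $T$ incident to $x$, how far its model extends along the subdivision path $P_e$. The niceness hypothesis $|M_v\cap V(T)|\leq 1$ is essential here: it guarantees that no model of $G$ spans two branching nodes, so all interaction between the subtrees meeting at $x$ is mediated by the $x$-vertices recorded in the boundary configuration. The value stored at each entry is the minimum size of a partial dominating set in the subtree of $x$ consistent with that configuration.

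Propagation through a maximal path of degree-$2$ nodes between two structural nodes is performed by the interval subroutine: given the boundary data at both ends (i.e., which positions of the corresponding path of $T'$ are already dominated from outside), decide which $e$-vertices to add to $D$ so as to complete the domination at minimum cost. Merging children subtables at a structural node $x$ consists of enumerating the $d^{\Oh(d)}$ possible boundary configurations at $x$, and for each configuration looking up a compatible entry in each child's table and summing the partial costs. The answer is read off at the root $r$, returning \textbf{yes} iff the best admissible total is at most $k$.

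The running time is bounded by $d^{\Oh(d)}$ per structural node (from enumerating configurations), $\Oh(\ell)$ structural nodes, an additional factor of $d^{\Oh(\ell)}$ coming from the coordination of extension bookkeeping across the $\Oh(\ell)$ inter-structural paths, and polynomial path-subroutine costs, for a total of $2^{\Oh((d+\ell)\log d)}\cdot n^{\Oh(1)}$. The main obstacle is pinning down the boundary configuration so that it is (a) small enough to fit this budget, (b) rich enough to certify correct domination of $e$-vertices and of $x$- and $y$-vertices whose models meet the relevant paths, and (c) efficient to match during merges; once done, correctness follows by induction on subtrees. The promise is never used by the algorithm: it returns \textbf{yes} precisely when it finds a dominating set $D'$ of size $\leq k$ satisfying $C_x=c(D'\cap V_G(x))$ for all $x$, and \textbf{no} otherwise, so a \textbf{no} answer is always safe regardless of whether the promise holds.
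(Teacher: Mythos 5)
Your high-level architecture matches the paper's: a bottom-up DP over $T$, a polynomial greedy/interval subroutine for dominating the $e$-vertices along each subdivision path $P_e$, and an enumeration at branching nodes whose cost is controlled by $d$ and $\ell$. However, there is a genuine gap exactly at the point you flag as "the main obstacle": the boundary configuration you propose records, for each of up to $d$ chosen $x$-vertices and each of up to $\ell+1$ incident edges, \emph{how far its model extends along $P_e$}. Each such extent is one of up to $n$ positions on $P_e$, so the table has $n^{\Theta(d\ell)}$ entries; that is XP, not the claimed $2^{\Oh((d+\ell)\log d)}n^{\Oh(1)}$, and your later accounting of "$d^{\Oh(d)}$ per structural node" plus "$d^{\Oh(\ell)}$ for coordination" does not cover it. The paper's resolution, which is the real content of the proof, is to order the vertices meeting $P_e$ by a relation $\preceq_e$ (nearest/farthest endpoint from the root) and to index the table at $x$ by a \emph{single} vertex index $i$: for a loaded node, the identity of one designated chosen $x$-vertex; for an unloaded node, the threshold index beyond which $x$-vertices are already dominated from below. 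All remaining coordination is pushed into the transition, which enumerates partitions of the children into at most $d$ groups (each group served by one chosen $x$-vertex) together with a surjection onto $C_x$; this is where the $2^{\Oh(\ell\log d)}\cdot 2^{\Oh(d\log d)}$ factor actually comes from.

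A second, related issue: you assert the promise is never used. It is used, and essentially so. The compressed single-index state is not rich enough to represent every feasible extension; the paper's DP deliberately sets entries to $+\infty$ when every witnessing partial solution would contradict the promise (e.g., when one could strictly gain by inserting an extra $x$-vertex, violating the maximality condition in the promise). Completeness — returning yes whenever a valid $D'$ of size at most $k$ exists — is only proved for instances satisfying the promise; that is precisely why the lemma allows an incorrect no-answer. Your argument that a yes-answer can be certified is fine, but without the state compression and the promise-based pruning you have neither the running time nor the completeness direction.
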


\begin{proof}
Let $G$ be a $T$-graph for a tree $T$ rooted in $r$ that has a nice $r$-rooted $T$-representation $\mathcal{M}$ with respect to a subdivision $T'$ of $T$. 
If $G$ is disconnected, then we reduce the problem to solving \textsc{Dominating Set Extension} for the components of $G$. Assume from now that $G$ is connected.
Let also $k$ and $d$ be positive integers,  $c\colon \bigcup_{x\in V(T)}V_G(x)\rightarrow \mathbb{N}$ be a labeling function,   and $\{C_x\}_{x\in V(T)}$ be  a collection of sets  where each $C_x\subseteq c(V_G(x))$.
Since $G$ is connected, we can assume without loss of generality that for every $x\in V(T')$, there is $v\in V(G)$ with $x\in M_v$. Otherwise, we can replace $T$ by its subtree or reduce the number 
of subdivision nodes of $T'$
 without increasing the number of leaves. We also assume that $|C_x|\leq d\leq |V_G(x)|$ as, otherwise, we have a trivial no-instance of \textsc{Dominating Set Extension}.
We construct a dynamic programming algorithm for the problem that finds the minimum size of a  dominating set $D$ of $G$  containing at most  $d$ $x$-vertices for $x\in V(T)$ such that
 for each $x\in V(T)$,  $C_x=c(D\cap V_G(x))$. Our algorithm assumes that the promise is fulfilled for the considered instance of \textsc{Dominating Set Extension}.
The algorithm uses the properties that every node of $T$ has at most $\ell$ children, i.e., the number of the children is bounded by the parameter, 
and for each edge $e$ of $T$, the set of $e$-vertices of $G$ composes an interval graphs for which the domination problem can be solved efficiently.

First, we construct a subroutine that solves the following auxiliary problem for each $e=xy\in E(T)$. Let $P_e$ be the $(x,y)$-path in $T'$ corresponding to $e$.
Let $U_e=\{v\in V(G)\mid M_v\cap V(P_e)\neq\emptyset\}$. For $X\subseteq U_e$,  $\alpha_e(X)$ is the minimum size of a set $S$ of $e$-vertices of $G$ that dominates $X$; we assume that $\alpha_e(X)=0$ if $X=\emptyset$ and $\alpha_e(X)=+\infty$ if such a dominating set of $e$-vertices $S$ does not exist.

\begin{claimn}\label{claimA}
  For every $e\in E(T)$ and  $X\subseteq U_e$, $\alpha_e(X)$ can be computed in time $n^{\Oh(1)}$.
\end{claimn}

\begin{proof}
If $X=\emptyset$, then $\alpha_e(X)=0$ by the definition. Assume that $X\neq\emptyset$.
If there is a vertex in $X$ that is not dominated by any $e$-vertex of $G$, then we set $\alpha_e(X)=+\infty$. Otherwise, it is straightforward to see that $S$ exists, and we construct $S$ using the well-known greedy approach for constructing a minimum dominating set in an interval graph. 

Initially, we set $S=\emptyset$ and then increase it iteratively until all the vertices of $X$ are dominated. Denote by $Y\subseteq X$ the set of vertices that are not dominated by the the current $S$. Then we do the following:
\begin{enumerate}
\item Find a vertex $w$ in $Y$ that is maximum with respect to the ordering $\preceq_e$. 
\item Find a minimum with respect to $\preceq_e$ $e$-vertex $v$ that dominates $w$, set $S=S\cup \{v\}$ and recompute $Y$.
\item If $Y\neq\emptyset$, then return to Step~1.  
\end{enumerate}

It is straightforward to verify that the algorithm correctly computes $\alpha_e(X)$ in polynomial time. This proves \autoref{claimA}.%
\cqed
\end{proof}

We say that a node $x\in V(T)$ is \emph{loaded} if $C_x\neq\emptyset$ and $x$ is \emph{unloaded} otherwise.
We say that a set of vertices $S$ is \emph{extendable} if there is a dominating set $D$  such that 
\begin{enumerate}[(a)]
\item $D$ has  at most $d$ $x$-vertices for $x\in V(T)$,
\item for each $x\in V(T)$, $C_x\subseteq c(D\cap V_G(x))$,
\item $D$ has the minimum size, contains $S$, and the conditions of the promise are fulfilled:
 the number of nodes $x\in V(T)$ such that $D$ contains an $x$-vertex is maximum and  for each $x\in V(T)$, $C_x= c(D\cap V_G(x))$.
\end{enumerate} 

We are ready to explain our dynamic programming algorithm for \textsc{Dominating Set Extension}. It works on $T$ starting from the leaves and moving towards the root.
To avoid dealing with the root that has no parent separately, we add an artificial node $r'$ to $T$ and $T'$ and make $r'$ the parent of $r$ and the new root. Observe that $r'$ is the unique node of the tree that is not included in $M_v$ for any $v\in V(G)$. 

We start with defining the tables of data that the algorithm stores for each $x\neq r'$ of $T$. Let $y$ be a parent of $x$ and let $e=yx$. Consider the set 
$W=\{v_1,\ldots,v_p\}$ of $x$-vertices of $G$ and assume that $v_1\preceq_e\ldots\preceq_e v_p$ (note that if $x=r$ and $y=r'$, then the ordering is arbitrary). 
The tables constructed in different ways for loaded and unloaded vertices, because if $x$ is loaded, then some vertices whose models contain $x$ should be included in every partial solutions and,  
otherwise, the partial solutions should exclude such vertices.     

If $x$ is loaded, then for $x$ and $i\in \{1,\ldots,p\}$, the algorithm stores the value $\beta(x,i)$ that  is either the minimum size of a set $S\subseteq V_x(G)$ such that 
\begin{enumerate}[(i)]
\item \label{beta1} $v_i\in S$,
\item $S$ contains at most $d$ $z$-vertices for each $z\in V(T_x)$,
\item for each $z\in V(T_x)$, $C_z=c(S\cap V_G(z))$, and
\item \label{beta4} $S$ dominates all the vertices of $V_x(G)$,
\end{enumerate}
or we may set $\beta(x,i)=+\infty$ if we detect that there is no $S\subseteq V_x(G)$ satisfying \eqref{beta1}--\eqref{beta4} such that $S$ is extendable. In particular, it can happen if every set $S$ satisfying \eqref{beta1}--\eqref{beta4} contradicts the promise of \textsc{Dominating Set Extension}.

Similarly,
if $x$ is unloaded, then for $x$ and $i\in\{0,\ldots,p\}$ the algorithm stores the value $\gamma(x,i)$ that  is either the minimum size of a set $S\subseteq V_x(G)$ such that 
\begin{enumerate}[(i)]
  \setcounter{enumi}{4}
\item \label{gamma5} $v_{i+1},\ldots,v_{p}$ are dominated by $S$,
\item \label{gamma6} $S$ contains at most $d$ $z$-vertices for each $z\in V(T_x)$, 
\item for each $z\in V(T_x)$, $C_z=c(S\cap V_G(z))$, and
\item \label{gamma8} $S$ dominates all the vertices of $V_x(G)\setminus \{v_1,\ldots,v_p\}$,
\end{enumerate}
or  we may set $\gamma(x,i)=+\infty$ if we detect that there is no $S\subseteq V_x(G)$ satisfying \eqref{gamma5}--\eqref{gamma8} such that $S$ is extendable. Similarly to $\beta(x,i)$, we do it if we detect that every set $S$ satisfying \eqref{gamma5}--\eqref{gamma8} contradicts the promise of \textsc{Dominating Set Extension}. In particular, it happens when we gain by including an $x$-vertex of $G$ into a partial solution, that is, we can increase the number of $x$-vertices for $x\in V(T)$ in the partial solution without decreasing its size. 

It also is assumed that $\beta(x,i)=+\infty$ and $\gamma(x,i)=+\infty$ if there is no $S$ that satisfies the conditions \eqref{beta1}--\eqref{beta4} or \eqref{gamma5}--\eqref{gamma8} respectively. 

Now we explain how we compute the values $\beta(x,i)$ and  $\gamma(x,i)$. First, we do it for leaves. 

\paragraph{\bf Computing $\beta(x,i)$ and  $\gamma(x,i)$ for leaves.}When $x$ is a loaded leaf of $T$, we set:
\[
  \beta(x,i)=
  \begin{cases}
    |C_x|&\mbox{ if } c(v_i)\in C_x,\\
    +\infty&\mbox{ otherwise}.
  \end{cases}
\]

\noindent When $x$ is an unloaded leaf, we set:
\[
  \gamma(x,i)=
  \begin{cases}
    0&\mbox{ if }i=p,\\
    +\infty&\mbox{ if }i<p.
  \end{cases}
\]

Now we compute the values $\beta(x,i)$ and  $\gamma(x,i)$  for non-leaves.
Suppose that $x\neq r'$ is a non-leaf  node of $T$. Let $z_1,\ldots,z_s$ be the children of $x$ in $T$ and let $e_j=xz_j$ for $j\in\{1,\ldots,s\}$. Assume that the functions $\beta$ and $\gamma$  are computed for the children of $x$ depending on whether they are loaded or not.

\paragraph{\bf Computing $\beta(x,i)$ for loaded $x$.} 

Consider $z_j$ for $j\in\{1,\ldots,s\}$. Let $W_j=\{u_1,\ldots,u_q\}$ be the set of $z_j$-vertices of $G$. We assume that $u_1\preceq_{e_j}\ldots\preceq_{e_j}u_q$.
As a first step we compute, for every $i'\in\{1,\ldots,p\}$, the value of the auxiliary function $\delta_j(x,i')$ defined as follows, depending whether $z_j$ is loaded or not.
Informally, $\delta_j(x,i')$ is the minimum size of a set of vertices $S_j$ (satisfying the constraints for considered partial solutions) of  $V_{xz_j}(G)$ that dominate the vertices of $V_{xz_j}(G)$ which are not dominated by $v_{i'}$. In particular, this means that if we add $v_{i'}$ to a partial solution, we dominate $V_{xz_j}(G)$.

\paragraph{Case~1.} The vertex $z_j$ is loaded.  For each $i'\in \{1,\ldots,p\}$ and $h\in\{1,\ldots,q\}$, let 
$$X_{i',h}=\{w\in V(G)\mid w\in V_G(e_j),~M_w\cap(M_{v_{i'}}\cup M_{u_h})=\emptyset\}. $$
Recall that $\alpha_{e_j}(X)$ is the minimum size of a set $S$ of $e_j$-vertices of $G$ that dominates $X$. 
We set
$$\delta_j(x,i')=\min_{1\leq h\leq q}\{\beta(z_j,h)+\alpha_{e_j}(X_{i',h})\}.$$

\paragraph{Case~2.} The vertex $z_j$ is unloaded.  For $i'\in\{1,\ldots,p\}$, let 
$$Y_{i'}=\{w\in V(G)\mid w\in V_G(e_j),~M_w\cap M_{v_{i'}}=\emptyset\}$$
and for $h\in\{0,\ldots,q\}$,
$$
Y_{i',h}=
\begin{cases}
Y_{i'}&\mbox{ if }h=0,\\
Y_{i'}\cup\{u_h\}&\mbox{ if }h>0.
\end{cases}
$$
We set 
$$\delta_j(x,i')=\min_{0\leq h\leq q}\{\gamma(z_j,h)+\alpha_{e_j}(Y_{i',h})\}.$$

\medskip
We have that $\delta_j(x,i')$ is defined in both cases. Note that if 
 $v_{i'}$ is included in a partial solution, then it can be used to dominate  vertices of $V_{xz_j}(G)$  for distinct $j\in\{1,\ldots,s\}$.
Therefore, we extend the definition of this function on subsets of $\{1,\ldots,s\}$.
For $i'\in\{1,\ldots,p\}$ and 
each non-empty $J\subseteq\{1,\ldots,s\}$, we set
\[
  \delta_{J}(x,i')=\sum_{j\in J}\delta_j(x,i')\quad \text{and} \quad \delta_{\emptyset}(x,i')=0.
\]

We consider all possible partitions $\mathcal{P}=\{J_1,\ldots, J_t\}$ of $\{1,\ldots,s\}$
for all $t
\in \{|C_x|, \dots, d\}$, where  
some 
sets may be empty.
We then consider all possible surjections $\varphi\colon \{1,\ldots,t\}\rightarrow C_x$. 
For every $x$ and
$i\in\{1,\ldots,p\}$,
we set
\begin{equation}\label{eq:beta}
\beta(x,i)=\min \left \{t+\sum_{h=1}^t\min \left \{\delta_{J_h}(x,i')\mid 1\leq i'\leq p, i'=i\text{ if }h=1, c(v_{i'})=\varphi(h) \right \}\mid  \mathcal{P},~\varphi \right \},
\end{equation} 
where the minimum is taken over all $\mathcal{P}$ and $\phi$;
 we assume to simplify notation that 
\[
  \min\{\delta_{J_h}(x,i')\mid 1\leq i'\leq p,  i'=i\text{ if }h=1, c(v_{i'})=\varphi(h)\}=+\infty
\] if 
$h\geq 2$ and 
there is no $i' \in \{1, \dots, p\}$ such that $c(v_{i'})=\varphi(h)$ or if $h=1$ and $\varphi(1)\neq c(v_i)$.
The intuition behind (\ref{eq:beta}) is the following. We are selecting $t$ $x$-vertices $v_{i_1},\ldots,v_{i_t}$ with $i=i_1$ (recall that $v_i$ should be in a solution) to include them in a partial solution. The partition $\{J_1,\ldots,J_t\}$ encodes the property that each $v_{i_h}$ is used to dominate some vertices of $V_{xz_j}(G)$  for $j\in J_h$. The function $\varphi$ encodes colors of the selected vertices.

\paragraph{\bf Computing $\gamma(x,i)$ for unloaded $x$.} 
Consider $z_j$ for each $j\in\{1,\ldots,s\}$. Let $W_j=\{u_1,\ldots,u_q\}$ be the set of $z_j$-vertices of $G$. 
We assume that $u_1\preceq_{e_j}\ldots\preceq_{e_j}u_q$. Recall that $W$ denotes the set of $x$-vertices.
Consider also the ordering $v_{i_1^j},\ldots,v_{i_p^j}$ of the vertices of $W$ such that $v_{i_1^j}\preceq_{e_j}\ldots\preceq_{e_j}v_{i_p^j}$.
We use the following crucial property to compute $\gamma(x,i)$. Since $x$ is unloaded, the vertices of $W$ should be dominated by vertices whose models do not contain $x$ and for each $j\in\{1,\ldots,s\}$, the vertices of $V_G(xz_j)$  should be dominated by some vertices from this set. Let $S_j\subseteq V_{xz_j}(G)$ is the set of vertices that dominate 
 the vertices of $V_G(xz_j)$ and some vertices of $W$. If $S_j$ is not a set of minimum size (satisfying the constraints of \textsc{Dominating Set Extension}) dominating the vertices 
 of $V_G(xz_j)$, then $S_j$ can be replaces by a set $S_j'$ of minimum size and a vertex of $W$ that is used to dominate the vertices of $W$. This way, we obtain a partial solution that contains an $x$-vertex contradicting the promise of \textsc{Dominating Set Extension}. This means that to dominate the  vertices of $V_G(xz_j)$, we should select a set $S_j\subseteq V_{xz_j}(G)$ of minimum size that contains a vertex whose model is at minimum possible distance from $x$.
To exploit this property, we define the auxiliary functions $\eta(j)$ and $\psi(j)$ as follows depending on whether $z_j$ is loaded or not. The value of $\eta(j)$ is the minimum size of a set of vertices $S_j$ that dominate  the  vertices of $V_G(xz_j)$. The value of $\psi(j)$ is the minimum $t\in\{1,\ldots,p+1\}$ such that there is a set $S_j$ of size $\eta(j)$ that additionally dominates the vertices $v_{i_t^j},\ldots,v_{i_p^j}$ (if $t=p+1$, then the vertices of $W$ are not dominated by $S_j$).

\paragraph{Case~1.} The vertex $z_j$ is loaded. For each $h\in \{1,\ldots,q\}$, let 
$$X_{h}=\{w\in V(G)\mid  w\in V_G(e_j),~M_w\cap M_{u_h}=\emptyset\},$$
and let 
\begin{equation}\label{eq:eta-a}
\eta(j)=\min_{1\leq h\leq q}( \beta(z_j,h)+\alpha_{e_j}(X_h)).
\end{equation}
For each $t\in\{1,\ldots,p+1\}$ and $h\in \{1,\ldots,q+1\}$, denote
$$X_{h,t}=
\begin{cases}
X_h\cup\{v_{i_t^j},\ldots,v_{i_p^j}\} &\mbox{ if }t\leq p,\\
X_h&\mbox{ if }t=p+1,  
\end{cases}
$$
and let 
\begin{equation}\label{eq:psi-a}
\psi(j)=\min\{t\mid 1\leq t\leq p+1,\eta(j)=\min_{1\leq h\leq q}( \beta(z_j,h)+\alpha_{e_j}(X_{h,t}))\}.
\end{equation}

\paragraph{Case~2.} The vertex $z_j$ is unloaded. For each $h\in \{0,\ldots,q\}$, let
$$
Y_h=
\begin{cases}
\{w\in V(G)\mid   w\in V_G(e_j)\}&\mbox{ if }h=0,\\
\{w\in V(G)\mid   w\in V_G(e_j)\}\cup\{u_1,\ldots,u_h\}&\mbox{ if }h\geq 1,
\end{cases}
$$
and let 
\begin{equation}\label{eq:eta-b}
\eta(j)=\min_{0\leq h\leq q} (\gamma(z_j,h)+\alpha_{e_j}(Y_h)).
\end{equation}
For each $t\in\{1,\ldots,p+1\}$ and $h\in \{1,\ldots,q+1\}$, denote
$$Y_{h,t}=
\begin{cases}
Y_h\cup\{v_{i_t^j},\ldots,v_{i_p^j}\} &\mbox{ if }t\leq p,\\
Y_h&\mbox{ if }t=p+1,  
\end{cases}
$$
and let 
\begin{equation}\label{eq:psi-b}
\psi(j)=\min\{t\mid 1\leq t\leq p+1,\eta(j)=\min_{0\leq h\leq q}( \gamma(z_j,h)+\alpha_{e_j}(Y_{h,t}))\}.
\end{equation}

\medskip
Now for each $i\in\{0,\ldots,p\}$, we set
\begin{equation}\label{eq:gamma}
\gamma(x,i)=
\begin{cases}
\sum_{j=1}^s\eta(j)&\mbox{ if }\text{for }h\in\{i+1,\ldots,p\},~\text{there is }j\in\{1,\ldots,s\}\text{ s.t. }v_{i_{\psi(j)}^j}\preceq_{e_j} v_h\\
+\infty&\mbox{ otherwise}.  
\end{cases}
\end{equation}

\medskip
We compute $\beta$ and $\gamma$ for all nodes of $T$ except the artificial root $r'$. The algorithm is based on the following properties of these values.

\begin{claimn}\label{claimB}
If $x$ is a loaded node and $i\in \{1,\ldots,p\}$, then the following is fulfilled:
  \begin{itemize}
  \item if $\beta(x,i)<+\infty$, then there is a set $S\subseteq V_x(G)$ satisfying the conditions \eqref{beta1}--\eqref{beta4} of size at most $\beta(x,i)$; and
\item if there is a  set $S\subseteq V_x(G)$ of minimum size satisfying the conditions \eqref{beta1}--\eqref{beta4} that is extendable, then $|S|=\beta(x,i)$.
  \end{itemize}
  Similarly,  if $x$ is unloaded, then for $x$ and $i\in \{0,\ldots,p\}$, the following is fulfilled:
  \begin{itemize}
  \item if $\gamma(x,i)<+\infty$, then 
 there is a set $S\subseteq V_x(G)$ satisfying the conditions \eqref{gamma5}--\eqref{gamma8} of size at most $\gamma(x,i)$; and
\item if there is a  set $S\subseteq V_x(G)$ of minimum size 
satisfying the conditions \eqref{gamma5}--\eqref{gamma8} that is extendable, then  $|S|=|\gamma(x,i)$.
  \end{itemize}
\end{claimn}

\begin{proof}
It is straightforward to verify these properties for the leaves of $T$ by the definition of $\beta$ and $\gamma$. 

We use standard approach for proving correctness of dynamic programming algorithms. We assume inductively that the properties of the  values of $\beta$ and $\gamma$ are fulfilled  for the children of $x$ using as the  base of the induction the fact that we already verified the properties of  $\beta$ and $\gamma$ for the leaves of $T$.

As in the description of the algorithm, we assume that $W=\{v_1,\ldots,v_p\}$ is the set of $x$-vertices and assume that $v_1\preceq_e\ldots\preceq_e v_p$ where $e=yx$ and $y$ is the parent of $x$. In the same way, $z_1,\ldots,z_s$ are the children of $x$. 

First, we prove the claim for $\beta$. Let $x$ be a loaded node and let $i\in\{1,\ldots,p\}$. 

%
%
%

Let $\beta(x,i)<+\infty$. We show that there is a set $S\subseteq V_x(G)$ of size at most $\beta(x,i)$  such that 
\begin{enumerate}[(i)]
\item $v_i\in S$, \label{it:rc1}
\item $S$ contains at most $d$ $z$-vertices for each $z\in V(T_x)$,
\item for each $z\in V(T_x)$, $C_z=c(S\cap V_G(z))$, and \label{it:rc3}
\item $S$ dominates all the vertices of $V_x(G)$. \label{it:rc4}
\end{enumerate}

 Consider a partition  
$\mathcal{P}=\{J_1,\ldots, J_t\}$ of $\{1,\ldots,s\}$ and a mapping $\varphi\colon \{1,\ldots,t\}\rightarrow C_x$ for which the minimum in the right part of~(\ref{eq:beta}) is achieved. Further, for each $j\in\{1,\ldots,s\}$, let $i_j$ be a value of $i'$ for which 
$\min\{\delta_{J_h}(x,i')\mid 1\leq i'\leq p, i'=i\text{ if }h=1, c(v_{i'})=\varphi(h)$ is achieved; note that $i_1=i$. 

We define $S_x=\{v_{i_1},\ldots,v_{i_t}\}$; observe that some vertices could be repeated and in this case we remove the duplicates. 
We have that  $|S_x|\leq t\leq d$ and $v_i\in S_x$.  Since $\varphi$ is a surjection,  
$C_x=c(S_x\cap V_G(x))$. We obtain that \eqref{it:rc1}--\eqref{it:rc3} are fulfilled for $z=x$.
Clearly, all $x$-vertices of $G$ are dominated by $S_x$, so \eqref{it:rc4} also holds.

Consider $z_j$ for $j\in\{1,\ldots,s\}$. As in the description of the algorithm, we assume that  $W_j=\{u_1,\ldots,u_q\}$ is the set of $z_j$-vertices of $G$. We also assume that $u_1\preceq_{e_j}\ldots\preceq_{e_j}u_q$. Let $j\in J_{i'}$. We consider two cases depending on whether $z_j$ is loaded or not.

\paragraph{Case~1.} The vertex $z_j$ is loaded. Let $h\in\{1,\ldots,q\}$ be such that $\beta(z_j,h)+\alpha_{e_j}(X_{i',h})$ has the minimum value. 
By the inductive assumption,
 there is $S_{z_j}\subseteq V_{z_j}(G)$ of size at most $\beta(z_j,h)$  such that 
\begin{enumerate}[(i')]
\item $u_h\in S_j$,
\item $S_{z_j}$ contains at most $d$ $z$-vertices for each $z\in V(T_{z_j})$,
\item for each $z\in V(T_{z_j})$, $C_z=c(S_{z_j}\cap V_G(z))$, and
\item $S_{z_j}$ dominates all the vertices of $V_{z_j}(G)$.
\end{enumerate}

Consider $X_{i',h}=\{w\in V(G)\mid w\in V_G(e_j),~M_w\cap(M_{v_{i'}}\cup M_{u_h})=\emptyset\}$. Notice that if $w$ is an $e_j$-vertex and $w\notin X_{i',h}$, then $w$ is dominated either by $v_{i'}\in S_x$ or $u_h\in S_{z_j}$, i.e., $w$ is dominated by $S_x\cup S_{z_j}$.  By the definition of $\alpha_{e_j}(X_{i',h})$, there is a set $S_{xz_j}$ of $e_j$-vertices of size $\alpha_{e_j}(X_{i',h})$ that dominates $X_{i',h}$.

\paragraph{Case~2.} The vertex $z_j$ is unloaded. Let $h\in\{1,\ldots,q\}$ be such that $\gamma(z_j,h)+\alpha_{e_j}(Y_{i',h})$ has the minimum value. 

By the inductive assumption, there is $S_{z_j}\subseteq V_{z_j}(G)$ of size at most $\gamma(z_j,h)$  such that 
\begin{enumerate}[(i')]
  \setcounter{enumi}{4}
\item $u_{h+1},\ldots,u_{q}$ are dominated by $S_{z_j}$,
\item $S_{z_j}$ contains at most $d$ $z$-vertices for each $z\in V(T_{z_j})$, 
\item for each $z\in V(T_{z_j})$, $C_z=c(S\cap V_G(z))$, and
\item $S_{z_j}$ dominates all the vertices of $V_{z_j}(G)\setminus \{u_1,\ldots,u_q\}$.
\end{enumerate}

Recall that we defined
\begin{align*}
  Y_{i'} & = \{w\in V(G)\mid w\in V_G(e_j),~M_w\cap M_{v_{i'}}=\emptyset\},\ \text{and}\\
  Y_{i',h} & =
  \begin{cases}
    Y_{i'}&\mbox{ if }h=0,\\
    Y_{i'}\cup\{u_h\}&\mbox{ if }h>0.
  \end{cases}
\end{align*}
If $w$ is a $e_j$-vertex and $w\notin X_{i',h}$, then $w$ is dominated by $v_{i'}\in S_x$, i.e., $w$ is dominated by $S_x\cup S_{z_j}$. 
 By the definition of $\alpha_{e_j}(Y_{i',h})$, there is a set $S_{xz_j}$ of $e_j$-vertices of size $\alpha_{e_j}(Y_{i',h})$ that dominates $Y_{i',h}$. This set dominates the 
$e_j$-vertices that are not dominated by $S_x$ and the $z_j$-vertices $u_1,\ldots,u_h$ if $h\geq 1$ that are the only vertices of $V_{z_j}(G)$ that (possibly) are not dominated by $S_{z_j}$.

Now we let 
\[
  S=S_x\cup\left (\bigcup_{j=1}^s\left (S_{z_j}\cup S_{xz_j}\right )\right ).
  \]
We have that \eqref{beta1}--\eqref{beta4} are fulfilled for $S$. It remains to notice that $|S|\leq \beta(x,i)$ by the definition.

Assume now that $S\subseteq V_x(G)$ is a set of minimum size satisfying (i)--(iv) and $S$ is extendable. Since $S$ is a set of minimum size satisfying these conditions,
 as we already proved, $|S|\leq\beta(x,i)$. We prove that $|S|\geq \beta(x,i)$.

We consider the partition $(S_x,S_{z_1},\ldots,S_{z_s},S_{xz_1},\ldots,S_{xz_s})$ of $S$, where $S_{z_j}\subseteq V_{z_j}(G)$ and $S_{xz_j}$ are $e_j$-vertices for each $j\in\{1,\ldots,s\}$; some sets in the partition could be empty. 

Let $t=|S_x|$. For each $j\in \{1,\ldots,s\}$, select $v_{i(j)}\in S_x$ to be a maximum element of $S_x$ with respect to the relation $\preceq_{e_j}$. Observe that the selection 
is not necessarily unique. We consider a partition 
$\mathcal{P}=\{J_1,\ldots, J_t\}$ of 
$\{1,\ldots,s\}$ such that  
 $j,j'\in\{1,\ldots,s\}$ are in the same set of $\mathcal{P}$ if and only
if $i(j)=i(j')$. We define $\varphi\colon \{1,\ldots,t\}\rightarrow C_x$ as follows. If $J_h\neq\emptyset$, then $\varphi(h)=c(v_{i(j)})$ for $j\in J_h$. Then we extend $\varphi$ on $h\in\{1,\ldots,t\}$ with $J_t=\emptyset$ greedily to ensure that $\varphi$ is a surjection. Such a mapping always exists because $C_x=c(S_x\cap V_G(x))$.

Consider $z_j$ for $j\in\{1,\ldots,s\}$. It is assumed again that $W_j=\{u_1,\ldots,u_q\}$ is the set of $z_j$-vertices of $G$. We also assume that $u_1\preceq_{e_j}\ldots\preceq_{e_j}u_q$. 
Let $j\in J_{i'}$. We consider two cases depending on whether $z_j$ is loaded or not.

\paragraph{Case~1.} The vertex $z_j$ is loaded. Let $h\in\{1,\ldots,q\}$ be such that $u_h$ is a minimum element of $S_{z_j}$ with respect to $\preceq_{e_j}$.
By our inductive assumption, we have that $\beta(z_j,h)\leq |S_{z_j}|$, because $S_{z_j}$ is extendable. Consider 
\[
  X_{i(j),h}=\{w\in V(G)\mid w\in V_G(e_j),~M_w\cap(M_{v_{i(j)}}\cup M_{u_h})=\emptyset\}.
\]
By the definition of $i(j)$ and $h$, the vertices of $X_{i(j),h}$ are not dominated by $S_x\cup S_{z_j}$. Therefore, they are dominated by $S_{xz_j}$. By the definition of $\alpha_{e_j}$, we have that $\alpha_{e_j}(X_{i(j),h})\leq |S_{xz_j}|$.
Then 
\begin{align}
  \delta_j(x,i(j)) & =\min_{1\leq h'\leq q}\{\beta(z_j,h')+\alpha_{e_j}(X_{i',h'})\}\nonumber \\
                   & \leq\beta(z_j,h)+\alpha_{e_j}(X_{i(j),h})\nonumber\\
                   &\leq |S_{z_j}|+|S_{xz_j}|.\label{eq:delta-a}
\end{align}

\paragraph{Case~2.} The vertex $z_j$ is unloaded. Let $h\in\{0,\ldots,q\}$ be the minimum integer such that $u_{h+1},\ldots,u_{q}$ are dominated by $S_{z_j}$. Clearly, if $h>0$, then $u_{h}$ is not dominated by $S_{z_j}$. By the inductive assumption, we have that $\beta(z_j,h)\leq |S_{z_j}|$ as $S_{z_j}$ is extendable. 
Consider the set $Y_{i(j),h}$. By the definition of this set, we obtain that  the vertices of $X_{i(j),h}$ are not dominated by $S_x\cup S_{z_j}$. Therefore, they are dominated by $S_{xz_j}$. By the definition of $\alpha_{e_j}$, we have that $\alpha_{e_j}(Y_{i(j),h})\leq |S_{xz_j}|$.
Then 
\begin{align}
  \delta_j(x,i(j)) & = \min_{0\leq h'\leq q}\{\gamma(z_j,h')+\alpha_{e_j}(Y_{i',h'})\}\nonumber\\
                   & \leq\gamma(z_j,h)+\alpha_{e_j}(Y_{i(j),h}) \nonumber \\
                   &\leq |S_{z_j}|+|S_{xz_j}|.\label{eq:delta-b}
\end{align}

\medskip
Now we combine (\ref{eq:delta-a}) and (\ref{eq:delta-b}) and conclude that for each $h'\in\{1,\ldots, t\}$,
\begin{equation}\label{eq:delta}
\delta_{J_{h'}}(x,i')\leq \sum_{j\in J_{h'}}(|S_{z_j}|+|S_{xz_j}|),
\end{equation}
where $i'=i(j)$ for $j\in J_{h'}$.

By (\ref{eq:beta}) and (\ref{eq:delta}), we obtain that 
$$
\beta(x,i)\leq |S_x|+\sum_{j=1}^s(|S_{z_j}|+|S_{xz_j}|)=|S|.
$$

\medskip
Our next aim is to prove \autoref{claimB} for $\gamma$. Assume that $x$ is unloaded and let $i\in\{0,\ldots,p\}$. 

We suppose that $\gamma(x,i)<+\infty$ and prove that there is  a set $S\subseteq V_x(G)$ with $|S|\leq\gamma(x,i)$ such that 
\begin{enumerate}[(i)]
  \setcounter{enumi}{4}
\item[(v)] $v_{i+1},\ldots,v_{p}$ are dominated by $S$,
\item[(vi)] $S$ contains at most $d$ $z$-vertices for each $z\in V(T_x)$, 
\item[(vii)] for each $z\in V(T_x)$, $C_z=c(S\cap V_G(z))$, and
\item[(viii)] $S$ dominates all the vertices of $V_x(G)\setminus \{v_1,\ldots,v_p\}$.
\end{enumerate}

Since $\gamma(x,i)<+\infty$, $\gamma(x,i)=\sum_{j=1}^s\eta(j)$. Recall that $\eta(j)$ is computed differently depending on whether $z_j$ is loaded or not. We consider 
each $j\in\{1,\ldots,s\}$ and analyze 
the corresponding cases. Recall that $v_{i_1^j},\ldots,v_{i_p^j}$ is the ordering of the vertices of $W$ with respect to $\preceq_{e_j}$.

\paragraph{Case~1.} The vertex $z_j$ is loaded. 

We select $t\in\{1,\ldots,p+1\} $ for which the minimum achieved in (\ref{eq:psi-a}), that is, $\psi(j)=t$. In particular, we have that
\begin{equation}\label{eq:eta-one}
\eta(j)=\min_{1\leq h\leq q} (\beta(z_j,h)+\alpha_{e_j}(X_{h,t})), 
\end{equation}
where 
\[
  X_{h,t}=
\begin{cases}
X_{h}\cup\{v_{i_t^j},\ldots,v_{i_p^j}\} &\mbox{ if }t\leq p,\\
X_{h}&\mbox{ if }t=p+1
\end{cases}
\]
and $X_{h}=\{w\in V(G)\mid  w\in V_G(e_j),~M_w\cap M_{u_h}=\emptyset\}$.
Let $h\in\{1,\ldots,q\}$ be such that the minimum in the right part of (\ref{eq:eta-one}) is achieved for this value. 

 By the inductive assumption, there is a set  $S_{z_j}\subseteq V_{z_j}(G)$ of size at most $\beta(z_j,h)$  such that 
\begin{enumerate}[(i')]
\item $u_h\in S_j$,
\item $S_{z_j}$ contains at most $d$ $z$-vertices for each $z\in V(T_{z_j})$,
\item for each $z\in V(T_{z_j})$, $C_z=c(S_{z_j}\cap V_G(z))$, and
\item $S_{z_j}$ dominates all the vertices of $V_{z_j}(G)$.
\end{enumerate}
By the definition of $\alpha_{e_j}(X_{h,t})$, there is a set of $e_j$-vertices $S_{xz_j}$ of size $\alpha_{e_j}(X_{h,t})$ that dominates $X_{h,t}$. 
Note that $|S_{z_j}|+|S_{xz_j}|\leq \eta(j)$. Observe also that $S_{z_j}\cup S_{xz_j}$ dominates all vertices of $V_{z_j}(G)$ and the $e_j$-vertices.

\paragraph{Case~2.} The vertex $z_j$ is unloaded. 

We select $t\in\{1,\ldots,p+1\} $ for which the minimum achieved in (\ref{eq:psi-b}), that is, $\psi(j)=t$. In particular, we have that it holds 
\begin{equation}\label{eq:eta-two}
\eta(j)=\min_{1\leq h\leq q} (\gamma(z_j,h)+\alpha_{e_j}(Y_{h,t})), 
\end{equation}
where 
$$Y_{h,t}=
\begin{cases}
Y_h\cup\{v_{i_t^j},\ldots,v_{i_p^j}\} &\mbox{ if }t\leq p,\\
Y_h&\mbox{ if }t=p+1,  
\end{cases}
$$
and 
$$
Y_h=
\begin{cases}
\{w\in V(G)\mid   w\in V_G(e_j)\}&\mbox{ if }h=0,\\
\{w\in V(G)\mid   w\in V_G(e_j)\}\cup\{u_1,\ldots,u_h\}&\mbox{ if }h\geq 1.
\end{cases}
$$
Let $h\in\{1,\ldots,q\}$ be such that the minimum in the right part of (\ref{eq:eta-two}) is achieved for this value. 
By the inductive assumption, there is a set  $S_{z_j}\subseteq V_{z_j}(G)$ of size at most $\gamma(z_j,h)$  such that 
\begin{enumerate}[(i')]
  \setcounter{enumi}{4}
\item $u_{h+1},\ldots,u_{q}$ are dominated by $S_{z_j}$,
\item $S_{z_j}$ contains at most $d$ $z$-vertices for each $z\in V(T_{z_j})$, 
\item for each $z\in V(T_{z_j})$, $C_z=c(S_{z_j}\cap V_G(z))$, and
\item $S_{z_j}$ dominates all the vertices of $V_x(G)\setminus \{u_1,\ldots,u_q\}$.
\end{enumerate}
By the definition of $\alpha_{e_j}(Y_{h,t})$, there is a set of $e_j$ vertices $S_{xz_j}$ of size $\alpha_{e_j}(Y_{h,t})$ that dominates $Y_{h,t}$. 
We have that  $|S_{z_j}|+|S_{xz_j}|\leq \eta(j)$. Notice that  $S_{z_j}\cup S_{xz_j}$ dominates all vertices of $V_{z_j}(G)$ and the $e_j$-vertices.

\medskip
Now we define 
$$S=\cup_{j=1}^s(S_{z_j}\cup S_{xz_j}).$$

We have that 
$$|S|=\sum_{j=1}^s(|S_{z_j}|+|S_{xz_j}|)\leq \sum_{j=1}^s\eta(j)=\gamma(x,i).$$
By the definition of $S$, we have that \eqref{gamma6}--\eqref{gamma8} are fulfilled. To show~\eqref{gamma5}, recall that $\gamma(x,i)<+\infty$. Then 
for each $h\in\{i+1,\ldots,p\}$, there is $j\in\{1,\ldots,s\}$ such that $v_{i_{\psi(j)}^j}\preceq_{e_j} v_h$ and, therefore, 
$v_{h}$ is dominated by $S_{xz_j}$.

\medskip
Assume now that $S\subseteq V_x(G)$ is a set of minimum size satisfying \eqref{gamma5}--\eqref{gamma8} for $x$ and $i\in\{1,\ldots,p\}$ and $S$ is extendable. Because $S$ is a set of minimum size satisfying these conditions, $|S|\leq\gamma(x,i)$. We show that $|S|\geq \gamma(x,i)$.

We consider the partition $(S_{z_1},\ldots,S_{z_s},S_{xz_1},\ldots,S_{xz_s})$ of $S$, where $S_{z_j}\subseteq V_{z_j}(G)$ and $S_{xz_j}$ are $e_j$-vertices for $j\in\{1,\ldots,s\}$; some sets in the partition could be empty. Note that since $S$ is extendable, all the sets in the partition are extendable as well.


First, we show that $\eta(j)\leq |S_{z_j}|+|S_{xz_j}|$ for $j\in \{1,\ldots,s\}$.

If $x_j$ is loaded then $\eta(j)$ is computed by (\ref{eq:eta-a}). Let $h\in\{1,\ldots,q\}$ be such that  $u_h$ is a minimum with respect to $\preceq_{e_j}$ vertex in $S_{z_j}$. By induction, $|S_{z_j}|\geq \beta(z_j,h)$. Since the vertices of $G$ that are in $X_{h}=\{w\in V(G)\mid  w\in V_G(e_j),~M_w\cap M_{u_h}=\emptyset\}$ are not dominated by $S_{z_j}$, they are dominated by $S_{xz_{j}}$. By the definition of $\alpha_{e_j}(X_h)$, 
$|S_{xz_j}|\geq\alpha_{e_j}(X_h)$. Therefore,  
$\eta(j)\leq \beta(z_j,h)+\alpha_{e_j}(X_h)\leq |S_{z_j}|+|S_{xz_j}|$. 

If $z_j$ is unloaded, then $\eta(j)$ is computed by (\ref{eq:eta-b}).  We find minimum $h\in\{0,\ldots,q\}$  such that  $u_{h+1},\ldots,u_q$ are dominated by  $S_{z_j}$. 
By induction, $|S_{z_j}|\geq \gamma(z_j,h)
$
We consider 
$$Y_h=
\begin{cases}
\{w\in V(G)\mid   w\in V_G(e_j)\}&\mbox{ if }h=0,\\
\{w\in V(G)\mid   w\in V_G(e_j)\}\cup\{u_1,\ldots,u_h\}&\mbox{ if }h\geq 1,
\end{cases}$$ 
and observe that the vertices of $Y_h$ are not dominated by $S_{z_j}$. Hence, the vertices of $Y_h$ are dominated by $S_{xz_j}$. By the definition of $\alpha_{z_j}(Y_h)$, $|S_{xz_j}|\geq\alpha(Y_h)$. Hence,
$\eta(j)\leq \gamma(z_j,h)+\alpha_{e_j}(X_h)\leq |S_{z_j}|+|S_{xz_j}|$. 

Since $\eta(j)\leq |S_{z_j}|+|S_{xz_j}|$ for all $j\in\{1,\ldots,s\}$, we have that if $\gamma(x,i)=\sum_{j=1}^s\eta(j)$, then 
$$\gamma(x,i)=\sum_{j=1}^s\eta(j)\leq \sum_{j=1}^s(|S_{z_j}|+|S_{xz_j}|)=|S|.$$

Suppose now that $\gamma(x,i)\neq \sum_{j=1}^s\eta(j)$. By (\ref{eq:gamma}), we have that there is 
$h'\in\{i+1,\ldots,p\}$ such that $v_{h'}\prec_{e_j}v_{i_{\psi(j)}^j}$ for all $j\in\{1,\ldots,s\}$. The vertex $v_{h'}$ is dominated by $S$ by the condition (v). Assume that $v_{h'}$ is dominated by $S_{z_j}\cup S_{xz_j}$ for $j\in \{1,\ldots,s\}$. Let $v_{h'}=v_{i_t^t}$ according to the ordering of $x$-vertices with respect to $\preceq_{e_j}$. We again consider two cases.

\paragraph{Case~1.} The vertex $z_j$ is loaded. 

By (\ref{eq:psi-a}), 
$\eta(j)<\min_{1\leq h\leq q}( \beta(z_j,h)+\alpha_{e_j}(X_{h,t}))$, where $X_{h}=\{w\in V(G)\mid  w\in V_G(e_j),~M_w\cap M_{u_h}=\emptyset\}$ and 
$$X_{h,t}=
\begin{cases}
X_h\cup\{v_{i_t^j},\ldots,v_{i_p^j}\} &\mbox{ if }t\leq p,\\
X_h&\mbox{ if }t=p+1.
\end{cases}
$$

Let $h\in\{1,\ldots,q\}$ be such that  $u_h$ is a minimum with respect to $\preceq_{e_j}$ vertex in $S_{z_j}$. By induction, $|S_{z_j}|\geq \beta(z_j,h)$. 
Since the $e_j$-vertices of $G$ that are  in $X_{h,t}$ are not dominated by $S_{z_j}$, they are dominated by $S_{xz_{j}}$. By the definition of $\alpha_{e_j}(X_{h,t})$,  $|S_{xz_j}|\geq\alpha_{e_j}(X_{h,t})$. This means that 
$\eta(j)<|S_{z_j}|+|S_{xz_j}|$.

Let $h^*\in\{1,\ldots,q\}$ be such that the minimum in the right part of (\ref{eq:eta-one}) is achieved for this value, that is, 
$\eta(j)=\beta(z_j,h)+\alpha_{e_j}(X_{h^*})$. By the inductive assumption, there is  a set  $S_{z_j}'\subseteq V_{z_j}(G)$ of size at most $\beta(z_j,h^*)$  such that 
\begin{enumerate}[(i')]
\item $u_{h^*}\in S_j'$,
\item $S_{z_j}'$ contains at most $d$ $z$-vertices for each $z\in V(T_{z_j})$,
\item for each $z\in V(T_{z_j})$, $C_z=c(S_{z_j}\cap V_G(z))$, and
\item $S_{z_j}'$ dominates all the vertices of $V_{z_j}(G)$.
\end{enumerate}
By the definition of $\alpha_{e_j}(X_{h^*})$, there is a set of $e_j$ vertices $S_{xz_j}'$ of size $\alpha_{e_j}(X_{h^*})$ that dominates $X_{h^*}$.
Note that $|S_{z_j}|+|S_{xz_j}|\leq \eta(j)$. Observe also that $S_{z_j}'\cup S_{xz_j}'$ dominates all vertices of $V_{z_j}(G)$ and the $e_j$-vertices.
Let $S'=S_{z_j}'\cup S_{xz_j}'\cup\{v_1\}$. This set dominates all the vertices of $V_x(G)$. Note that $|S'|\leq |S_{z_j}\cup S_{xz_j}|$. 

\paragraph{Case~2.} The vertex $z_j$ is unloaded. 

By (\ref{eq:psi-b}), 
$\eta(j)<\min_{1\leq h\leq q}( \gamma(z_j,h)+\alpha_{e_j}(Y_{h,t}))$, where
$$Y_{h,t}=
\begin{cases}
Y_h\cup\{v_{i_t^j},\ldots,v_{i_p^j}\} &\mbox{ if }t\leq p,\\
Y_h&\mbox{ if }t=p+1,  
\end{cases}
$$
and 
$$
Y_h=
\begin{cases}
\{w\in V(G)\mid   w\in V_G(e_j)\}&\mbox{ if }h=0,\\
\{w\in V(G)\mid   w\in V_G(e_j)\}\cup\{u_1,\ldots,u_h\}&\mbox{ if }h\geq 1.
\end{cases}
$$

Let $h\in\{0,\ldots,q\}$ be the minimum index such that $u_{h+1},\ldots,u_{q}$ are dominated by  $S_{z_j}$. By induction, $|S_{z_j}|\geq \gamma(z_j,h)$. 
Since the vertices of $G$ that are  in $Y_{h,t}$ are not dominated by $S_{z_j}$, they are dominated by $S_{xz_{j}}$. By the definition of $\alpha_{e_j}(X_{h,t})$,  $|S_{xz_j}|\geq\alpha_{e_j}(X_{h,t})$. It means that 
$\eta(j)<|S_{z_j}|+|S_{xz_j}|$.

Let $h^*\in\{1,\ldots,q\}$ be such that the minimum in the right part of (\ref{eq:eta-two}) is achieved for this value, that is, 
$\eta(j)=\gamma(z_j,h^*)+\alpha_{e_j}(Y_{h^*})$. By the inductive assumption, there is  a set  $S_{z_j}'\subseteq V_{z_j}(G)$ of size at most $\gamma(z_j,h^*)$  such that 
\begin{enumerate}[(i')]
  \setcounter{enumi}{4}
\item $u_{h^*+1},\ldots,u_{q}$ are dominated by $S_{z_j}$,
\item $S_{z_j}$ contains at most $d$ $z$-vertices for each $z\in V(T_{z_j})$, 
\item for each $z\in V(T_{z_j})$, $C_z=c(S_{z_j}\cap V_G(z))$, and
\item $S_{z_j}$ dominates all the vertices of $V_x(G)\setminus \{u_1,\ldots,u_q\}$.
\end{enumerate}
By the definition of $\alpha_{e_j}(X_{h})$, there is a set of $e_j$ vertices $S_{xz_j}'$ of size $\alpha_{e_j}(Y_{h^*})$ that dominates $Y_{h^*}$.
Note that $|S_{z_j}|+|S_{xz_j}|\leq \eta(j)$. Observe also that $S_{z_j}'\cup S_{xz_j}'$ dominates all vertices of $V_{z_j}(G)$ and the $e_j$-vertices.
Let $S'=S_{z_j}'\cup S_{xz_j}'\cup\{v_1\}$. This set dominates all the vertices of $V_x(G)$. Note that $|S'|\leq |S_{z_j}\cup S_{xz_j}|$. 

\medskip
Now we use the set $S'$ obtained in both cases to obtain a contradiction to the extendability of~$S$.
The set $S$ is extendable, that is, there is a dominating set $D$  of $G$ such that 
\begin{enumerate}[(a)]   
\item  $D$ has  at most $d$ $x$-vertices for $x\in V(T)$,
\item for each $x\in V(T)$, $C_x\subseteq c(D\cap V_G(x))$,
\end{enumerate} 
that has the minimum size and contains $C$ and the conditions of the promise is fulfilled:
 the number of nodes $z\in V(T)$ such that $D$ contains an $z$-vertex is maximum and  for each $z\in V(T)$, $C_z= c(D\cap V_G(z))$.
Let $D'=(D\setminus (S_{z_i}\cup S_{xz_j}))\cup S'$.  It is straightforward to see that $D$ is a dominating set and $|D'|\leq |D|$.
We also have that (a) and (b) are fulfilled for $D'$, because of (ii$'$), (iii$'$), (vi$'$) and (vii$'$), but this contradicts the condition that  
the number of nodes $x\in V(T)$ such that $D$ contains an $x$-vertex is maximum because $D'$ contains the $x$-vertex $v_1$.
Hence, $\gamma(x,i)\neq \sum_{j=1}^s\eta(j)$ and this completes the proof of the claim that $|S|\geq\gamma(x,i)$. We thus proved \autoref{claimB}.%
\cqed
\end{proof}

Now we are ready to complete the description of our algorithm for \textsc{Dominating Set Extension}. The algorithm computes the table of values of $\beta(r,i)$ if $r$ is loaded and the table of values of $\gamma(r,i)$ if $r$ is unloaded. If $r$ is loaded, we find the minimum value $\beta(r,i^*)$ in the table for $i$. By \autoref{claimB},
if $\beta(r,i^*)<+\infty$, then $G$ has a dominating set $S$ of size at most $\beta(r,i^*)$ containing at most  $d$ $x$-vertices for $x\in V(T)$ such that
 for each $x\in V(T)$,  $C_x=c(S\cap V_G(x))$. Moreover, if the promise is true, then $\beta(r,i^*)<+\infty$ and the minimum size of $S$ is $\beta(r,i^*)$.  If $r$ is unloaded, then we consider $\gamma(r,0)$ in the table 
for $r$. By \autoref{claimB}, if $\gamma(r,0)<+\infty$, then $G$ has a dominating set $S$ of size at most $\gamma(r,0)$ containing at most  $d$ $x$-vertices for $x\in V(T)$ such that
 for each $x\in V(T)$,  $C_x=c(S\cap V_G(x))$, and  if the promise is true, then $\gamma(r,0)<+\infty$ and the minimum size of $S$ is $\gamma(r,0)$.
It remains to check whether $\beta(r,i^*)\leq k$ or $\gamma(r,0)\leq k$ respectively  and return the answer. 

\medskip
To evaluate the running time, observe that to compute $\beta(x,i)$, we consider  all possible partitions $\mathcal{P}=\{J_1,\ldots, J_t\}$ for $1\leq t\leq d$ of $\{0,1,\ldots,s\}$ into non-empty sets such that $0\in J_1$ where $s$ is the number of children of $x$. Since $s\leq \ell$, we have that the number of partitions is $2^{\Oh(\ell\log d)}$. 
Then for each partition  $\mathcal{P}=\{J_1,\ldots, J_t\}$, we consider all possible surjections $\varphi\colon \{1,\ldots,t\}\rightarrow C_x$. Since $t\leq d$ and each $|C_x|\leq d$, there are $2^{\Oh(d\log d)}$  choices of $\varphi$. Because each value of $\alpha$ and $\gamma$ can be computed in polynomial time,
it implies that the total running time of the algorithms is $2^{\Oh((\ell+d)\log d)}n^{\Oh(1)}$.
This finishes the proof of~\autoref{lem:promise} about \textsc{Dominating Set Extension}.%
\qed
\end{proof}

Now we are ready to prove the main theorem of the section.

\begin{theorem}\label{thm:ds-leafage}
\textsc{Dominating Set} can be solved in time $2^{\Oh(\ell^2)}\cdot n^{\Oh(1)}$ for connected chordal graphs with leafage at most $\ell$.
\end{theorem}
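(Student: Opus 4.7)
The plan is to compute a clique tree realising the leafage, pass to its compression of size $\Oh(\ell)$, and then reduce the problem by enumeration to polynomially many instances of \textsc{Dominating Set Extension}, each solvable by \autoref{lem:promise}. I would first apply the Habib--Stacho algorithm to obtain, in polynomial time, a clique tree $T$ of $G$ with at most $\ell$ leaves and a corresponding $T$-representation $\{M_v\}_{v\in V(G)}$. Suppressing the degree-$2$ nodes of $T$ yields a tree $T^*$ with $|V(T^*)|\leq 2\ell-1$, and $G$ is a $T^*$-graph whose representation uses $T$ as the subdivision of $T^*$. By \autoref{lem:upper-b} applied in this $T^*$-representation, any minimum dominating set $D$ of $G$ satisfies $|D\cap V_G(T^*)|\leq 3|V(T^*)|-2\leq 6\ell-5$.

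The key enumeration step is over ``profiles'' of $D\cap V_G(T^*)$: multisets $\mathcal{C}$ of at most $6\ell-5$ nonempty connected subsets of $V(T^*)$, intended to coincide with $\{M_v\cap V(T^*) : v\in D\cap V_G(T^*)\}$. Since each such subset is one of at most $2^{|V(T^*)|}=2^{\Oh(\ell)}$ possibilities, there are $2^{\Oh(\ell^2)}$ such profiles. For each profile $\mathcal{C}$, I would define a coloring $c$ on $V_G(T^*)$ (mapping each $v$ to a code of $M_v\cap V(T^*)$), derive $C_x=\{c(U) : U\in\mathcal{C},\ x\in U\}$ for every $x\in V(T^*)$, set $d=6\ell-5$, and apply Lemmas~\ref{lem:contraction} and~\ref{lem:contraction-uncolored} to contract the edges of $T^*$ lying inside some $U\in\mathcal{C}$ (making the guessed $U$-vertices singleton-type in the contracted tree), together with other irrelevant edges, producing a tree $\widehat T$ and a nice $\widehat T$-representation of a derived graph whose minimum dominating sets of the prescribed form are in bijection with those of $G$ matching $\mathcal{C}$. \autoref{lem:promise} then solves \textsc{Dominating Set Extension} on this reduced instance in time $2^{\Oh((d+\ell)\log d)}n^{\Oh(1)}=2^{\Oh(\ell\log\ell)}n^{\Oh(1)}$. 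Returning the minimum across all profiles gives a total running time of $2^{\Oh(\ell^2)}\cdot 2^{\Oh(\ell\log\ell)}\cdot n^{\Oh(1)}=2^{\Oh(\ell^2)}\cdot n^{\Oh(1)}$. Correctness follows because for the profile matching an optimal $D$ the promise of \textsc{Dominating Set Extension} is satisfied, so the Extension algorithm returns a set of size $|D|$; for other profiles, it can only return dominating sets of $G$ of size at least $|D|$.

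The main technical obstacle lies in ensuring niceness of the representation after the contractions dictated by a profile: vertices $v$ with $M_v\cap V(T^*)=W$ for $|W|\geq 2$ and $W\notin \mathcal{C}$ cannot be contracted through the profile alone, yet must still be dominated by the solution found by the Extension algorithm. This difficulty is addressed by exploiting the fact that $V_G(W)$ forms a clique (so dominating one $W$-vertex suffices to dominate all), by supplementing the profile with the auxiliary contractions of \autoref{lem:contraction-uncolored} on edges whose surrounding subdivision paths contain no $e$-vertex, and by carefully setting up the coloring $c$ so that the DP behind \autoref{lem:promise} correctly tracks which $W$-vertex types have been dominated across the tree.
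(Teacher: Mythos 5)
Your proposal follows essentially the same route as the paper: Habib--Stacho to get a clique tree with at most $\ell$ leaves, suppression of degree-$2$ nodes to get a host tree on $\Oh(\ell)$ nodes, \autoref{lem:upper-b} to bound the number of branching-node vertices in an optimum by $\Oh(\ell)$, enumeration of the $2^{\Oh(\ell^2)}$ possible ``type sets'' of these vertices, contraction via Lemmas~\ref{lem:contraction} and~\ref{lem:contraction-uncolored} to reach a nice representation, and one call to \autoref{lem:promise} per guess. The counting, the choice of $d=\Oh(\ell)$, and the overall correctness argument (the promise holds for the guess matching an optimum; wrong guesses can only yield feasible, hence no smaller, dominating sets) all match the paper.

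The one place where your sketch does not yet close is the obstacle you yourself flag, and your proposed fix does not work as stated. \autoref{lem:contraction-uncolored} requires that \emph{no} edge of $A'$ carries an $e$-vertex; $A'$ is determined by the guessed profile, not chosen by you, so ``supplementing the profile with contractions on edges whose subdivision paths contain no $e$-vertex'' leaves the $A'$-edges that do carry $e$-vertices uncontracted. Those edges still have some model containing both endpoints, so the resulting representation is not nice and \autoref{lem:promise} does not apply. The paper resolves this with an explicit discard rule (Rule~1): if some $xy\in E(T)$ has an $xy$-vertex, both endpoints lie in the model of some vertex whose colour is \emph{outside} $C$, and in no model of a vertex coloured inside $C$, then the current $C$ is rejected. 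Its safety is not the clique observation you invoke but a swap argument: any $xy$-vertex $w'$ used by $D$ to dominate such vertices satisfies $M_{w'}\subseteq M_v$ for the witnessing $v$ with $x,y\in M_v$, so replacing $w'$ by $v$ yields an equally small dominating set realising a strictly larger colour set --- which is exactly why the promise of \textsc{Dominating Set Extension} carries the secondary maximisation over the number of nodes receiving an $x$-vertex. Without this discard-and-reroute step (or an equivalent), the reduction to a nice instance, and hence the application of \autoref{lem:promise}, is not justified.
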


\begin{proof}
Let $(G,k)$ be an instance of \textsc{Dominating Set} where $G$ is a connected chordal graph. 

We use the algorithm of Habib and Stacho~\cite{HabibS09} to compute its leafage $\ell(G)$. If $\ell(G)>\ell$, we stop and return a no-answer. 
 Otherwise, we consider the clique tree $T'$ of $G$ constructed by the algorithm. If $|T|=1$, then $G$ is a complete graph and \textsc{Dominating Set} has a straightforward solution. Let $\|T\|\geq 1$, that is, $\ell(G)\geq 2$.
We construct the tree $T$ from $T'$ by \emph{dissolving} nodes of degree two, that is, for a node $x$ of degree two with the neighbors $y$ and $z$, we delete $x$ and make $y$ and $z$ adjacent. 
Observe that since $T$ is a tree with at most $\ell$ leaves that has no node of degree two, $|T|\leq 2\ell-2$. We have that $G$ is a $T$-graph. Note also that the algorithm of Habib and Stacho~\cite{HabibS09} gives us a $T$-representation $\mathcal{M}=\{M_v\}_{v\in V(G)}$ where $M_v\in V(T')$ for $v\in V(G)$. 

We consider the $2^{|T|}-1\leq 2^{2\ell-2}-1$ non-empty subsets of $V(T)$ and construct a coloring $c\colon V_G(T)\rightarrow \{1,\ldots,2^{|T|}\}$ such that for $u,v\in V_G(T)$, $c(u)=c(v)$ if and only if $u$ and $v$ are $Q$-vertices for the same $Q\subseteq V(T)$. 

By \autoref{lem:upper-b}, a minimum dominating set of $G$ contains at most $2|T|-2\leq 4\ell-6$ vertices of $V_G(T)$. Clearly, these vertices can have at most $4\ell-6$ distinct colors. We consider all sets $C\subseteq\{1,\ldots,2^{|T|}\}$ of distinct colors of size at most $4\ell-6$ and for each $C$, we aim to find a minimum dominating set of $G$  whose vertices in $V_G(T)$  are colored by the maximum number of distinct colors and are colored exactly by the colors of $C$. Since we consider all possible choices of $C$, it holds for some $C$. 

Toward this aim, we apply the following rule.

\paragraph{Rule~1.} If there is an $xy$-vertex $w$ of $G$ for $xy\in E(T)$ and
\begin{enumerate}[(i)]
\item $x,y\notin M_u$ for every $u\in V_G(T)$ with $c(u)\in C$; and
\item there is $v\in V_G(T)$ (with $c(v)\notin C$) such that $x,y\in M_v$,
\end{enumerate}
 then discard the current choice of $C$.

\medskip
To see that the rule is safe, observe that if $D$ is minimum dominating set of $G$ whose vertices in $V_G(T)$  are colored exactly by the colors of $C$, then  $w$ is dominated by some $xy$-vertex $w'$. We have that $v\notin D$, because $c(v)\notin C$. Then it is straightforward to see that $D'=(D\setminus \{w'\})\cup \{v\}$ is a minimum dominating set of $G$ whose vertices in $V_G(T)$  are colored  by $|C|+1$ colors. 

\medskip
Now we are looking for a dominating set $D$ of minimum size such that $c(D\cap V_G(T))=C$.

We use the following rule.

\paragraph{Rule~2.} If there is a $Q$-vertex $u$ of $G$ for non-empty $Q\subseteq V(T)$ such that
\begin{enumerate}[(i)]
\item $c(u)\notin C$; and
\item there is $c\in C$ such that for every $v\in V_G(T)$ with $c(v)=c$, $v$ dominates $u$,
\end{enumerate}
then delete $u$.

\medskip
To see that the rule is safe, observe that $u$ cannot be included in a dominating set $D$ of minimum size such that $c(D\cap V_G(T))=C$ and $u$ is dominated by any set  $D$  such that $c(D\cap V_G(T))=C$.

\medskip
Let 
\begin{align*}
  A =\{xy\in E(T)\mid & x,y\in M_u\text{ for some }u\in V_G(T)\text{ such that } c(u)\in C\}\\
  A'=\{xy\in E(T)\mid & x,y\in M_u\text{ for some }u\in V_G(T)\text{ such that } c(u)\notin C\text{ and }\\
 &x,y\notin M_v\text{ for }v\in V_G(T)\text{ such that } c(v)\in C\}.  
\end{align*}

Observe that because of Rule~1, there are no $e$-vertices for $e\in A'$.
We contract the edges $e\in A\cup A'$. Denote by $\hat{T}$ the tree obtained from $T$ tree and let $\hat{T}'$ be the tree obtained from $T'$ by contracting the paths that correspond to the contracted edge. We also construct the graph $\hat{G}$ that is obtained from $G$ by contracting these edges of $T$ and we also construct its $\hat{T}$-representation $\hat{\mathcal{M}}=\{\hat{M}_v\}_{v\in V(\hat{G})}$ where $\hat{M}_v\in V(\hat{T}')$ for $v\in V(\hat{G})$. 
We set $\hat{c}=c|_{V(\hat{G})}$ and for every $x\in V(\hat{T})$ define  
\[
  C_x=\{c,\ \exists u\in  V_{\hat{G}}(x)\text{ s.t. }\hat{c}(u)=c\}.
\]
Observe that $\hat{\mathcal{M}}$ is a nice $\hat{T}$-representation of $\hat{G}$. Indeed, for every $xy\in E(T)$ such that $x,y\in M_u$ for $u\in V_G(T)$ we have that $xy\in A$ if $c(u)\in C$ and $xy\in A'$ if $c(u)\notin C$ because of Rule~2, and all such edges $xy$ are contracted.

Combining Lemmas~\ref{lem:contraction} and \ref{lem:contraction-uncolored} we obtain that $D$ is a dominating set of minimum size with 
 $C=c(D\cap V_G(T))$ if and only if $D$ is a dominating set of $G'$ of minimum size such that $C=\hat{c}(D\cap V_{\hat{G}}(\hat{T}))$. Note that the condition 
 $C=\hat{c}(D\cap V_{\hat{G}}(\hat{T}))$ is equivalent to the condition that for every $x\in V(\hat{T})$, $C_x=\hat{c}(D\cap V_{\hat{G}}(x))$, because the $\hat{T}$-representation of $\hat{G}$ is nice and $C_x\cap C_y=\emptyset$ for distinct $x,y\in V(\hat{T})$.
 
 We set $d=|T|+\ell-1\leq 3\ell-3$ and apply the next rule.
 \paragraph{Rule~3.} If there is $x\in V(\hat{T})$ with $|C_x|>d$, then discard the current choice of $C$.

\medskip
To see that the rule is safe, assume that the input graph $G$ has a minimum dominating set $D$ whose vertices in $V_G(T)$  are colored exactly by the colors of $C$. 
 By \autoref{lem:upper-a}, we have that if a set of nodes $X$ of $T$ is contracted into a single vertex $x$ of $\hat{T}$, then $D$ has at most $|X|+\ell-1$ vertices whose models contain a vertex of $X$ and, therefore, the number of vertices colored by the colors  of $C_x$ in $D$ is at most $d$.
 
\medskip
We arbitrarily select a node $r$ to be the root of $\hat{T}$ and $\hat{T}'$ respectively. 
Then we apply \autoref{lem:promise} to the instance $(\hat{T},k,d,\hat{c},\{C_x\}_{x\in V(\hat{T})})$ of  \textsc{Dominating Set Extension}.  
 
 Recall that  \textsc{Dominating Set Extension} is a promise problem. If the algorithm from \autoref{lem:promise} returns a yes-answer, it means that there is a dominating set $D$ of $\hat{G}$ of size at most $k$ such that for each $x\in V(\hat{T})$, $C_x=c(D\cap V_{\hat{G}}(x))$. This means that the input graph $G$ has a dominating set of size at most $k$. Still, if the promise is false, the algorithm can return an incorrect no-answer. 
 Recall that the promise of \textsc{Dominating Set Extension}  is the following: for every dominating set $D$ of $\hat{G}$ of minimum size with the properties that  
\begin{enumerate}[(a)]   
\item  $D$ has  at most $d$ $x$-vertices for $x\in V(\hat{T})$,
\item for each $x\in V(\hat{T})$, $C_x\subseteq c(D\cap V_G(\hat{x}))$,
\end{enumerate}
it holds that the number of nodes $x\in V(\hat{T})$ such that $D$ contains an $x$-vertex is maximum and  for each $x\in V(\hat{T})$, $C_x= c(D\cap V_G(\hat{x}))$.
By Lemmas~\ref{lem:contraction} and \ref{lem:contraction-uncolored}, we have that if $C$ is chosen in such a way that  $G$ has a minimum dominating set $D$ 
that has the maximum number of vertices of $V_G(T)$ and whose vertices in $V_G(T)$  are colored exactly by the colors of $C$, then this promise holds for the corresponding instance of \textsc{Dominating Set Extension} constructed for this choice of $C$. Therefore, if $(G,k)$ is a yes-instance of \textsc{Dominating Set}, then for some choice of $C$, we obtain a yes-answer. 

\medskip 
To evaluate the running time of the algorithm, observe that $T$, $T'$ and the representation  $\mathcal{M}$ are constructed in polynomial time by the algorithm of 
Habib and Stacho~\cite{HabibS09}. The coloring function $c\colon V_G(T)\rightarrow \{1,\ldots,2^{|T|}\}$ can be constructed in time $2^{\Oh(\ell)}\cdot n^{\Oh(1)}$.
Then we construct $2^{\Oh(\ell^2)}$ sets $C\subseteq\{1,\ldots,2^{|T|}\}$ of size at most $6\ell-8$ and it can be done in time $2^{\Oh(\ell^2)}$.
For each $C$, the Rules~1 and 2 can be applied in polynomial time. Similarly, the construction of the instance $(\hat{T},k,d,\hat{c},\{C_x\}_{x\in V(\hat{T})})$ of  \textsc{Dominating Set Extension} for a given $C$ can be done in polynomial time. Clearly, Rule~3 can be applied in polynomial time. Then we solve 
the constructed instance of \textsc{Dominating Set Extension}  in time $2^{\Oh(\ell\log\ell)}\cdot n^{\Oh(1)}$. Hence, the total running time of the algorithm is $2^{\Oh(\ell^2)}\cdot n^{\Oh(1)}$.%
\qed
\end{proof}

The theorem immediately gives the following corollary for $T$-graphs.

\begin{corollary}\label{ds:trees}
\textsc{Dominating Set} can be solved in time $2^{\Oh(|T|^2)}\cdot n^{\Oh(1)}$ for $T$-graphs if $T$ is a tree.
\end{corollary}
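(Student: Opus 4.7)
The plan is to deduce the corollary directly from \autoref{thm:ds-leafage} by arguing that every $T$-graph with $T$ a tree is a chordal graph whose leafage is at most $|V(T)|$. Once this combinatorial fact is established, the $2^{\Oh(|V(T)|^2)}\cdot n^{\Oh(1)}$ algorithm for \textsc{Dominating Set} on $T$-graphs follows by invoking \autoref{thm:ds-leafage} on each connected component of the input and summing the sizes of the returned dominating sets.

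To see that a $T$-graph $G$ (with $T$ a tree) is chordal, note that any subdivision $T'$ of $T$ is itself a tree, and every connected subgraph of a tree is a subtree. Hence $G$ is the intersection graph of a family of subtrees of the tree $T'$, which by Gavril's classical characterization \cite{Gavril74} is precisely a chordal graph.

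Next I would bound the leafage of $G$ by $|V(T)|$. Starting from the $T'$-representation $\{M_v\}_{v\in V(G)}$, I would build a clique tree of $G$ whose set of leaves injects into the set of leaves of $T'$. A natural way to do this is to traverse $T'$ merging consecutive nodes along which the set $\{v\in V(G) : u\in M_v\}$ is constant (they correspond to the same clique of $G$), then dissolving the resulting degree-two nodes whose associated cliques are not inclusion-maximal. A leaf of the tree obtained this way can only descend from a leaf of $T'$, since any internal node of $T'$ would give rise to at least two incident directions in the clique tree. Because subdivisions preserve the number of leaves, $T'$ has as many leaves as $T$, and thus at most $|V(T)|$ leaves. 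This yields a clique tree of $G$ with at most $|V(T)|$ leaves, so the leafage of each connected component of $G$ is at most $|V(T)|$. Substituting $\ell\le |V(T)|$ into the bound of \autoref{thm:ds-leafage} yields the claimed runtime.

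The main conceptual point to get right is the leafage bound: although a clique tree need not coincide with $T'$, the outer structure of $T'$ controls it, and the contraction/dissolution procedure above never creates new leaves. The remainder is essentially bookkeeping, since \autoref{thm:ds-leafage} itself already does not require the representation or clique tree to be supplied (the Habib--Stacho algorithm computes them from $G$ alone).
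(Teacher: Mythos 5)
Your proposal is correct and follows the same route the paper intends: the paper derives the corollary ``immediately'' from \autoref{thm:ds-leafage}, implicitly using exactly the facts you spell out, namely that a $T$-graph with $T$ a tree is chordal (an intersection graph of subtrees of the tree $T'$) and has leafage bounded by the number of leaves of $T$, hence by $|V(T)|$, with disconnected inputs handled componentwise. Your explicit construction of a clique tree with few leaves is a reasonable filling-in of the detail the paper leaves unstated.
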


\subsection{A polynomial kernel for Clique}\label{sec:clique}
It was observed in \cite{ChaplickZ17} that the \textsc{Clique} problem  is \classFPT for $H$-graphs when parameterized by the solution size $k$ and $\|H\|$ (even when no $H$-representation of $G$ is given). We show that \textsc{Clique} admits a polynomial kernel when a representation is given.

Let $G$ be an $H$-graph with an $H$-representation $\mathcal{M}=\{M_v\}_{v\in V(G)}$ where, for the corresponding subdivision $H'$ of $H$ and $v \in V(G)$, $M_v\subseteq V(H')$.
Recall that for $e\in E(H)$, $v\in V(G)$ is an $e$-vertex if $M_v$ contains only subdivision nodes of $H'$ from the path in $H'$ corresponding to $e$ in $H$. 
We claim that we can find a maximum clique in $G$ that contains some $e$-vertex in polynomial time.

\begin{lemma}\label{lem:clique-poly}
Let $G$ be an $H$-graph given together with its $H$-representation. Then a clique of maximum size that contains at least one $e$-vertex for some $e\in E(G)$ can be found in time $\Oh(n^{3/2}m)$.
\end{lemma}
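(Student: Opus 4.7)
The plan is to iterate over all $e$-vertices $v$ of $G$ (across all $e \in E(H)$) and, for each $v$, compute the maximum clique of $G$ containing $v$ in $\Oh(m\sqrt{n})$ time; summing over the $\Oh(n)$ such anchor vertices gives $\Oh(n^{3/2}m)$. Fix an $e$-vertex $v$; its model $M_v$ is a subpath of internal nodes of $P_e$, and every vertex adjacent to $v$ has a model meeting this short subpath. The central structural observation, already used in the proof of \autoref{mim2h}, is that for every model $M_u$, the intersection $M_u \cap V(P_e)$ has at most two connected components on the path $P_e$, which tightly constrains how clique members can interact along $P_e$.

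I would split the problem into two cases. In the \emph{Helly case}, all models in the clique share a common node $y$ on $P_e$, so the clique lies in $V_y = \{u \in V(G) : y \in M_u\}$, which is itself a clique; it therefore suffices to precompute $|V_y|$ for every internal subdivision node $y$ of every $P_e$ together with a flag indicating whether $V_y$ contains any $e$-vertex, in $\Oh(n^2)$ total. In the \emph{non-Helly case}, models pairwise intersect without sharing a common node on $P_e$ --- something possible because $H'$ need not be a tree, so models may extend off $P_e$ through its branching endpoints and meet along another path of $H'$ (as happens in circular-arc graphs, which are $C_2$-graphs). For each anchor $e$-vertex $v$, I would partition $N(v)$ by which branching endpoint(s) of $P_e$ the model contains, build an auxiliary bipartite graph whose edges record off-$P_e$ intersections between ``left-extending'' and ``right-extending'' models, and use K\"onig's theorem to relate the maximum non-Helly clique containing $v$ to a maximum matching / minimum vertex cover in this bipartite graph; Hopcroft--Karp then yields the matching in $\Oh(m\sqrt{n})$ time per anchor.

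The main obstacle will be proving the correctness of the bipartite-matching reduction in the non-Helly case --- in particular, establishing a faithful correspondence between non-Helly cliques containing $v$ and matchings (together with a covering) in the auxiliary graph, and bounding their sizes precisely. The two-components-per-path constraint keeps the encoding tractable, but a careful case analysis will be required to handle models containing zero, one, or both branching endpoints of $P_e$ and the various wrap-around intersection patterns off $P_e$. Taking the maximum over all anchors and combining with the Helly-case sweep yields the desired maximum clique containing some $e$-vertex within the claimed running time.
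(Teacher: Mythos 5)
There is a genuine gap. Your high-level toolkit is the right one (iterate over $e$-vertices as anchors, reduce each subproblem to bipartite matching via K\H{o}nig and Hopcroft--Karp, for an $\Oh(n^{3/2}m)$ total), but the reduction you sketch is not the correct one, and you yourself flag its correctness as the ``main obstacle'' --- which is precisely the part that is missing. The key idea you need is the following: for a maximum clique $K$ containing some $e$-vertex, pick a member $u$ whose model is \emph{inclusion-minimal} in $K$ (it is necessarily an $e$-vertex), and let $x',y'$ be the two end-nodes of the subpath $M_u\subseteq V(P_e)$. Since interior nodes of $P_e$ have degree $2$ in $H'$, any connected model that meets $M_u$ but contains neither $x'$ nor $y'$ is strictly contained in $M_u$, contradicting minimality; hence \emph{every} member of $K$ contains $x'$ or $y'$. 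The sets $V_{x'}$ and $V_{y'}\setminus V_{x'}$ are cliques, so $K$ lives in a cobipartite induced subgraph, and a maximum clique there is a maximum independent set in a bipartite complement --- one Hopcroft--Karp call per anchor. No Helly/non-Helly case split and no analysis of wrap-around patterns is needed; the ``non-Helly'' interactions are absorbed entirely into the cobipartite max-clique computation.

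By contrast, your construction anchors on the \emph{branching endpoints of $P_e$} and records only \emph{off-$P_e$} intersections, which fails already for interval graphs ($H=P_2$): there, most neighbours of an anchor contain no branching node at all, so your bipartite auxiliary graph sees none of the relevant non-adjacencies, and your Helly sweep alone does not obviously combine with it to recover all cliques once models of both types are mixed. A second, subtler issue is that you promise to compute, for \emph{each} anchor $v$, the maximum clique containing $v$; this is stronger than needed and not clearly achievable by your method when $M_v$ is not inclusion-minimal in the clique (the cobipartite structure is then lost). It suffices --- and is what the minimality trick delivers --- that the computation be exact for at least one anchor per clique, namely the inclusion-minimal one.
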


\begin{proof}
Let $\mathcal{M}=\{M_v\}_{v\in V(G)}$ be an $H$-representation of $G$.
For each $e$-vertex $u$ of $G$, we find a maximum clique $K$ such that $M_u$ is inclusion minimal for  $K$, that is, there is no $v\in K$ with $M_v\subset M_u$.
Let $e=xy$ for $x,y\in V(H)$ and denote by $P$ the $(x,y)$-path corresponding to $e$ in the subdivision $H'$ of $G$. Since $u$ is an $e$-vertex, $M_u\subseteq V(P)$, that is, the nodes of $M_u$ form a subpath of $P$. Denote by $x'$ and $y'$ the end-vertices of the subpath. Note that it can happen that $x'=y'$. Because $M_u$ is an inclusion minimal model of a vertex of $K$, for every $v\in K$, $x'\in M_v$ or $y'\in M_v$. Consider 
$U=\{v\in V(G)\mid x'\in M_v\text{ or }y'\in M_v\}$. We have that finding $K$ in $G$ is equivalent to finding a maximum clique containing $u$ in $G'=G[U]$.

Notice that $U$ can be partitioned into two cliques $K_1=\{v\in V(G)\mid x'\in M_v\}$ and $K_2=\{v\in V(G)\mid y'\in M_v\ \text{and}\ x'\notin M_v\}$. This means that $G'$ is a \emph{cobipartite} graph. A maximum clique in a cobipartite graph can be found in time $\Oh(\sqrt{n}m)$ by the algorithm of Hopcroft and Karp~\cite{HopcroftK73} as finding a maximum clique in $G'$ is equivalent to finding a maximum independent set in the complement of $G'$ that is a bipartite graph. Note that a maximum clique in $G'$ always contains $u$, because $u$ is adjacent to every other vertex of $G'$.

Since we consider all $e$-vertices to find a maximum clique containing some $e$-vertex for some $e\in E(H)$, the total running time is $\Oh(n^{3/2}m)$. \qed
\end{proof}

Now we a ready to construct our kernel.

\begin{theorem}\label{thm:kernel}
The \textsc{Clique} problem for $H$-graphs admits a kernel with at most $(k-1)|H|$ vertices if an $H$-representation of the input graph is given, which can be computed in $\Oh(n^{3/2}m)$-time.
\end{theorem}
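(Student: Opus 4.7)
The plan is to build the kernel by separating the vertices of $G$ into two types according to their models, handling each type with a different argument. Specifically, call a vertex $v\in V(G)$ an \emph{$e$-vertex} if $M_v$ lies entirely inside the path of $H'$ corresponding to some $e\in E(H)$ (so $M_v\cap V(H)=\emptyset$), and call it a \emph{branching} vertex otherwise (so $M_v\cap V(H)\neq\emptyset$). Every vertex of $G$ belongs to exactly one of these two categories.

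First I would dispose of the $e$-vertices using \autoref{lem:clique-poly}: run that algorithm to compute, in time $\Oh(n^{3/2}m)$, a maximum clique $K_0$ of $G$ that contains at least one $e$-vertex. If $|K_0|\geq k$, output a trivial yes-instance. Otherwise, every $k$-clique of $G$ consists exclusively of branching vertices, so we are free to discard all $e$-vertices from $G$ without affecting the existence of a $k$-clique.

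The key observation for the branching vertices is that for each $u\in V(H)$, the set $V_u=\{v\in V(G)\mid u\in M_v\}$ is a clique of $G$, since any two vertices in $V_u$ have models meeting at $u$. Hence if $|V_u|\geq k$ for some $u\in V(H)$, we again output a trivial yes-instance (returning any $k$-subset of $V_u$). Otherwise $|V_u|\leq k-1$ for every $u\in V(H)$, and since every remaining (branching) vertex $v$ satisfies $M_v\cap V(H)\neq\emptyset$, it belongs to $V_u$ for at least one $u\in V(H)$. Therefore the total number of branching vertices is bounded by
\[
\Bigl|\bigcup_{u\in V(H)} V_u\Bigr|\;\leq\;\sum_{u\in V(H)} |V_u|\;\leq\;(k-1)\,|V(H)|.
\]
Returning the instance $(G',k)$ where $G'$ is the subgraph induced by the branching vertices then gives an equivalent instance on at most $(k-1)|V(H)|$ vertices, and the whole procedure runs in polynomial time.

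There is no real obstacle beyond justifying correctness of the two reductions: the first uses the fact (from \autoref{lem:clique-poly}) that we can optimally handle cliques meeting the $e$-vertex part, and the second uses the elementary fact that shared branching nodes force cliques. Both correctness arguments are short, so the main thing to double-check in writing is simply that every $k$-clique surviving step one is indeed confined to the branching vertices so that the bound $(k-1)|V(H)|$ applies to an equivalent instance.
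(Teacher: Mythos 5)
Your proposal is correct and follows essentially the same route as the paper: first eliminate cliques containing an $e$-vertex via \autoref{lem:clique-poly}, then observe that each $V_u$ for $u\in V(H)$ is a clique covering all remaining vertices, which yields the $(k-1)|V(H)|$ bound. The only cosmetic difference is that you phrase the dichotomy via $M_v\cap V(H)=\emptyset$ versus $M_v\cap V(H)\neq\emptyset$, which (by connectivity of the models) coincides with the paper's $e$-vertex/branching-vertex split.
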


\begin{proof}
Let $G$ be an $H$-graph with an $H$-representation $\mathcal{M}=\{M_v\}_{v\in V(G)}$ where $M_v\subseteq V(H')$ for the corresponding subdivision $H'$ of $H$.

First, we use \autoref{lem:clique-poly} to check whether $G$ has a clique of size at least $k$ that contains at least one $e$-vertex for some $e\in E(G)$. If we find such a clique we return a yes-answer. Assume that this is not the case. Let $G'$ be the graph obtained from $G$ by the deletion of all $e$-vertices for $e\in E(H)$. We have that $G$ has a clique of size at least $k$ if and only if $G'$ has a clique of size at least $k$.

If there is $x \in V(H)$ such that $V_x=\{v\in V(G')\mid x\in M_v\}$ has size at least $k$, then it is a clique of size at least $k$ and we return a yes-answer. Otherwise, we return $G'$. Clearly, $|G'|\leq (k-1)|H|$ in this case.%
\qed
\end{proof}

\section*{Acknowledgement}
We thank Saket Saurabh for bringing $H$-graphs to our attention.


\bibliography{hgraphs} \bibliographystyle{plain}

\end{document}